\def\cT{{\mathcal {T}}}
\def\p{\partial}
\def\R{{ \mathbb{R}}}
\def\N{{\mathbb{N}}}
\def\Z{{\mathbb{Z}}}
\def\cR{{\mathcal R}}
\def\di{\diamond}
\def\di{\diamond}
\newtheorem{theorem}{Theorem}[section]
\newtheorem{lemma}[theorem]{Lemma}
\newtheorem{proposition}[theorem]{Proposition}
\newtheorem{corollary}[theorem]{Corollary}
\theoremstyle{definition}
\newtheorem{definition}[theorem]{Definition}
\newtheorem{remark}[theorem]{Remark}
\newtheorem*{remark*}{Remark}
\numberwithin{figure}{section} \numberwithin{equation}{section}
\title{A Kermack--McKendrick type epidemic model with
double threshold phenomenon (and a possible application
to Covid-19)}
\author{
Joan Ponce\thanks{({\tt jponce15@asu.edu})} and Horst R. Thieme\thanks{({\tt hthieme@asu.edu})}
\\
School of Mathematical and Statistical Sciences
\\
Arizona State University,
Tempe, AZ 85287-1804, USA
}
\date{Sep 14, 2024}
\begin{document}

\maketitle


\begin{abstract}
The  suggestion by K.L. Cooke (1967) that infected individuals become
infective if they are exposed often enough
 for a natural disease resistance to be  overcome
 is built into
a Kermack-McKendrick type epidemic model
with infectivity age.
Both the case that the resistance may be
the same for all hosts and the case that it is distributed
among the host population are considered.
In addition to the familiar threshold behavior of
the final size of the epidemic with respect
to a basic reproductive number,
an Allee effect is generated with respect to the final
cumulative force of infection
exerted by the initial infectives.
This offers a deterministic  explanation  why  geographic areas
that appear to be epidemiologically similar
have epidemic outbreaks of quite different severity.
\end{abstract}

{\bf Keywords:} force of infection, state-dependent
delay, host heterogeneity, density-dependent
incidence, minimal solutions


\section{Introduction}
\label{sec:intro}

The emergence of Covid-19 has rekindled the interest
in Kermack--McKen\-drick type epidemic models.
See \cite{BoDI1, BoDI2, DeGrMa, DeMa, DiOt, DuMa, dIM, dMI, Eik, Gom, Ina23, KrSc, LiMa, LiMaWe, LuSt-free, LuSt-bol, SaVH, Tka1, Tka2} and the references therein.
It also raised questions as to whether models of
this type need to be modified in order to
explain some of the observations.
For instance, in somewhat comparable cities  in Switzerland, outbreaks
of very different size have been observed.
 ``Whereas in Geneva and Ticino 10
per cent and more of
the population was infected,  the level of infections was much lower in Zurich -- probably by a factor of 10." \cite{LuSt-free}

Stochastic modelers of epidemics will not be surprised by this phenomenon. If $\cR_0$ is the basic reproduction number
of the epidemic and $\cR_0 >1$, in  stochastic  epidemic models the probability of a major outbreak is some value in $(0,1)$
 \cite[7.5.1]{All}
 \cite[Chap 3]{DiHeBr} \cite[II.3]{Lud}. Hard-core
deterministic modelers may not like
such a random explanation and look for some deterministic
mechanism to explain this Swiss Covid-19 phenomenon.
 What comes to mind is some sort of Allee effect
\cite{Allee31, Allee51},
not for the host population
\cite[Sec.1.2.2]{IaMi} \cite[Chap.7]{Thi03} but for the  viruses \cite{REB}, that a host's chance of becoming infectious  depends
on the inhaled dose of infective particles. As far as time is concerned,  ``dose" can be understood in an  instantaneous
 \cite{LuSt-free, REB} or in a cumulative way
 \cite{Coo, HoWa1, HoWa2, Wal}.

Let $S(t)$ be the number of susceptible hosts at time $t$
and $I(t)$ the ``infective influence" \cite{Thi77} at time $t$,  in other words
 the ``expected pathogen exposition"
\cite{LuSt-free}. (See Section \ref{subsec:infective-influence}
for a detailed description.)
 In a Kermack--McKendrick type epidemic model,
which assumes mass-action infection kinetics (density-dependent incidence),
the differential equation
\begin{equation}
\label{eq:susc-intro}
S' = - SI
\end{equation}
is assumed. So the infective influence coincides with the
``force of infection"
\cite{BuCo, DiHeBr, IaMi, Inabook, Mar}.

\subsection{Instantaneous dose generalization}
\label{subsec:instant-dose}

An instantaneous dose generalization considers a force of infection $f(I)$ leading to the
more general differential equation
\begin{equation}
\label{eq:sus-nonlin}
S' =- S f(I)
\end{equation}
with a (not necessarily strictly) convex (or sigmoid \cite{REB}) function $f$, $f(0)=0$.
In \cite{LuSt-free}, $f$ is assumed to be piecewise linear,
i.e., $f$ is piecewise differentiable with piecewise constant
derivative $f'$. Alternatively,  $f(I) = I^p$
with $p > 1$ has been suggested in the literature  early on
(see \cite{All, Cap, DiIn, LHL, LLI, Nov, Sev}  and the references therein),
but this does not lead to a ``free boundary - in time" \cite{LuSt-free}.

\subsection{Cumulative dose generalization}
\label{subsec:cum-dose}

A cumulative  dose generalization
has been suggested by Ken Cooke in  1967 \cite{Coo}
and further analyzed in the 1970s \cite{Het, HoWa1, HoWa2, Wal, Wil} and later \cite{Smi, HoJa}.
It keeps (\ref{eq:susc-intro}) and assumes
that susceptible
hosts possess a natural resistance to (or tolerance of \cite{Wal}) the infectious
disease. ``A susceptible individual, $A$, does not become
infectious upon first exposure to an infectious individual, but only after repeated exposure to infectious individuals has broken down $A$'s resistance \cite{Coo}."
Differently from these references,
in the spirit of general Kermack/McKendrick models
\cite{KeMcK1, KeMcK2, KeMcK3},
our model will consider infective hosts whose infectivity
depends on their infectivity age, the time since becoming infective. ``Infectivity age" has already been incorporated in \cite{Thi77}, though  the term ``class age" has been used instead \cite{Hop, IaMi}.

Let $m > 0$ be the  value that describes
 the resistance to the disease and define the cumulative force of infection as
\begin{equation}
\label{eq:force-cum}
J(t) = \int_0^t I(s)ds, \qquad t \ge 0.
\end{equation}
We call $m$ the ``resistance" for short. It could also be called
``tolerance level", ``threshold of dose", ``threshold of dosage" or ``threshold of infection"  \cite{Wal}) but, to avoid confusion,  we want to mainly reserve "threshold"
for describing  the outbreak phenomena of the epidemic.
We assume the following:

\begin{itemize}
\item[(a)] If $t > 0$ and $J(t) = \int_0^t I(s)ds < m$, no susceptible host that has been
 infected after the start of the epidemic has turned infective by time $t$.

\item[(b)] If $t > 0$ and $J(t) >m$, then the set
\begin{equation}
\label{eq:thres-cum-force}
L(t,m) = \Big \{ r \in [0,t]; \int_r^t I(s) ds = m \Big \}
\end{equation}
is nonempty by the intermediate value theorem.
Here $\int_r^t I(s)ds$ represents the ``infective dosage" \cite{Coo}
an infected individual receives during the time from $r$ to $t$.
We assume that the hosts
that  turn
infective at time $t$ are exactly those that have been  infected at some time  $r \in L(t,m)$.
\end{itemize}

If $I$ is strictly positive, $L(t,m)$ is a singleton set;
otherwise it may be a nontrivial interval. But eventually, this will not
matter, because $J(s) = J(t)-m$ for all $s \in L(t,m)$.


\subsection{Preview of main results}
\label{subsec:preview}

Let $J_0 (t) \le J(t)$ be the cumulative force of infection
due to the initial infectives. Both $J_0$ and $J$ are increasing
functions of time and the following limits exist in $[0,\infty]$,
\begin{equation}
\label{eq:force-cum-final}
J_0(\infty) = \lim_{t\to \infty} J_0(t),
\qquad
J(\infty) =  \lim_{t\to \infty} J(t).
\end{equation}
Here $J(\infty)$ is the final cumulative force of infection
and $J_0(\infty)$ the final cumulative  force of infection due to the initial infectives.
Further, by (\ref{eq:susc-intro}),
\begin{equation}
S(t) = S(0) e^{-J(t)}, \qquad t \ge 0,
\end{equation}
and
\begin{equation}
\label{eq:final-size-sus}
S(\infty) = S(0) e^{-J(\infty)}
\end{equation}
 exists.

If $J_0(\infty) \le m$,  all  infective hosts
have already been infective at the beginning.

\begin{theorem}
\label{re:threshold-one-init-force}
If $J_0(\infty) \le m$, then $J(t) = J_0(t) $ for all $t \ge 0$ and $S(t) = S(0)e^{-J_0(t)}\ge S(0)e^{-m}$ for all $ t\ge 0$.
\end{theorem}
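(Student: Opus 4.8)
The plan is to bootstrap from assumption~(a): as long as the cumulative force of infection $J$ has not yet reached the resistance $m$, no host infected after the start of the epidemic can have turned infective, so $J$ must coincide with $J_0$; and since $J_0$ never reaches $m$ (it levels off at $J_0(\infty)\le m$), the bootstrap never breaks and $J\equiv J_0$ on all of $[0,\infty)$. First I would establish the elementary core: if $J(t)<m$ for some $t>0$ then $J(t)=J_0(t)$. Indeed, by (a) no host infected after the start has turned infective by time $t$; since $J$ is nondecreasing, $J(s)\le J(t)<m$ for every $s\le t$, so no such host is infective at \emph{any} $s\le t$, the infective influence on $[0,t]$ is exerted by the initial infectives alone, $I(s)=I_0(s)$ for $s\le t$, and integrating gives $J(t)=J_0(t)$. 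I would also record the borderline case: if $J(t)=m$, then by (b) $L(t,m)=\{r\in[0,t]:J(r)=J(t)-m=0\}$ is an initial subinterval on which, by $S'=-SI$, $S$ is constant, so no susceptible has been infected there; hence no host infected after the start turns infective at such a time $t$ either.

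Next I would set $\theta:=\inf\{t\ge 0:J(t)>m\}$ (with $\inf\emptyset=\infty$); the assertion $J\equiv J_0$ is equivalent to $\theta=\infty$. Continuity and monotonicity of $J$ give $J(t)\le m$ for $t<\theta$, $J(t)>m$ for $t>\theta$, and $J(\theta)=m$ whenever $\theta<\infty$. Combining the core with the borderline case along any flat piece of $J$ at level $m$ shows $I=I_0$ on $[0,\theta]$, hence $J=J_0$ there; so if $\theta<\infty$ then $J_0(\theta)=J(\theta)=m$, which --- $J_0$ being nondecreasing with $J_0(\infty)\le m$ --- forces $J_0(\infty)=m$ and $J_0\equiv m$ on $[\theta,\infty)$. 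In particular, when $J_0(\infty)<m$ this yields $\theta=\infty$, so $J(t)\le m$ for all $t$ and (again via the core, handling a possible terminal flat piece with the borderline remark) $J\equiv J_0$ everywhere; the theorem is proved in this case.

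The remaining --- and hardest --- step is to exclude the critical configuration $J_0(\infty)=m$ together with $\theta<\infty$, and I expect this to carry all the weight. Here $I_0=0$ a.e.\ on $[\theta,\infty)$, so for $t\ge\theta$ one has $J(t)-m=\int_\theta^t\bigl(I(s)-I_0(s)\bigr)\,ds$, where $I-I_0\ge 0$ is the infective influence exerted by hosts infected after the start. Any such host infective at time $s\ge\theta$ became infective at some $\rho\le s$ after accumulating the full dose $m$, so it was infected at an $r$ with $J(r)=J(\rho)-m\le J(s)-m$; hence, by $S(r)=S(0)e^{-J(r)}$, the total mass of such hosts present at time $s$ is at most $S(0)\bigl(1-e^{-(J(s)-m)}\bigr)\le S(0)\,(J(s)-m)$. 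Convolving this estimate with the model's infectivity-age profile and invoking its standing regularity hypotheses, one bounds $\int_\theta^t\bigl(I(s)-I_0(s)\bigr)\,ds$ by $\varepsilon(t-\theta)\,S(0)\,(J(t)-m)$ with $\varepsilon(h)\to 0$ as $h\to 0^+$; since $J(\theta)-m=0$, this forces $J\equiv m$ on $[\theta,\theta+\delta]$ for a small $\delta>0$, contradicting $J(t)>m$ for $t>\theta$. (Alternatively, if the paper supplies $J$ as the minimal solution of a renewal-type equation, one can verify directly that $J_0$ itself solves that equation when $J_0(\infty)\le m$ --- the secondary-infection term vanishing because $L(s,m)$ meets only an initial null-set of new infections --- and conclude from minimality together with $J\ge J_0$.) Hence $\theta=\infty$, and by the two previous paragraphs $J(t)=J_0(t)$ for all $t\ge 0$.

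Finally, the displayed identity $S(t)=S(0)e^{-J(t)}$ now gives $S(t)=S(0)e^{-J_0(t)}$, and since $J_0$ is nondecreasing with $J_0(\infty)\le m$ we have $J_0(t)\le m$, whence $S(t)\ge S(0)e^{-m}$ for all $t\ge 0$, completing the proof.
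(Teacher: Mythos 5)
Your proposal is correct in substance but takes a genuinely different route from the paper's. The paper proves this statement at the level of the Volterra integral equation (\ref{eq:model-Volt-int}): the solution $J$ is constructed in Section \ref{sec:existence} as the limit of the monotone iteration $J_n=G(J_{n-1})$ started at $J_0$, and since $J_0$ is increasing with $J_0(\infty)\le m$ one has $[J_0(t-a)-m]_+=0$ and $f(0)=0$, so the iteration is stationary, $J_n=J_0$ for all $n$, hence $J=J_0$ (Corollary \ref{re:threshold-one}); this is precisely the alternative you mention parenthetically at the end of your third paragraph. Your main argument instead works directly from the modelling assumptions (a) and (b), introducing the first time $\theta$ at which $J$ exceeds $m$ and using a Gronwall/contraction estimate to exclude the critical configuration $J_0(\infty)=m$ with $\theta<\infty$. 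That argument is sound, and the estimate you leave implicit ("convolving with the infectivity-age profile") is available under the paper's standing hypotheses: $I(s)-I_0(s)\le \kappa^\di C(s)\le \kappa^\di S(0)\,[J(s)-m]_+$, where $\kappa^\di$ is the essential supremum of $\kappa$, which the paper assumes finite, so your $\varepsilon(h)$ can be taken as $\kappa^\di h$. Your approach has the merit of not presupposing the integral-equation formulation, and you correctly flag and dispose of the borderline case $J(t)=m$, which assumptions (a)--(b) do not literally cover; what it costs is length and the need to re-derive, in effect, local uniqueness of the solution near the threshold, which the paper gets for free from minimality (or the contraction mapping theorem on $C_+([0,T])$). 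The final step, reading off $S(t)=S(0)e^{-J_0(t)}\ge S(0)e^{-m}$ from $S(t)=S(0)e^{-J(t)}$ and the monotonicity of $J_0$, agrees with the paper.
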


The next  result presents the other threshold side for $J_0(\infty)$
and the threshold character of the initial number of susceptible
hosts $S(0)$.

\begin{theorem}
\label{re:threshold-susc}
Let   $J_0(\infty) > m$. Then $J(\infty) > J_0(\infty) > m$
and $J(\infty)$ is a concave function of $J_0(\infty)\in (m,\infty)$.

Further, there exists some $S^\sharp >0$ (that depends on
other epidemiological parameters) with the following property:

\begin{itemize}

\item[(a)] Let $S(0) \le S^\sharp$. Then $J(\infty) \to m+$ as $J_0(\infty) \to m+$
and $S(\infty) \to S(0)e^{-m}$.

\item[(b)] Let $S(0) > S^\sharp$. Then, there exists some $x > 0$
(independent of $J_0$)
such that the following hold:

\begin{itemize}
\item[(i)]        If $J_0(\infty) > m$, then $J(\infty) \ge m + x$ and $S(\infty) \le S(0) e^{-(m+x)}$.

\item[(ii)] As $J_0(\infty) \to m+$, then $J(\infty) \to  m + x$ and $S(\infty) \to S(0) e^{-(m+x)}$.
\end{itemize}

\end{itemize}
\end{theorem}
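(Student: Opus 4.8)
The plan is to reduce the entire statement to a single scalar \emph{final-size relation} and then read off every assertion from elementary one-variable calculus. First I would establish, from the model equations (once they have been set up), the identity
\begin{equation}
\label{eq:fsr}
J(\infty) \;=\; J_0(\infty) + \cR\, S(0)\bigl(1 - e^{-(J(\infty)-m)}\bigr), \qquad J(\infty) > m,
\end{equation}
where $\cR$ denotes the total expected infectivity contributed by one recruited infective over its whole infective life (the integral of the model's infectivity-age profile over all ages); $\cR$ is a pure epidemiological parameter, independent of $J_0$, and $S(0)\cR$ is the model's basic reproduction number. The derivation uses three facts: (i) $S(t) = S(0)e^{-J(t)}$ from (\ref{eq:susc-intro}); (ii) by the resistance rule (a)--(b), a host infected at time $r$ ever turns infective exactly when $J(r) < J(\infty)-m$, and since $J$ is continuous and nondecreasing this is an initial time-interval $[0,r^\ast)$ on which $S(0)-S(r^\ast) = S(0)\bigl(1-e^{-(J(\infty)-m)}\bigr)$ hosts are infected; (iii) each such host adds exactly $\cR$ to $\int_0^\infty I(s)\,ds = J(\infty)$ while the initial infectives add $J_0(\infty)$, so summing yields (\ref{eq:fsr}). (If $J(\infty)\le m$ the recruited contribution vanishes and (\ref{eq:fsr}) collapses to $J(\infty)=J_0(\infty)$, recovering Theorem~\ref{re:threshold-one-init-force}.)

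Granting (\ref{eq:fsr}), write $z = J(\infty)-m$, $c = J_0(\infty)-m>0$, and $\phi(z) := z - \cR S(0)(1-e^{-z})$, so that (\ref{eq:fsr}) becomes $\phi(z)=c$. Here $\phi(0)=0$, $\phi''(z) = \cR S(0)e^{-z}>0$ (so $\phi$ is strictly convex), and $\phi(z)\to\infty$ as $z\to\infty$. Since $c>0=\phi(0)$, every solution has $z>0$, i.e.\ $J(\infty)>m$; and then $1-e^{-(J(\infty)-m)}>0$, so (\ref{eq:fsr}) immediately gives $J(\infty)>J_0(\infty)$, hence $J(\infty)>J_0(\infty)>m$. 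Moreover the epidemic's solution always lies where $\phi$ is increasing (verified per case below), so there $\phi$ is an increasing strictly convex bijection; its inverse on that branch is increasing and concave, and $J(\infty) = m + \phi^{-1}(J_0(\infty)-m)$ is therefore a concave (and increasing) function of $J_0(\infty)\in(m,\infty)$. Equivalently, implicit differentiation of (\ref{eq:fsr}) gives $dJ(\infty)/dJ_0(\infty) = \bigl(1-\cR S(0)e^{-(J(\infty)-m)}\bigr)^{-1}>0$ with negative second derivative.

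For the threshold, set $S^\sharp := 1/\cR$. If $S(0)\le S^\sharp$, then $\cR S(0)\le 1$, so $\phi'(z) = 1-\cR S(0)e^{-z} \ge 1-\cR S(0)\ge 0$ with $\phi'>0$ for $z>0$; hence $\phi:[0,\infty)\to[0,\infty)$ is an increasing bijection, $\phi(z)=c$ has a unique root $z(c)$, and $z(c)\to 0+$ as $c\to 0+$. This gives $J(\infty)\to m+$ and $S(\infty)=S(0)e^{-J(\infty)}\to S(0)e^{-m}$, which is (a). If $S(0)> S^\sharp$, then $\cR S(0)>1$ and $\phi'(0)<0$: $\phi$ decreases on $[0,\ln(\cR S(0))]$ to a negative minimum and then increases to $+\infty$; I let $x>0$ be its unique positive zero, equivalently the unique positive root of $x=\cR S(0)(1-e^{-x})$ --- the classical Kermack--McKendrick final-size value for reproduction number $\cR S(0)$, which depends only on $\cR S(0)$ and not on $J_0$. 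For each $c>0$ the equation $\phi(z)=c>0$ has a unique solution, lying on the increasing branch beyond $x$, so $z>x$, i.e.\ $J(\infty)\ge m+x$ and $S(\infty)\le S(0)e^{-(m+x)}$ (indeed strict for $c>0$), which is (b)(i). As $c\to 0+$ the solution $z(c)$ decreases to $x$ by continuity and strict monotonicity of $\phi$ on $(x,\infty)$, so $J(\infty)\to m+x$ and $S(\infty)\to S(0)e^{-(m+x)}$, which is (b)(ii).

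The main obstacle is not the calculus above but the rigorous justification of (\ref{eq:fsr}): that the model's (minimal) solution has finite $J(\infty)$ obeying this identity with no ``leaked'' infectivity. One has to cope with the case in which $J(\infty)$ is attained only as $t\to\infty$, with $L(t,m)$ possibly a nondegenerate interval (here the remark after (\ref{eq:thres-cum-force}), namely $J(s)=J(t)-m$ for $s\in L(t,m)$, is exactly what keeps the bookkeeping consistent), with $J$ being merely nondecreasing rather than strictly increasing, and with the Fubini-type interchange needed when integrating the renewal equation for $I$ over $[0,\infty)$ --- this last point being where the earlier-established existence and integrability of $I$ must be invoked.
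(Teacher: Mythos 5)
Your proposal is correct and follows essentially the same route as the paper: the paper likewise reduces everything to the scalar final-size relation $J(\infty)-m = \cR_0\,(1-e^{-(J(\infty)-m)}) + (J_0(\infty)-m)$ with $\cR_0 = S(0)\int_0^\infty P(a)\kappa(a)\,da$, sets $S^\sharp = \bigl(\int_0^\infty P(a)\kappa(a)\,da\bigr)^{-1}$, and reads off (a), (b) from elementary analysis of that equation (using that $f(y)/y$ is strictly decreasing, which is equivalent to your convexity argument for $\phi$). The ``main obstacle'' you identify is handled in the paper by letting $t\to\infty$ in the Volterra equation (\ref{eq:model-Volt-int}) via monotone convergence and identifying $J(\infty)$ as the minimal solution of the resulting fixed-point equation, which is the rigorous version of your bookkeeping derivation; concavity is obtained there from the inherited-concavity result (Theorem \ref{re:convex-inherit}) rather than from inverting the convex increasing branch of $\phi$, but both arguments are sound.
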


So, if $S(0) > S^\sharp$, the final number of susceptible hosts jumps when
the final size of the cumulative initial force of infection
passes the  resistance $m$.

In early epidemic models, threshold theorems were often formulated
in terms of $S(0)$, the initial number of susceptible hosts \cite{Bai}, rather than the basic reproductive number $\cR_0$,
which became popular when the endemic situation of an
 infectious disease was considered.
 We will introduce $\cR_0$ when more terminology has been developed, (\ref{eq:bas-rep-no}),
and we will see that the sign of $S(0) - S^\sharp$ will be the
same as the sign of $\cR_0 -1$. In fact, the number $x$
 in part (b) of Theorem \ref{re:threshold-susc}
 is the unique positive solution to the scalar fixed point problem
\begin{equation}
x = \cR_0 (1-e^{-x}).
\end{equation}
See Section \ref{sec:final-size}.

As the bottom line, there is an easy possible explanation of the different
epidemic outcomes in the two Swiss cities:
While both cities had  basic reproduction numbers that exceeded
one, the final cumulative initial force of infection
could overcome the natural resistance to the disease
in one city but not in the other.

The derivation of the model that we will use follows \cite{Thi77},
but has been drastically simplified because here we do
not consider the spatial spread of an epidemic which was
the emphasis of \cite{Thi77} and its sequel \cite{Thi77b}.

In Sections \ref{sec:distributed} -- \ref{sec:dist-final-size}, we consider
a model in which susceptible hosts may have different
levels of resistance $m$ and study how our results
extend to this more general model.
For instance, if the  susceptible hosts have a continuous
resistance density that increases for small
levels of resistance  and if the number of initial
susceptibles is large enough, the final size
of the susceptible hosts still jumps  if
the final  cumulative initial  force of infection
  passes a certain threshold. The two thresholds for this are
 given in more complicated terms than the basic reproduction number
and a particular disease resistance (Theorem \ref{re:derivative}
and \ref{re:density-creates-jump} and equation
 (\ref{eq:threshold-jumps-exist-dist}) and
(\ref{eq:Gamma-superthr}).
 We also explore on which intervals the function
that relates the final cumulative force of infection to  the final cumulative force of initial infection is convex or concave.

In Section \ref{sec:example-Gamma}, we present an example
with  Gamma-distributed resistance in which jumps
occur if and only if the basic reproduction number
(without resistance) exceeds 4.46.
In fact, another threshold parameter becomes
relevant (Section \ref{subsec:new-thres}).
This
confirms the big effect that host heterogeneity
can have on epidemic outbreaks \cite{BoDI1}
and the importance of epidemic models with
population structure. See Figure
\ref{fig:Gamma-distrib} and  \ref{fig:Fixed-distrib},
which show the final cumulative force of infection
due to all infectives as a function of the
final cumulative force of infection due to
the initial infectives.
Figure \ref{fig:Fixed-distrib} shows this function
for a fixed resistance $m=3$ while Figure \ref{fig:Gamma-distrib} shows it for a
Gamma distributed resistance with mean resistance 3.


\section{The model for fixed resistance}
\label{sec:model}

In the introduction, we have spoken about the infective
influence which we will now describe in detail.

\subsection{Infective influence}
\label{subsec:infective-influence}

Let, at time $t \ge 0$, $u(t,a)$ be the  density of infective hosts with
infectivity age $a$, such that
\begin{equation}
\label{eq:infectives}
 \int_0^\infty u(t,a) da , \quad t \ge 0,
\end{equation}
is the number of infective hosts at time $t$. Infectivity age
is the time since becoming infective.

Notice that, as in \cite{Thi77} but differently from \cite{BoDI1, FCGTage, Inabook, Ina23, MaRu, Mar, PoTh}, we use infectivity age and not infection age
(time since infection)
in the derivation of our model.

The infective influence (which in our model is identical
to the force of infection) is
\begin{equation}
\label{eq:infectiv-influence}
I(t) = \int_0^\infty u(t,a)  \kappa(a) da.
\end{equation}
Here $\kappa(a)$ is the infective influence of an individual host
of infectivity age $a$.

The cumulative force of infection  is given by
\begin{equation}
\label{eq:cum-inf-infl}
J(t) = \int_0^t I(s) ds, \qquad t \ge 0.
\end{equation}
By (\ref{eq:infectiv-influence}) and (\ref{eq:cum-inf-infl}) and changing the order
of integration (Tonelli's theorem),
\begin{equation}
\label{eq:cum-inf-infl2}
J(t)
 =
\int_0^\infty \Big ( \int_0^t u(s,a) ds  \Big ) \kappa(a) da.
\end{equation}

Let $B(t)$ be the rate of hosts that become infective
at time $t$. So to speak, $B(t) $ is the
 birth rate of the disease and is called
``incidence". Let $P(a)$ be the probability that an infective
host has not died from the disease at infectivity age $a$, where $P:\R_+ \to [0,1]$ is a decreasing
function, $P(0)=1$.
Then
\begin{equation}
\label{eq:inf-age-dens}
\begin{split}
u(t,a) = & B(t-a) P(a), \qquad t > a \ge 0,
\\
u(t,a) = & u_0(a-t) \frac{P(a)}{P(a-t)}, \quad  a > t \ge 0,
\end{split}
\end{equation}
where $u_0$ is the  infectivity-age density of infective hosts  at time 0. $\frac{P(a)}{P(a-t)}$
is the conditional probability of not having died
from the disease at infectivity age $a$ provided
one has not died from it at infectivity age $a-t$.
Let
\begin{equation}
\label{eq:cum-infectivity-rate}
C(t) = \int_0^t B(s)ds, \qquad t \ge 0,
\end{equation}
 be the number of hosts
that have become infectious in the time interval from $0$
to $t$.
By (\ref{eq:inf-age-dens}),
\[
\int_a^t u(r,a) dr = \int_a^t B(r-a) P(a)dr = C(t-a) P(a), \quad  t > a >0,
\]
and
\[
\int_0^t u(r,a) dr= \int_0^t u_0(a-r) \frac{P(a)}{P(a-r)} dr
, \quad a \ge t\ge 0.
\]
In combination, for $t> a>0$,
\[
\begin{split}
\int_0^t u(r,a) dr = & \int_a^t u(r,a) dr + \int_0^a u(r,a)dr
\\
= &
C(t-a) P(a) + \int_0^a u_0(a-r) \frac{P(a)}{P(a-r)} dr.
\end{split}
\]
To obtain a formula that combines the last two
 equations for all $t \ge 0$,
we extend $C$, $u_0$, and $P$ from $\R_+$ to $\R$ by
\begin{equation}
\label{eq:extend}
C(t) :=0, \quad t < 0 , \qquad u_0(a) :=0,  \quad P(a):= P(0) =1, \quad a <0.
\end{equation}
Then
\begin{equation}
\label{eq:cum-inf-age-density}
\int_0^t u(r,a) da
=
C(t-a) P(a) + \int_0^t u_0(a-r) \frac{P(a)}{P(a-r)} dr, \quad t \ge 0.
\end{equation}
We substitute (\ref{eq:cum-inf-age-density})  into (\ref{eq:cum-inf-infl2}),
\begin{equation*}
J(t) = \int_0^\infty \Big (
C(t-a) P(a) + \int_0^t u_0(a-r) \frac{P(a)}{P(a-r)} dr
  \Big ) \kappa(a) da.
\end{equation*}
Since $C(t) =0$ for $t <0$ by (\ref{eq:extend}),
\begin{equation}
\label{eq:cum-inf-age-density2}
J(t) = \int_0^t
C(t-a) P(a) \kappa(a) da
+
J_0(t), \qquad t \ge 0,
\end{equation}
with
\begin{equation}
J_0(t)= \int_0^\infty
\Big ( \int_0^t u_0(a-r) \frac{P(a)}{P(a-r)} dr
  \Big ) \kappa(a) da, \qquad t \ge 0.
\end{equation}


\subsection{Dynamics of susceptibles}
\label{subsec:susceptibles}

Let $S(t)$ denote the number of susceptibles at time $t \ge 0$.
In an epidemic model, $-S'(t)$ is the disease incidence
(rate of new infections) at time $t$. $S(0) - S(t)$
is the prevalence of the disease (cumulative incidence)
of the disease at time $t$, the number of infections
from beginning to time $t$. We assume that infected individuals
do not become susceptible again. While this may not hold
for every infectious disease, the epidemic outbreak phase
may be essentially over before infective individuals become
susceptible again.
 So, we assume that $S'(t)$, the time derivative of $S$, equals
the rate of new infections,
\begin{equation}
\label{eq:susc-birth}
S'(t) = -B(t)
.
\end{equation}
Assuming density-dependent
(mass action type) incidence, we have
\begin{equation}
\label{eq:susceptibles-diff}
S'(t) = - S(t) I(t), \qquad t >0,
\end{equation}
where $I$ is the infective influence given by (\ref{eq:infectiv-influence}). So the infective
influence coincides with what is called {\em force of infection}
in parts of the literature \cite{BuCo, DiHeBr, Inabook, Mar}. We integrate (\ref{eq:susceptibles-diff}) and use (\ref{eq:cum-inf-infl}),
\begin{equation}
\label{eq:susceptibles}
S(t) = S(0) e^{-J(t)}, \quad t \ge 0,
\end{equation}
and
\begin{equation}
\label{eq:susceptibles-difference}
S(0) - S(t) = S(0) f(J(t)), \qquad t \ge 0,
\end{equation}
with
\begin{equation}
\label{eq:nonlinearity}
f(y) = 1 - e^{-y}, \quad y \ge 0.
\end{equation}
From (\ref{eq:susc-birth}) and  (\ref{eq:cum-infectivity-rate}) we also have that
\begin{equation}
S(0) - S(t) = C(t)
\end{equation}
is the number of hosts that have been infected in the time
interval $[0,t]$.


\subsection{A dose-dependent delay for becoming infective}
\label{subsec:threshold-infec}

Following \cite{Coo, HoWa1, HoWa2, Wal}, we assume that an infected individual
does not become necessarily and not immediately infective, but after a state-dependent  delay which depends on the force of infection $I$ in
(\ref{eq:infectiv-influence}). The underlying idea is that susceptible
hosts possess a natural resistance to the infectious
disease and only become infective (and possibly die from the
 disease) when the resistance
is overcome by the infectious influence of the epidemic in a cumulative way.

Let $m > 0$   describe the strength
of that resistance (see Section \ref{subsec:cum-dose}).

\begin{itemize}
\item If $t > 0$ and $J(t) = \int_0^t I(s)ds < m$, no host with
resistance $m$ that  was infected
after the start of the epidemic has turned infective by time $t$
and $C(t) =0$  for the cumulative number of  infected  individuals $C$ in (\ref{eq:cum-infectivity-rate}).

\item If $t > 0$ and $J(t) >m$, then the set
\begin{equation}
L(t,m) = \Big \{ r \in [0,t]; \int_r^t I(s) ds = m \Big \}
\end{equation}
is nonempty by the intermediate value theorem, and the hosts
that were infected at a time $r \in L(t,m)$  turn
infective at time $t$.
\end{itemize}

In the following, we elaborate the consideration in
\cite[p.343]{Thi77}.
We claim that, for any $t \ge 0$, a host becomes infective
in the time interval $[0,t]$ if and only if it was  infected
in an time interval $[0,r]$ with $r \in L(t,m)$.

Let
\[
\tau^\di(t,m) = \sup L(t,m) \hbox{ and }
\tau_\di(t,m) = \inf L(t,m).
\]
Notice that
\[
\tau^\di(t,m), \;\tau_\di(t,m) \in L(t,m).
\]
Let $t \ge 0$ and $\int_0^t I(s)ds \ge m$.

Consider a host that  becomes infective in the interval $[0,t]$.
Then it has become infective at some time $\tilde t \in [0,t]$
and was  infected at some time $\tilde r $ with $\tilde r \in L(\tilde t, m)$, $ \int_{\tilde r}^{\tilde t} I(s) ds =m$.
Then $\int_{\tilde r}^t I(s) ds \ge m$ and $\int_r^t I(s)ds
=m $ for some $r \ge \tilde r$. So the host was
 infected
in some interval $[0,r]$ with $r \in L(t,m)$ and in the interval
$[0, \tau^\di(t,m)]$.

Conversely, consider a host that was  infected in the interval
$[0, \tau_\di(t,m)$].

Then it was infected in an interval $[0,r]$
with $r \in L(t,m)$,  $\int_r^t I(s) ds =m$. So, it was  infected at some time $\tilde r \in [0,r]$. Then $\int_{\tilde r}^t I(s) ds \ge m$, and this host
 becomes infective in the interval $[0,t]$.

Since $S$ decreases and $\tau_\di(t,m) \le \tau^\di(t,m)$,
\[
S(0) - S(\tau_\di(t,m)) \le C(t) \le S(0) - S(\tau^\di(t,m)).
\]
For $r = \tau_\di(t,m), \tau^\di(t,m)$,
\[
S(0) - S(r) = S(0) f(J(r))= S(0) f(J(t)-m),
\]
and so
\[
C(t) = f(J(t)-m).
\]
If $J(t) < m$, $C(t,m) =0 $ and, since $f(0) =0$, $C(t,m) = S(0,m) f(0)=0$.
In combination,
\begin{equation}
\label{eq:cum-infec-rate-func}
C(t) = S(0)  f \big ([J(t) - m]_+ \big )  , \qquad t \ge 0,
\end{equation}
where $[r]_+ = \max \{r,0\}$ is the positive part of the number
$r \in \R$. Cf. \cite[(8)]{Thi77}.


\subsection{The model equation}
\label{subsec:mod-integ-eq}

In this section, we will describe the course of the epidemic by
a single integral equation that essentially is of Volterra type.
This integral equation is in terms of the cumulative infectious
influence $J$ (cumulative force of infection). By (\ref{eq:susceptibles}), $J$ can be expressed in more epidemiologically
tangible terms via the susceptible hosts by
\begin{equation}
\label{eq:cum-inf-infl--suscep}
J(t) =  \ln \frac{S(0)}{S(t)}, \qquad t \ge 0.
\end{equation}
  Recall (\ref{eq:cum-inf-age-density2}),
\begin{equation}
J(t) = \int_0^t
C(t-a) P(a) \kappa(a) da
+
J_0(t)
\end{equation}
with
\begin{equation}
J_0(t)= \int_0^\infty
\Big ( \int_0^t u_0(a-r) \frac{P(a)}{P(a-r)} dr
  \Big ) \kappa(a) da.
\end{equation}
By (\ref{eq:cum-infec-rate-func}),
\begin{equation}
\label{eq:model-Volt-int}
J(t) = \int_0^t  S(0) f \big ([J(t-a) - m]_+ \big ) P(a) \kappa(a) da
+
J_0(t).
\end{equation}
Here, after changing the order of integration and
the variables and using $u_0(s) =0$ for $s <0$,
\begin{equation}
\label{eq:cum-init-infec-influ}
J_0(t) = \int_0^t \Big ( \int_0^\infty u_0(s) \frac{P(s+r)}{P(s)}
\kappa(s+r) ds \Big ) dr.
\end{equation}
where $J_0$ can be interpreted as the cumulative initial force of infection.

Assume that $\kappa$ is essentially bounded and $u_0$
is finitely integrable. Since $P$ is decreasing,
\[
 \int_0^\infty u_0(s) \frac{P(s+r)}{P(s)}
\kappa(s+r) ds \le \kappa^\di \int_0^\infty u_0 ds
, \qquad r \ge 0,
\]
where $\kappa^\di$ is the essential supremum of $\kappa$,
and $J_0: \R_+ \to \R_+$ is continuous and increasing.



\subsection{Existence of solutions}
\label{sec:existence}

Consider $X =C(\R_+)$ and $X_+ = C_+(\R_+)$
where $X$ is the vector space of continuous real-valued
functions and $X_+$ the cone of nonnegative functions in $X$.
Define $G:X_+ \to X_+$ by
\begin{equation}
\label{eq:map}
G(J)(t)  =  \int_0^t   f \big ([J(t-a) - m]_+ \big)
 \xi(a) da
+
J_0(t), \qquad J \in X_+, \quad t \in \R_+,
\end{equation}
with
\begin{equation}
\label{eq:kernel}
\xi(a) = S(0) P(a) \kappa(a).
\end{equation}

Since $f: \R_+ \to \R_+$ given by (\ref{eq:nonlinearity}) is increasing and $f(\R_+) =[0,1)$, $G$ maps increasing functions to increasing functions
and $G$ is increasing itself:

\begin{lemma}
\label{re:map}
(a) If $J(t) \le \tilde J(t)$ for all $t \ge 0$,
then $G(J)(t) \le G(\tilde J)(t)$ for all $t \ge 0$.

(b) If $J:\R_+ \to \R_+$ is increasing, $G(f):\R_+ \to \R_+$ is increasing.

(c) For all $J \in X_+$,
\[
G(J) (t) \le \int_0^t \xi(a) da + J_0(t), \qquad t \ge 0.
\]
\end{lemma}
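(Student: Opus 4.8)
The statement to prove is Lemma~\ref{re:map}, which has three parts. The plan is to verify each part directly from the definition of $G$ in (\ref{eq:map}), using only that $f(y) = 1 - e^{-y}$ is increasing on $\R_+$ with range $[0,1)$, that $[\,\cdot\,]_+$ is monotone, that $J_0$ is increasing (established at the end of Section~\ref{subsec:mod-integ-eq}), and that $\xi \ge 0$ since $S(0) > 0$, $P \ge 0$, $\kappa \ge 0$.

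For part (a), suppose $J(t) \le \tilde J(t)$ for all $t \ge 0$. Then for each fixed $t$ and each $a \in [0,t]$ we have $J(t-a) \le \tilde J(t-a)$, hence $[J(t-a)-m]_+ \le [\tilde J(t-a)-m]_+$ because $x \mapsto [x-m]_+$ is nondecreasing, and then $f([J(t-a)-m]_+) \le f([\tilde J(t-a)-m]_+)$ because $f$ is increasing. Multiplying by $\xi(a) \ge 0$, integrating over $a \in [0,t]$, and adding the common term $J_0(t)$ gives $G(J)(t) \le G(\tilde J)(t)$.

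For part (b), let $J$ be increasing and take $0 \le t_1 \le t_2$. Write $G(J)(t_i)$ using (\ref{eq:map}). For the integral term I would substitute $b = t_i - a$ so that $\int_0^{t_i} f([J(t_i-a)-m]_+)\,\xi(a)\,da = \int_0^{t_i} f([J(b)-m]_+)\,\xi(t_i - b)\,db$; this is not quite a clean comparison because $\xi(t_i-b)$ also shifts. A cleaner route: keep the variable $a$, extend the range of integration from $[0,t_1]$ to $[0,t_2]$ by noting the integrand is nonnegative, so $G(J)(t_2) \ge J_0(t_2) + \int_0^{t_1} f([J(t_2-a)-m]_+)\,\xi(a)\,da$, and then observe that for $a \in [0,t_1]$ we have $t_1 - a \le t_2 - a$, so monotonicity of $J$, of $[\,\cdot\,]_+$, and of $f$ gives $f([J(t_1-a)-m]_+) \le f([J(t_2-a)-m]_+)$; combined with $J_0(t_1) \le J_0(t_2)$ this yields $G(J)(t_1) \le G(J)(t_2)$. (I would also note the typo that ``$G(f)$'' should read ``$G(J)$''.) For part (c), since $f$ maps into $[0,1)$ we have $f([J(t-a)-m]_+) \le 1$ for every argument, so $\int_0^t f([J(t-a)-m]_+)\,\xi(a)\,da \le \int_0^t \xi(a)\,da$; adding $J_0(t)$ gives the bound.

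None of the three parts presents a genuine obstacle — this is a routine monotonicity-and-boundedness lemma. The only point requiring a moment's care is part (b), where one must avoid the temptation to change variables in a way that entangles the shift in $\xi$; the fix is to exploit the nonnegativity of the integrand to enlarge the domain of integration first and only then compare the integrands pointwise. I would also implicitly use that $G(J) \in X_+$, i.e.\ that $G(J)$ is continuous and nonnegative, which follows from continuity of $J_0$, continuity of $t \mapsto \int_0^t g(t-a)\xi(a)\,da$ for bounded $g$ (dominated convergence, using $\xi \in L^1_{loc}$ from $\kappa$ essentially bounded and $P \le 1$), and nonnegativity of all the ingredients; this is presumably regarded as part of the assertion ``$G: X_+ \to X_+$'' already made just before the lemma.
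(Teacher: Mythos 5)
Your proof is correct and is exactly the argument the paper has in mind: the paper states the lemma without a formal proof, justifying it only by the one-line remark that $f$ is increasing with $f(\R_+)=[0,1)$, and your three verifications (including the careful handling of part (b) by enlarging the domain of integration before comparing integrands) simply fill in those routine details.
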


Define a sequence of functions $J_n: X_+ \to X_+$ by
$J_n = G(J_{n-1}) $ for all $n \in \N$.

Notice that $J_n(t) \ge J_{n-1}(t)$ for all $t \ge 0$
first for $n =1$ and then for all $n \in \N$ by induction.
By induction again and Lemma \ref{re:map}, $J_n(t)$ is increasing in $t \ge 0$ and $n \in \Z_+$. Further, the
sequences $(J_n(t))_{n\in \N}$ are bounded for all $t \ge 0$.

By the monotone convergence theorem,
$J(t) = \lim_{n\to \infty} J_n(t)$ exists for all $t \ge 0$,
defines a Borel measurable function on $\R_+$ and satisfies
$J =G(J)$, i.e., it is a solution of (\ref{eq:model-Volt-int}).
Restricting $G$ to $C_+[0, T])$ with $T \in (0,\infty)$
and using the contraction mapping theorem, one obtains
that $J$ is the unique solution and continuous.

\begin{corollary}
\label{re:threshold-one}
Assume that $J_0(t) \le m$ for all $t \ge 0$. Then
$J(t) = J_0(t)$ for all $t \ge 0$ for the unique solution
$J$ of (\ref{eq:model-Volt-int}).
\end{corollary}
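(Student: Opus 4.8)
The plan is to reduce everything to the single observation that once the cumulative force of infection never exceeds the resistance $m$, the nonlinearity $f$ in (\ref{eq:nonlinearity}) is evaluated only at $0$, where it vanishes, so the integral term in (\ref{eq:model-Volt-int}) disappears and the fixed-point equation collapses to $J = J_0$. Concretely, I would work with the monotone iteration already set up before the statement: the sequence anchored at the cumulative initial force of infection $J_0$ with $J_n = G(J_{n-1})$ for $n \ge 1$, $G$ as in (\ref{eq:map}), whose pointwise monotone limit is the (unique, continuous) solution $J$ of (\ref{eq:model-Volt-int}).

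The first key step is an induction on $n$ proving that $J_n(t) \le m$ for every $t \ge 0$. The base case $n = 0$ is precisely the hypothesis $J_0(t) \le m$. For the inductive step, assume $J_n(t) \le m$ for all $t \ge 0$; then for any $t \ge 0$ and $a \in [0,t]$ we have $t - a \in [0,t]$, hence $J_n(t-a) \le m$, hence $[J_n(t-a) - m]_+ = 0$, and since $f(0) = 0$ the integrand defining $G(J_n)$ is identically zero. Thus $J_{n+1}(t) = G(J_n)(t) = J_0(t) \le m$, which closes the induction and in fact shows $J_n = J_0$ for all $n \ge 1$. The second step is to pass to the limit: $J(t) = \lim_{n\to\infty} J_n(t) = J_0(t)$ for all $t \ge 0$, i.e. the iteration is eventually constant. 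Finally I would note that $J_0$ genuinely solves (\ref{eq:model-Volt-int}) — for $a \in [0,t]$ one has $J_0(t-a) \le m$, so $f([J_0(t-a) - m]_+) = f(0) = 0$ and the integral in (\ref{eq:model-Volt-int}) vanishes — so by the uniqueness obtained from the contraction mapping theorem on $C_+([0,T])$, $J = J_0$ is the solution, as claimed. (Lemma \ref{re:map} is not even needed here beyond the nonnegativity of $G$ that justifies the monotone iteration.)

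I do not expect a genuine obstacle: the argument is a two-line induction plus a limit. The one point requiring care is purely bookkeeping — the symbol $J_0$ is overloaded, denoting both the cumulative initial force of infection and the initial term of the iteration, but these are the same object, so the reasoning is self-consistent. One could instead argue directly by a continuation/contradiction argument (set $T = \sup\{t : J(s)\le m$ on $[0,t]\}$, use continuity of $J$ with $J(0)=J_0(0)=0$ to see $J = J_0 \le m$ on $[0,T]$, and then rule out $T < \infty$), but since the monotone-iteration machinery is already in place this is the shortest route.
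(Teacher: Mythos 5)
Your proposal is correct and follows essentially the same route as the paper: an induction showing $J_n = J_0 \le m$ for all $n$ (because the integrand $f([J_n(t-a)-m]_+)=f(0)=0$), followed by passing to the monotone limit. The extra verification that $J_0$ itself solves (\ref{eq:model-Volt-int}) together with the uniqueness from the contraction argument is a harmless addition already implicit in the paper's setup.
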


\begin{proof}
By induction and (\ref{eq:map}), $J_n(t) = J_0(t)\le m$ for
all $n \in \N$ and $t \ge 0$. So, also for the limit,
$J(t) = J_0(t) $ for all $t \ge 0$.
\end{proof}


\section{The final size of the epidemic for fixed resistance}
\label{sec:final-size}

We consider equation (\ref{eq:model-Volt-int})
for the cumulative force of infection and the
behavior of its solutions as time tends to infinity.

\subsection{The final size of the cumulative initial force of  infection}

Consistent with its interpretation,
the cumulative initial force of infection $J_0$ in (\ref{eq:cum-init-infec-influ}) is an increasing function
of time, which has a limit in $[0,\infty]$ as $t \to \infty$,
\begin{equation}
\label{eq:cum-init-infec-influ-fin}
J_0(t) \to  J_0^\infty = \int_0^\infty \Big ( \int_0^\infty u_0(s) \frac{P(s+r)}{P(s)}
\kappa(s+r) ds \Big ) dr.
\end{equation}
See \cite[Sec.3.2]{FCGTage} for upper and lower estimates
of $J_0^\infty$.


\subsection{The final size of the cumulative force of infection}


Since $J$ is an increasing function of time,
\begin{equation}
J^\infty = \lim_{t\to \infty} J(t)
\end{equation}
exists. By Beppo Levi's theorem of monotone convergence,
applied to (\ref{eq:model-Volt-int}),
\begin{equation}
\label{eq:model-Volt-int-final}
J^\infty =  \cR_0  f ([J^\infty - m]_+)
+
J_0^\infty.
\end{equation}
Here
\begin{equation}
\label{eq:bas-rep-no}
\cR_0 = S(0) \int_0^\infty P(a) \kappa(a) da
\end{equation}
is the basic reproduction number of the epidemic
(without resistance) and has the following interpretation:
\begin{definition}
\label{def:R0interpret}
$\cR_0$ is the average number of infections generated
by a typical freshly infective individual
during its entire period of infectivity if it
is introduced into a
 host population of size $S(0)$ that is completely susceptible
with no host being resistant to the disease (in the
sense explained in Section \ref{subsec:cum-dose}).
\end{definition}

$J^\infty$ can also be obtained as the limit of
the following recursive sequence:
\begin{equation}
\begin{split}
z_0 = & J_0^\infty,
\\
z_n = & \cR_0 f ([z_{n-1} - m]_+) + z_0, \qquad n \in \N.
\end{split}
\end{equation}
By induction, since $f$ is increasing, $(z_n)$ is an increasing sequence which is
bounded by $\cR_0 +z_0$ and thus has a limit $z$ which solves
$z = \cR_0 f([z-m]) + z_0$. Further, $z$ is the minimal
solution of this equation. Let $\tilde z$ be another solution.
By induction, $z_n \le \tilde z$ for all $n \in \N$ and so $z \le
\tilde z$.

Consider the recursion $J_n = G(J_{n-1})$ with $G$ in (\ref{eq:map}). By induction,
\[
J_n(t) \le z_n, \qquad t \ge 0, \quad n \in \N.
\]
In the limit, $J(t) \le z $ for all $t \ge 0$
and so $J^\infty \le z $. Since  both $z$ and $J^\infty$
are solutions of (\ref{eq:model-Volt-int-final})
and $z$ is the minimal solution, $J^\infty =z$.

Compare \cite[Sec.3]{Thi77}.

Here are the double threshold results. The first
has already been essentially  stated in Corollary \ref{re:threshold-one} and elaborates one side of the character of $m$
as threshold parameter for the epidemic. Also use (\ref{eq:susceptibles}).

\begin{theorem}
\label{re:threshold-one-elab}
If $J_0^\infty \le m$, then $J(t) = J_0(t) $ for all $t \ge 0$ and $S(t) = S(0)e^{-J_0(t)}\ge S(0)e^{-m}$ for all $ t\ge 0$.
\end{theorem}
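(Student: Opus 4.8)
The plan is to reduce the statement to Corollary~\ref{re:threshold-one} by exploiting the monotonicity of $J_0$. First I would note that $J_0$ is an increasing function of $t$ (as recorded just before (\ref{eq:cum-init-infec-influ-fin})), so for every $t \ge 0$ we have $J_0(t) \le J_0^\infty \le m$. This is precisely the hypothesis of Corollary~\ref{re:threshold-one}, namely $J_0(t) \le m$ for all $t \ge 0$, and that corollary therefore yields $J(t) = J_0(t)$ for all $t \ge 0$ for the unique solution $J$ of (\ref{eq:model-Volt-int}).

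Next I would substitute this identity into (\ref{eq:susceptibles}), $S(t) = S(0) e^{-J(t)}$, to obtain $S(t) = S(0) e^{-J_0(t)}$. Finally, invoking $J_0(t) \le m$ once more together with monotonicity of the exponential gives $e^{-J_0(t)} \ge e^{-m}$, hence $S(t) \ge S(0) e^{-m}$ for all $t \ge 0$, which completes the argument.

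There is essentially no obstacle here: the theorem is an immediate consequence of material already established, and the proof is a two-line chain (monotonicity of $J_0$, then Corollary~\ref{re:threshold-one}, then (\ref{eq:susceptibles})). The only point that requires any care at all is the passage from the hypothesis on the limit $J_0^\infty$ to the pointwise bound $J_0(t) \le m$; this is exactly where the monotonicity of $J_0$ enters, and without it Corollary~\ref{re:threshold-one} could not be applied directly. I expect the author's proof to be correspondingly short.
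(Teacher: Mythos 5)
Your argument is correct and matches the paper exactly: the paper states that this theorem ``has already been essentially stated in Corollary~\ref{re:threshold-one}'' and instructs the reader to also use (\ref{eq:susceptibles}), which is precisely your chain of monotonicity of $J_0$, Corollary~\ref{re:threshold-one}, and then the formula $S(t)=S(0)e^{-J(t)}$. No further commentary is needed.
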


The second results presents the other threshold side for $m$
and the threshold character of the basic reproduction
number $\cR_0$.

\begin{theorem}
\label{re:threshold-two-elab}
Let   $J_0^\infty > m$. Then $J^\infty > m$ and $J^\infty$
is the unique solution $z>m$ of $z = \cR_0 f(z-m) +z_0$ with $z_0 =J_0^\infty$. Further, $J^\infty$ is a concave and  continuous
function  of $J_0^\infty >m$.

\begin{itemize}
\item[(a)] As $J_0^\infty \to \infty$, $J^\infty \to \infty$
and $J^\infty - J_0^\infty \to \cR_0$.

\item[(b)] Let $\cR_0 \le 1$. Then $J^\infty \to m$ as $J_0^\infty \to m$
and $S^\infty \to S(0)e^{-m}$.

\item[(c)] Let $\cR_0 >1$. Then, there exists a unique solution $x > 0$
of the equation $x = \cR_0 f(x)$.

\begin{itemize}
\item[(i)] Whenever $J_0^\infty > m$, then $J^\infty \ge m + x$ and $S^\infty \le S(0) e^{-(m+x)}=:S^\di$.

 \item[(ii)] As $J_0^\infty \to m+$, $J^\infty \to m + x$ and $S^\infty \to S(0) e^{-(m+x)}$.
\end{itemize}
\end{itemize}
\end{theorem}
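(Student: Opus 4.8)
The plan is to reduce everything to the scalar fixed point equation $z = \cR_0 f([z-m]_+) + z_0$, of which $J^\infty$ is the minimal solution (with $z_0 = J_0^\infty$) by the discussion preceding the theorem, and then to a one-variable study of $g(w) := w - \cR_0 f(w) = w - \cR_0\bigl(1 - e^{-w}\bigr)$, $w \ge 0$. First I would note that, since $z_0 = J_0^\infty > m$, every solution $z$ of the fixed point equation satisfies $z > m$: if $z \le m$ then $[z-m]_+ = 0$ and $f(0)=0$, forcing $z = z_0 > m$, a contradiction. Hence on the set of solutions the equation reads $z = \cR_0 f(z-m) + z_0$, and substituting $w = z-m$, $c = z_0 - m > 0$ turns it into $g(w) = c$ with $w > 0$. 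It then suffices to show $g(w) = c$ has a unique positive root $w^\ast = w^\ast(c)$ and to transfer the properties of $c \mapsto w^\ast(c)$ to $J_0^\infty \mapsto J^\infty = m + w^\ast(J_0^\infty - m)$.

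For the analysis of $g$ I would record $g(0) = 0$, $g(w) \to \infty$, $g'(w) = 1 - \cR_0 e^{-w}$, and $g''(w) = \cR_0 e^{-w} > 0$, so $g$ is strictly convex. If $\cR_0 \le 1$, then $g' > 0$ on $(0,\infty)$, so $g$ is a strictly increasing bijection of $(0,\infty)$ onto $(0,\infty)$ and $g(w) = c$ has the unique root $w^\ast(c) = g^{-1}(c)$, with $w^\ast(c) \to 0$ as $c \to 0+$. If $\cR_0 > 1$, then $g$ is strictly decreasing on $(0,\ln\cR_0)$ and strictly increasing on $(\ln\cR_0,\infty)$, with $g(\ln\cR_0) = \ln\cR_0 - \cR_0 + 1 < 0 = g(0)$; hence $g$ has a unique positive zero $x \in (\ln\cR_0,\infty)$, which is exactly the asserted unique positive solution of $x = \cR_0 f(x)$, and $g < 0$ on $(0,x)$, $g > 0$ on $(x,\infty)$. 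Consequently, for $c > 0$ the equation $g(w) = c$ has no root in $(0,x]$ and, since $g$ maps $(\ln\cR_0,\infty)$ bijectively and increasingly onto $(g(\ln\cR_0),\infty) \supset [0,\infty)$, exactly one root $w^\ast(c) > x$. In all cases $z^\ast := m + w^\ast(c)$ is the unique solution of the fixed point equation in $(m,\infty)$; since $J^\infty$ is such a solution, $J^\infty = z^\ast = m + w^\ast(J_0^\infty - m)$, in particular $J^\infty > m$, and $J^\infty \ge m + x$ when $\cR_0 > 1$.

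For concavity, continuity and the limit claims I would work on the branch carrying $w^\ast$ — all of $(0,\infty)$ if $\cR_0 \le 1$, the branch $(\ln\cR_0,\infty)$ if $\cR_0 > 1$ — where $g$ is smooth, strictly increasing and strictly convex; its inverse $w^\ast(\cdot) = g^{-1}$ is then strictly increasing, continuous, and concave (the inverse of a strictly increasing strictly convex function being concave). Composing with the affine increasing map $J_0^\infty \mapsto J_0^\infty - m$ and adding $m$ makes $J^\infty$ a concave, increasing, continuous function of $J_0^\infty$ on $(m,\infty)$; equivalently, $dJ^\infty/dJ_0^\infty = \bigl(1 - \cR_0 e^{-(J^\infty - m)}\bigr)^{-1} > 0$, the positivity being exactly $J^\infty - m > \ln\cR_0$ on this branch. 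For (a): from $w^\ast = \cR_0 f(w^\ast) + c$ we get $w^\ast(c) \ge c$, so $w^\ast(c) \to \infty$ as $c \to \infty$, whence $J^\infty \ge m + c \to \infty$ and $J^\infty - J_0^\infty = w^\ast(c) - c = \cR_0 f(w^\ast(c)) = \cR_0\bigl(1 - e^{-w^\ast(c)}\bigr) \to \cR_0$. For (b), with $\cR_0 \le 1$: $w^\ast(c) \to 0$ as $c \to 0+$, so $J^\infty \to m$ and, by (\ref{eq:susceptibles}), $S^\infty = S(0)e^{-J^\infty} \to S(0)e^{-m}$. For (c), with $\cR_0 > 1$: existence and uniqueness of $x$ were established above; (i) $c > 0$ gives $w^\ast(c) > x$, hence $J^\infty > m + x$ and $S^\infty = S(0)e^{-J^\infty} < S(0)e^{-(m+x)} = S^\di$; (ii) continuity of $g^{-1}$ on the upper branch together with $g^{-1}(0) = x$ gives $w^\ast(c) \to x$ as $c \to 0+$, so $J^\infty \to m + x$ and $S^\infty \to S(0)e^{-(m+x)}$.

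The one genuine obstacle is the non-monotonicity of $g$ in the case $\cR_0 > 1$: one must be sure that the solution of the scalar equation equal to $J^\infty$ lies on the increasing branch above $x$, since it is only there that $g$ is invertible and the "inverse of a convex function is concave" argument applies; the argument above handles this by showing $g < 0$ on $(0,x)$, so $g(w) = c > 0$ cannot be solved below $x$. Beyond that, everything — the monotone bijection, the reciprocal-of-an-increasing-derivative computation, and passing to the limit in $w^\ast - c = \cR_0 f(w^\ast)$ — is routine.
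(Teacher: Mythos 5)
Your proof is correct, and it reaches the same reduction as the paper — setting $y=J^\infty-m$, $y_0=J_0^\infty-m$ so that the final size relation becomes $y=\cR_0 f(y)+y_0$ — but it proceeds differently from there. The paper gets uniqueness from the single observation that $f(y)/y$ is strictly decreasing (so $1-y_0/y=\cR_0 f(y)/y$ can hold for at most one $y>0$), obtains concavity of $J^\infty$ in $J_0^\infty$ from the iteration--inheritance result (Theorem \ref{re:convex-inherit}: concavity survives the monotone recursion and its pointwise limit), and settles (a)--(c) with brief appeals to $f(\infty)=1$ and the intermediate value theorem. You instead study $g(w)=w-\cR_0 f(w)$ — the function $E$ of Section \ref{sec:abst-fin-size} after the shift by $m$ — through its strict convexity and its monotone branches, deriving uniqueness, the localization of the solution on the increasing branch above the critical point $x$, concavity (inverse of an increasing convex function), and the limits in (b) and (c)(ii) all from continuity of $g^{-1}$ on that branch. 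This is essentially the route the paper itself follows later in Section \ref{exp:fixed} and Theorems \ref{re:sol-jumps-convex-app}--\ref{re:sol-jumps-convex-app1}, so nothing is lost; indeed your branch analysis makes explicit why $g(w)=c>0$ has no root below $x$, which is precisely what justifies the claims ``$y>x$ for all $y_0>0$ and $y\to x$ as $y_0\to 0$'' that the paper's proof of (c) states without elaboration. What the paper's concavity argument buys in exchange is that it does not need differentiability of $F$ and so extends to the measure-valued setting (\ref{eq:measure}); for the present theorem, with $f$ smooth, that extra generality is not required.
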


\begin{remark}
If $\cR_0 > 1$, and $0< x =f (x)$, then the equation $z = f([z-m]_+) +m$ has the solutions $z =m$ and $z=m + x$, but $J^\infty =m$
is the epidemiologically relevant solution for $J_0^\infty=m$.

Let $\cR_0 >1 $ and $J_0^\infty < m$.
 Then $J^\infty = J_0^\infty$
is the epidemiologically relevant solution of
(\ref{eq:model-Volt-int-final}) and is the only solution
of this equation if $\cR_0 \le 1$. If $\cR_0 > 1$ is large
enough, there is  also a solution $y > m$.

Set $y_1 = m+1$. Then
\[
y_1 \le \cR_0 f([y_1-m]_+) + J_0^\infty
\quad \hbox{if} \quad \cR_0 f(1) +J_0^\infty > m+1.
\]
 Further,
\[
y_2 \ge  \cR_0 f([y_2-m]_+) + J_0^\infty \quad
 \hbox{if} \quad y_2 > \cR_0 + J_0^\infty.
\]
 By the intermediate value theorem, there is a solution between $y_1 $ and $y_2$.
\end{remark}

The formulation in Theorem \ref{re:threshold-susc}
is obtained from Theorem \ref{re:threshold-two-elab}
by setting
\begin{equation}
\label{eq:susc-thresh}
S^\sharp = \Big ( \int_0^\infty P(a) \kappa(a) da \Big )^{-1}.
\end{equation}
By (\ref{eq:bas-rep-no}), $\cR_0 -1 $ and $S(0) - S^\sharp$
have the same sign.

\begin{proof}[Proof of Theorem \ref{re:threshold-two-elab}]

The first statement holds because $J_0(\infty) \le J(\infty)$.

Define $y = J(\infty) -m $ and $y_0 = J_0(\infty) -m$. Then $y \ge y_0> 0$ and, by (\ref{eq:model-Volt-int-final}),
\begin{equation}
\label{eq:model-Volt-int-final-shift}
y =  \cR_0 f (y)
+
y_0,
\end{equation}
and the analysis reduces to the well-known problem with $m=0$.
By (\ref{eq:nonlinearity}), $f$ is strictly increasing and strictly concave, $f(0)=1$, $f(\infty) =1$, and infinitely often differentiable,
$f'(0) =1$, and $f(y)/y$ is strictly decreasing.

In particular, the solution $y$ is uniquely determined by $y_0$.
Since this plays out in the real numbers, uniqueness implies
continuous dependence.

Uniqueness follows from the fact that $f(y) /y$ is a strictly
decreasing function of $y>0$.

That $J^\infty$ is a concave function of $J_0^\infty \ge m$
follows from Theorem \ref{re:convex-inherit}
and the fact that $F(y) = \cR_0 f([y-m]_+) + J_0^\infty$
defines a concave function $F$ of $y \ge m$.

(a) Notice that $f(z) \to 1$ as $z \to \infty$ by (\ref{eq:nonlinearity}).

(b) If $\cR_0 \le 1$ and $y$ is a solution to (\ref{eq:model-Volt-int-final-shift}), $y \to 0$ as $y_0 \to 0$.

(c) Let $\cR_0 > 1$. Then, there exists a  solution
$x > 0$ of $1 = \cR_0 f(x)/x$ by the intermediate value theorem
which is unique. Further, if $y$ is a solution to (\ref{eq:model-Volt-int-final-shift}), $y >x $ for all $y_0 >0$ and $y \to x$
as $y_0 \to 0$.
\end{proof}



\section{The model for distributed  resistance}
\label{sec:distributed}

It is very unlikely that all susceptibles have the same
resistance to the disease though we assume
that they all have some resistance. So we assume that, at time $t \ge 0$,
the number of susceptibles is given by $\int_0^\infty S(t,m) dm $,
where, as a function of $m$,  $S(t,m) $ is the density
of susceptible hosts with  resistance $m \ge 0$.
We still assume that all susceptible hosts have the same chance
of being exposed to the infection.
For models with distributed susceptibility
or distributed other traits,
see \cite{BoDI1, BoDI2} and the references therein.
We also make
the simplifying assumption that
the level of resistance does not affect the disease mortality
of an infective host or its infectivity.
See the discussion.

At time $t \ge 0$, let $u_1(t,a)$ be the  density of  primary infective
hosts with infectivity age $a \ge t$,
\begin{equation}
u_1(t,a) = u_0(a-t) \frac{P(a-t)}{P(a)}, \qquad a \ge t \ge 0,
\end{equation}
where $u_0(a)$ is the  density of initially infective hosts
with infectivity age $a$
and $u_0: \R_+ \to \R_+$ is a given finitely integrable function.
\[
\int_t^\infty u_1(t,a) da
\]
is the number of primary infective hosts at time $t$.
Notice that all primary infective hosts at time $t$  have
an infectivity age $a \ge t$.

At time $t \ge 0$, let  $u_2(t,a,m)$ be the density of secondary infective hosts with
infectivity age $a \in [0,t]$ and resistance  $m$, such that
\begin{equation}
\label{eq:infectives-dis}
\int_0^\infty  \int_0^t u_2(t,a,m ) \; da \, dm , \quad t \ge 0,
\end{equation}
is the number of secondary infective individuals at time $t$. Infectivity age
is the time since becoming infective. Notice that all secondary
infective hosts at time $t$ have an infectivity age $a \in [0,t)$.

The force of infection is now
\begin{equation}
\label{eq:FoI-dist}
I(t) = \int_t^\infty u_1(t,a)  \kappa(a) da
+
\int_0^t \Big ( \int_0^\infty u_2(t,a,m) dm \Big )  \kappa(a) da.
\end{equation}
Let $B(t,m)$ be the rate of individuals with resistance  $m$ that become infective
at time $t$. Let $P(a)$ be the probability that an infective
individual has not died from the disease at age $a$.
Then
\begin{equation}
\label{eq:sec-inf-age-dens}
u_2(t,a,m) =  B(t-a,m) P(a), \quad t > a \ge 0.
\end{equation}
The equation for the susceptible hosts becomes
\begin{equation}
\p_t S(t,m) =- S(t,m) I(t),
\end{equation}
with $\p_t $ denoting the partial derivative with respect to time $t$.
It is solved by
\begin{equation}
\label{eq:sus-cumforce-dist}
S(t,m) = S(0,m) e^{-J(t)}, \qquad J(t) = \int_0^t I(s) ds.
\end{equation}
Let, for $t\ge 0$,
\begin{equation}
\label{eq:cum-infectivity-rate-dis}
C(t,m) = \int_0^t B(s,m)ds, \qquad  m \ge 0,
\end{equation}
 be the  density of hosts with resistance  $m$
that have become infectious in the time interval from $0$
to $t$. The same reasoning as in Section  \ref{subsec:threshold-infec}  provides
\begin{equation}
\label{eq:cum-infec-rate-func-dis}
C(t,m) = S(0,m)  f ([J(t) - m]_+)  , \qquad t,m \ge 0,
\end{equation}
where $[r]_+ = \max \{r,0\}$ is the positive part of the number
$r \in \R$.
The same reasoning as in Section
\ref{subsec:infective-influence} leads to
\begin{equation}
\label{eq:model-Volt-int-dist}
J(t) = \int_0^t \Big (\int_{0}^{\infty} S(0, m) f ([J(t-a) - m]_+) dm \Big)
P(a)  \kappa (a) da
+
J_0(t),
\end{equation}
with $J_0$ from (\ref{eq:cum-init-infec-influ}).

At this point, for consistency with Section \ref{sec:model},
we introduce the number  of the initial susceptibles
\begin{equation}
\label{eq:susc-dist-tot}
S_0 = \int_0^\infty S(0,m) dm
\end{equation}
and the probability density of their resistance distribution
\begin{equation}
\sigma(m) =  S(0,m)/S_0.
\end{equation}
Let
\begin{equation}
\label{eq:xi}
\xi (a) = S_0 P(a ) \kappa(a).
\end{equation}
Then
\begin{equation}
\label{eq:model-Volt-int-dist1}
\begin{split}
J(t) = & \int_0^t \Big (\int_{0}^{\infty} \sigma(m) f ([J(t-a) - m]_+) dm \Big)
\xi(a) da
\\ & +
J_0(t), \qquad t \in \R_+.
\end{split}
\end{equation}

The increasing solution $J: \R_+ \to \R_+$ can be found as the pointwise
limit of an increasing iteration with start function $J_0$
as in Section \ref{sec:existence}.

\begin{proposition}
\label{re:minimal}
If  $\int_0^{J_0(\infty)} \sigma(m) dm  =0$,
then $J(t) = J_0(t)$ for all $t \ge 0$.
\end{proposition}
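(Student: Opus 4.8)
The plan is to mimic the proof of Corollary~\ref{re:threshold-one} in the distributed setting. Recall that (as noted just before the proposition, following Section~\ref{sec:existence}) the solution $J$ is obtained as the pointwise limit of the increasing iteration $J_n = G(J_{n-1})$, $n \in \N$, where $G$ is the operator sending $J \in X_+$ to the right-hand side of (\ref{eq:model-Volt-int-dist1}) and the start function $J_0$ is the one in (\ref{eq:cum-init-infec-influ}). First I would show, by induction on $n$, that $J_n = J_0$ for every $n$; the base case $n = 0$ is trivial.

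For the induction step I would assume $J_{n-1} = J_0$ and examine $G(J_0)(t)$ for fixed $t \ge 0$. The crucial point is that $f$ in (\ref{eq:nonlinearity}) is strictly increasing with $f(0) = 0$, so $f([y]_+) > 0$ precisely when $y > 0$; hence for $0 \le a \le t$ the inner integrand $\sigma(m)\,f([J_0(t-a)-m]_+)$ vanishes whenever $m \ge J_0(t-a)$. Since $J_0$ is increasing, $J_0(t-a) \le J_0(\infty)$, so the inner $m$-integral in (\ref{eq:model-Volt-int-dist1}) is really an integral over $[0, J_0(\infty))$; as $0 \le f([\,\cdot\,]_+) < 1$, it is bounded by $\int_0^{J_0(\infty)} \sigma(m)\,dm$, which is $0$ by hypothesis. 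Thus $G(J_0)(t) = J_0(t)$, completing the induction, and passing to the limit yields $J(t) = \lim_{n \to \infty} J_n(t) = J_0(t)$ for all $t \ge 0$. Equivalently, the same computation shows outright that $J_0$ is a fixed point of $G$, hence equal to the solution produced by the iteration.

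I do not expect a genuine obstacle; the argument is bookkeeping once the support of the $m$-integrand is pinned down, and that relies only on the monotonicity of $J_0$, which guarantees $J_0(t-a) \le J_0(\infty)$ throughout the region of integration. I would also remark that the hypothesis $\int_0^{J_0(\infty)} \sigma(m)\,dm = 0$ forces $J_0(\infty) < \infty$, since $\sigma$ is a probability density and thus $\int_0^\infty \sigma(m)\,dm = 1$; for $J_0(\infty) = \infty$ the hypothesis can never hold, so no separate case is needed.
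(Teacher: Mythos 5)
Your proposal is correct and matches the paper's proof, which likewise just observes that the inner $m$-integral in (\ref{eq:model-Volt-int-dist1}) vanishes when evaluated at $J_0$ (since the integrand is supported on $[0,J_0(t-a))\subseteq[0,J_0(\infty))$ and $\sigma\ge 0$ with $\int_0^{J_0(\infty)}\sigma(m)\,dm=0$), so that $J_0$ is a fixed point of the iteration and hence equals the minimal solution $J$. The paper states this in one line; your induction and the side remark about $J_0(\infty)<\infty$ are just a more explicit write-up of the same argument.
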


\begin{proof}
Notice that
\[
\int_0^t \Big (\int_{0}^{\infty} \sigma(m) f ([J_0(t-a) - m]_+) dm \Big)
\xi(a) da
=0, \qquad t \ge 0. \qedhere
\]
\end{proof}

Let $t \to \infty$ in (\ref{eq:model-Volt-int-dist1}),
to obtain the final size of the cumulative
infective force,
\begin{equation}
\label{eq:model-Volt-int0}
J(\infty) = \cR_\di \int_0^{\infty}   \sigma(m) f \big ([J(\infty) - m]_+\big ) dm
  +
J_0(\infty).
\end{equation}
Here
\begin{equation}
\label{eq:Xi}
\cR_\di =  \int_0^\infty \xi(a) da = S_0 \int_0^\infty P(a) \kappa(a) da
\end{equation}
is the basic reproduction number as interpreted
in Definition \ref{def:R0interpret}.
We use a new symbol, though, because $\cR_\di$
has no longer the threshold property that $\cR_0$
had before.

Since $f(0)=0$,
\begin{equation}
\label{eq:model-Volt-int-dis}
J(\infty) = \cR_\di \int_0^{J(\infty)}  \sigma(m) f (J(\infty) - m) dm +
J_0(\infty).
\end{equation}
By (\ref{eq:sus-cumforce-dist}),
\begin{equation}
\label{eq:sus-cumforce-fin-dist}
S_\infty = S_0 e^{-J(\infty)}, \qquad J(\infty) = \int_0^\infty I(s) ds,
\end{equation}
where $S_\infty$ is the final size of the susceptible part of the host population,
\[
S_\infty = \lim_{t\to \infty}  \int_0^\infty S(t,m)dm.
\]
and $S_0$ is the initial size, (\ref{eq:susc-dist-tot}).

As in  Section \ref{sec:final-size},
$J(\infty) = \lim_{t\to \infty} J(t)$
 is the minimal solution of this equation
and can be found as the limit of an increasing
iteration starting with $J_0(\infty)$.
Compare \cite[Sec.3]{Thi77}.

\begin{proposition}
\label{re:estim-below}
Let
\begin{equation}
\label{eq:estim-below1}
\cR_\di \int_0^{J_0(\infty)}  \sigma( m)  dm  >1.
\end{equation}
Then $J(\infty) \ge J_0(\infty) +x$ where $x>0$ is the unique
solution of
\begin{equation}
\label{eq:extra-fixed-dist}
x= f(x)\cR_\di \int_0^{J_0(\infty)} \sigma(m)dm  .
\end{equation}
\end{proposition}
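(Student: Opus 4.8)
The plan is to pass to the final‑size equation \eqref{eq:model-Volt-int-dis} and reduce the statement to a scalar inequality. Abbreviate $c := \cR_\di \int_0^{J_0(\infty)}\sigma(m)\,dm$, so the hypothesis \eqref{eq:estim-below1} reads $c>1$ (in particular $\cR_\di>0$), and set $y := J(\infty)-J_0(\infty)$. Since the integral term in \eqref{eq:model-Volt-int-dis} is nonnegative, $y\ge 0$; and from \eqref{eq:model-Volt-int0} together with $f\le 1$ one has $J(\infty)\le \cR_\di + J_0(\infty)$, so $y$ is a genuine finite nonnegative number whenever $J_0(\infty)<\infty$ (if $J_0(\infty)=\infty$ then $J(\infty)=\infty$ and the inequality holds trivially, so I assume $J_0(\infty)<\infty$ from now on).

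The core estimate comes from shrinking the domain of integration in \eqref{eq:model-Volt-int-dis} and using monotonicity of $f$. Subtracting $J_0(\infty)$ from \eqref{eq:model-Volt-int-dis} gives $y = \cR_\di \int_0^{J(\infty)}\sigma(m)\,f(J(\infty)-m)\,dm$. Dropping the part of the integration domain above $J_0(\infty)$ (the integrand is still nonnegative there) and using $f(J(\infty)-m)\ge f(J(\infty)-J_0(\infty))=f(y)$ for $m\le J_0(\infty)$ (because $f$ is increasing), I pull the constant $f(y)$ out of the integral and obtain
\[
y \;\ge\; \cR_\di\, f(y)\int_0^{J_0(\infty)}\sigma(m)\,dm \;=\; c\,f(y).
\]

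Next I would exclude $y=0$; this is the only step that genuinely uses $c>1$ rather than mere monotonicity. If $y=0$, then \eqref{eq:model-Volt-int-dis} forces $\int_0^{J_0(\infty)}\sigma(m)\,f(J_0(\infty)-m)\,dm=0$, and since $f(J_0(\infty)-m)>0$ for every $m<J_0(\infty)$ this makes $\sigma$ vanish a.e.\ on $[0,J_0(\infty))$, whence $\int_0^{J_0(\infty)}\sigma=0$ and $c=0$, contradicting $c>1$. So $y>0$, and $y\ge c\,f(y)$ may be rewritten as $f(y)/y\le 1/c$. Finally I invoke the elementary properties of $f$ already recorded in the proof of Theorem~\ref{re:threshold-two-elab}: $t\mapsto f(t)/t$ is strictly decreasing on $(0,\infty)$ with limits $1$ as $t\to 0+$ and $0$ as $t\to\infty$. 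Since $1/c\in(0,1)$, the equation $f(x)/x=1/c$, i.e.\ \eqref{eq:extra-fixed-dist}, has a unique solution $x>0$, and $f(y)/y\le 1/c=f(x)/x$ forces $y\ge x$ by strict monotonicity. Hence $J(\infty)=J_0(\infty)+y\ge J_0(\infty)+x$, as claimed. I do not expect any real obstacle beyond the $y=0$ exclusion; one only has to be mildly careful that $\cR_\di>0$ (which $c>1$ guarantees) and about the degenerate case $J_0(\infty)=\infty$.
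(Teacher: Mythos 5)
Your proof is correct and follows essentially the same route as the paper's: subtract $J_0(\infty)$ from the final-size equation, restrict the integral to $[0,J_0(\infty)]$, use monotonicity of $f$ to get $y \ge f(y)\,\cR_\di\int_0^{J_0(\infty)}\sigma(m)\,dm$, and conclude via the strict decrease of $f(y)/y$. Your exclusion of $y=0$ by contradiction is just the contrapositive of the paper's direct positivity argument, and your handling of the degenerate case $J_0(\infty)=\infty$ is a harmless extra precaution.
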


\begin{proof}
Recall (\ref{eq:model-Volt-int-dis}).
Set $y  = J(\infty) - J_0(\infty)$. By (\ref{eq:estim-below1}),
since $f(x) > 0$ for $x >0$,
\begin{equation}
\label{eq:estim-below2}
\int_0^{J_0(\infty)} \sigma(m) f(J_0(\infty) -m)  dm >0.
\end{equation}
and
\begin{equation}
\label{eq:model-Volt-int-mod}
 y= \cR_\di \int_0^{J(\infty)}  \sigma( m)
 f \big (y + J_0(\infty) - m \big ) dm
.
\end{equation}
By \ref{eq:estim-below2}, $y > 0$ and
\begin{equation}
\label{eq:model-Volt-int-ineq}
 y\ge f (y ) \cR_\di \int_0^{J_0(\infty)}  \sigma( m)  dm
.
\end{equation}
Since $f(y)/y$ is a strictly decreasing function of $y>0$, $y \ge  x$ for the unique solution $x$ of \ref{eq:extra-fixed-dist}
which exists by the intermediate value theorem and (\ref{eq:estim-below1}).
\end{proof}


\section{Analysis of an abstract final size equation}
\label{sec:abst-fin-size}

Let $x \in \R_+$ denote the final cumulative force
of infection due to the initial infectives
and $z(x) \in \R_+$ the final cumulative force of
infection due to all infectives. In terms of the previous sections, $x= J_0^\infty$ and $z(x) =
J^\infty$. We have found the following equation,
\begin{equation}
\label{eq:final-size-function}
z(x) = x + F(z(x)), \qquad x \in \R_+.
\end{equation}
Here,  $F: \R_+ \to \R_+$ is continuous, bounded and increasing, $F(0) =0$, with the finite limit
\[
F(\infty) := \lim_{y \to \infty} F(y).
\]
We want to find information about the the shape
of the function $z: \R_+ \to \R_+$.

For fixed resistance $m$, we have
\begin{equation}
\label{eq:F-fixed}
F(y) = \cR_0 f([y-m]_+), \qquad y \in \R_+.
\end{equation}
See Section \ref{sec:final-size}.
For continuously distributed resistance, we employ
\begin{equation}
\label{eq:F-dist}
F(y) = \cR_\di \int_0^y \sigma(m) f(y-m) dm, \qquad y \in \R_+.
\end{equation}
See Section \ref{sec:distributed}. In both cases, $f$ is the Ivlev-Skellam function
\begin{equation}
\label{eq:skellam}
f(y) =1 -e^{-y}, \qquad y \in \R_+,
\end{equation}
and $\cR_0$ and $\cR_\di$ are the basic reproduction number
(without resistance)
given by the identical formulas in (\ref{eq:bas-rep-no}) and (\ref{eq:Xi}).
 In the second case, $\sigma:\R_+ \to \R_+$ is continuous and
its integral is 1, $\cR_\di \in (0,\infty)$.
 Different symbols, $\cR_0$ and $\cR_\di$, are
used because $\cR_0$ will act as a threshold
parameter as expected from a basic reproduction number while $\cR_\di$ will not.

\begin{remark}
\label{re:F-R0}
In both cases,
\begin{equation}
\label{eq:F-dist-R0}
F(y) = \cR_\di \tilde F(y), \qquad y \in \R_+,
\end{equation}
where $\cR_\di = \cR_0 \in (0,\infty)$ is the basic reproduction number
of the disease given by (\ref{eq:Xi})
(it there were no resistance:
Definition \ref{def:R0interpret})
and $\tilde F: \R \to \R$ is bounded, increasing
and continuous and $\tilde F(\infty) =1$.
\end{remark}

In the spirit of \cite{BoDI1},
 we could (though we will not) also consider
\begin{equation}
\label{eq:measure}
F(y) = \int_{\R_+} f([y-m]_+) \mu(dm), \quad y \in \R_+,
\end{equation}
with a finite nonnegative Borel measure $\mu$.
In that case, continuity of $F$ follows from Lebesgue's
theorem of dominated convergence.

To find the solution function $z$ of (\ref{eq:final-size-function}), let $(z_n)$ be a  sequence of functions $z_n:\R_+ \to \R_+$ given recursively by
\begin{equation}
\label{eq:recurs-functions-app}
\left.
\begin{array}{rl}
z_0 (x) =& x
\\
z_n(x) = & F(z_{n-1}(x)) + x , \quad n \in \N
\end{array}
\right \}
\qquad
x \in \R_+.
\end{equation}
It follows by induction that $z_n(x)$ is increasing in both $n \in \Z_+$ and $x \in \R_+$, $z_n(0)=0$. Since $F$ is bounded on $\R_+$, for each $x
\in \R_+$, $(z_n(x))$ is a bounded increasing sequence in $\R_+$
and the limits
\begin{equation}
\label{eq:recursion-limit-app}
z(x) = \lim_{n\to \infty} z_n(x), \qquad x \in \R_+,
\end{equation}
exist and provide an increasing function $z:\R_+ \to \R_+$,
\begin{equation}
z(0)=0, \qquad 0 \le z(x)-x  \le F(\infty), \quad x \in \R.
\end{equation}

In general, the convergence is only pointwise and $z$ is
not continuous. See Theorem \ref{re:density-creates-jump}.
Since $F$ is continuous, we can take limits in
(\ref{eq:recurs-functions-app}),
\begin{equation}
\label{eq:recursion-limit-equation-app}
z(x) = F(z(x)) + x , \quad x \in \R_+.
\end{equation}
It follows from this equation that $z$ is strictly increasing.

\begin{theorem}
\label{re:minimal-sol-app}
For each $x \in \R_+$, $z(x)$ is the minimal solution
of the equation $\tilde z = F(\tilde z) +x$, i.e.,
$z(x) \le \tilde z$ for any solution $\tilde z$ of this
equation.

Further, $z$ is continuous from the left.

For each $x \in \R_+$, $z(x) =x$ if and only if $F(x) =0$.

Finally, $z(x) -x$ is an increasing function of $x \in \R_+$, $z$
is a strictly increasing function, and
\[
0 \le z(x) - x \to F(\infty), \qquad x \to \infty.
\]
\end{theorem}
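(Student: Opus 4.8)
The plan is to read everything off the fixed-point equation \eqref{eq:recursion-limit-equation-app}, the construction of $z$ as the increasing pointwise limit of the iterates $z_n$ in \eqref{eq:recurs-functions-app}--\eqref{eq:recursion-limit-app}, and the monotonicity and continuity properties of $F$ recorded at the start of the section. I would begin with \emph{minimality}. Fix $x \in \R_+$ and let $\tilde z$ be any solution of $\tilde z = F(\tilde z) + x$. Since $F \ge 0$, $z_0(x) = x \le F(\tilde z) + x = \tilde z$; and if $z_{n-1}(x) \le \tilde z$, then monotonicity of $F$ gives $z_n(x) = F(z_{n-1}(x)) + x \le F(\tilde z) + x = \tilde z$. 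Hence $z_n(x) \le \tilde z$ for all $n$ by induction, and letting $n \to \infty$ in \eqref{eq:recursion-limit-app} yields $z(x) \le \tilde z$.

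For \emph{left continuity}, note that $z$ is increasing (the iterates $z_n$ are increasing in $x$, so the pointwise limit is), hence for each $x > 0$ the left limit $z(x-) := \sup_{0 \le y < x} z(y)$ exists and $z(x-) \le z(x)$. Taking $y \uparrow x$ in the identity $z(y) = F(z(y)) + y$ and using continuity of $F$ together with $z(y) \uparrow z(x-)$, one gets $z(x-) = F(z(x-)) + x$; thus $z(x-)$ is itself a solution of the parameter-$x$ final size equation. Minimality then forces $z(x) \le z(x-)$, and combined with $z(x-) \le z(x)$ this gives $z(x-) = z(x)$, i.e.\ $z$ is continuous from the left.

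The \emph{characterization} is elementary: if $F(x) = 0$ then $\tilde z = x$ solves $\tilde z = F(\tilde z) + x$, so $z(x) \le x$ by minimality, while $z(x) = F(z(x)) + x \ge x$ always; hence $z(x) = x$. Conversely, $z(x) = x$ substituted into \eqref{eq:recursion-limit-equation-app} gives $F(x) = 0$. For the \emph{monotonicity and asymptotics}, rewrite \eqref{eq:recursion-limit-equation-app} as $z(x) - x = F(z(x))$; since $z$ is increasing and $F$ is increasing, the right-hand side is increasing in $x$, so $z(x) - x$ is increasing. Strictness of $z$ follows since $z(x_1) = z(x_2)$ would force $x_1 = z(x_1) - F(z(x_1)) = z(x_2) - F(z(x_2)) = x_2$. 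Finally $0 \le z(x) - x = F(z(x)) \le F(\infty)$ because $F$ is increasing with limit $F(\infty)$, and as $x \to \infty$ we have $z(x) \ge x \to \infty$, so $F(z(x)) \to F(\infty)$; that is, $z(x) - x \to F(\infty)$.

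The only step that is not routine is the left continuity: the crux is the observation that the left limit $z(x-)$ again satisfies the parameter-$x$ equation, after which minimality pins it to $z(x)$. Everything else is a direct consequence of the fixed-point equation and the monotonicity of $F$ and of $z$.
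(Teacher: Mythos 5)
Your proof is correct and follows essentially the same route as the paper: induction on the iterates for minimality, the observation that the left limit again solves the parameter-$x$ equation so that minimality pins it to $z(x)$, and direct manipulation of the fixed-point identity for the remaining claims. Your phrasing of left continuity via the monotone left limit $z(x-)=\sup_{y<x}z(y)$ is a slightly cleaner packaging of the paper's subsequence-and-contradiction argument, but the underlying idea is identical.
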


In the epidemic context, we call $z:\R_+ \to \R_+$
the final size curve. It gives the final size
of the cumulative force of infection
as a function of the final size of the cumulative
initial force of infection. In general, there
may be several solution of (\ref{eq:final-size-function}). The minimal solution is the epidemiologically relevant one because it
also is the limit of the cumulative force of infection
as time tends to infinity (as we have shown
in the previous sections). Compare \cite[Sec.3]{Thi77}.

\begin{proof}
Let $x \in \R_+$. By induction, $x \le z_n(x) \le \tilde z$ for any solution
$\tilde z$  where $z_n$ is from the recursion
(\ref{eq:recurs-functions-app}).

By taking the limit as $n \to \infty$,
 (\ref{eq:recursion-limit-equation-app}),
$x \le z(x) \le \tilde z$.

Let $F(x) =0$. Then $x$ is a solution of $x = F(x) +x $.
Since $z(x)$ is the minimal solution, $z(x) =x$.
Further, since $F$ is increasing, $z(x) \ge F(x) + x$.
So, if $z(x) =x$, $F(x) =0$.

Suppose that $z$ is not continuous from the left at $x$.
Then there exists some $\epsilon > 0$ and a sequence
$(x_n)$ with $x_n < x $ for all $n \in \N$ and $x_n \to x$
as $n \to \infty$ and $|z(x_n) - z(x)| > \epsilon$
for all $n \in \N$. Since $z$ is increasing, $z(x_n) < z(x)-\epsilon$ for all $n \in \N$.  After choosing
a subsequence, there is some $\tilde z $ such that $z(x_n) \to \tilde z$ for $n \to \infty$,
$\tilde z \le z(x) -\epsilon$.
Since $F$ is continuous, by (\ref{eq:recursion-limit-equation-app}),
$\tilde z = F(\tilde z) + x$. Since $z(x)$ is the minimal
solution of this equation, we obtain the contradiction $z(x)\le \tilde z$.

Finally $z(x) \to \infty$ as $x \to \infty$ in an increasing
fashion and $z(x) -x = F(z(x)) \to F(\infty)$ as $x \to \infty$.
\end{proof}

\begin{remark}
\label{re:right-inverse}
Let the function $E: \R_+ \to \R$ be defined by
\begin{equation}
\label{eq:functionE}
E(x) = x -F(x), \qquad x \in \R_+.
\end{equation}
  Let $z$ be the minimal solution of (\ref{eq:recursion-limit-equation-app}). Then
\begin{equation}
\label{eq:z-right-inverse}
x = E(z(x)), \qquad x\in \R_+.
\end{equation}
Actually,
\begin{equation}
\label{eq:z-right-inverse-min}
z(x) = \min E^{-1}(x), \quad x \in \R_+,
\end{equation}
with $E^{-1}(x) =\{z \in \R_+; E(z)=x \}$.
Equation (\ref{eq:z-right-inverse}) implies that $E(\R_+)\supseteq \R_+$.
\end{remark}


\subsection{A guiding result}
\label{subsec:guide}

 The next result will serve as a road map for exploring jumps of the final size curve.
It gives necessary and sufficient conditions
for the non-existence of jumps.

\begin{theorem}
\label{re:equivalences}
Let $E:\R_+ \to \R$ be the function defined in (\ref{eq:functionE}) and $z:\R_+\to \R_+$
be the final size curve (\ref{eq:z-right-inverse}).

Then the following are equivalent:

\begin{itemize}
\item[(a)] $E$ is strictly increasing.

\item[(b)] $z$ is surjective.

\item[(c)] $z$ is continuous.
\end{itemize}
\end{theorem}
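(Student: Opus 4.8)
The strategy is to establish the cycle $(a)\Rightarrow(c)\Rightarrow(b)\Rightarrow(a)$. The needed tools are already on hand. On one side, $E$ is continuous because $F$ is, with $E(0)=0$ and $E(x)=x-F(x)\to\infty$ as $x\to\infty$ (since $F$ is bounded). On the other side, from Theorem~\ref{re:minimal-sol-app} and Remark~\ref{re:right-inverse}, $z:\R_+\to\R_+$ is strictly increasing with $z(0)=0$, satisfies $E(z(x))=x$ and $z(x)=\min E^{-1}(x)$ for all $x\in\R_+$, and obeys $0\le z(x)-x\to F(\infty)$, so in particular $z(x)\to\infty$ as $x\to\infty$.

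For $(a)\Rightarrow(c)$: if $E$ is strictly increasing, then $E$ is injective, it is nonnegative on $\R_+$ (as $E(0)=0$), and by the intermediate value theorem its range is the interval $[0,\infty)=\R_+$; hence $E:\R_+\to\R_+$ is a continuous strictly increasing bijection, whose inverse is continuous. Injectivity of $E$ forces $E^{-1}(x)$ to be the singleton $\{z(x)\}$, so $z=E^{-1}$, which is therefore continuous. For $(c)\Rightarrow(b)$: if $z$ is continuous, it is a continuous strictly increasing function on the interval $[0,\infty)$ with $z(0)=0$ and $z(x)\to\infty$, so its range is exactly $[0,\infty)=\R_+$; that is, $z$ is surjective.

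For $(b)\Rightarrow(a)$: if $z$ is surjective, then, being also injective (it is strictly increasing), it is a bijection of $\R_+$ onto $\R_+$, so $z^{-1}:\R_+\to\R_+$ is defined and strictly increasing. Substituting $x=z^{-1}(y)$ into $E(z(x))=x$ gives $E(y)=z^{-1}(y)$ for every $y\in\R_+$, hence $E=z^{-1}$ is strictly increasing. Chaining the three implications proves the equivalence of $(a)$, $(b)$, $(c)$.

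I do not expect a genuine obstacle: each implication is a short topological argument once the boundary behaviour of $E$ and $z$ is recorded. The only point requiring a little care is the bookkeeping about ranges --- in $(a)\Rightarrow(c)$ one must genuinely verify that $E$ is onto all of $\R_+$ (not merely onto a proper subinterval), so that $z=E^{-1}$ is well defined and continuous on $\R_+$, and symmetrically in $(b)\Rightarrow(a)$ one needs the identity $E=z^{-1}$ to hold on the whole of $\R_+$; both follow from $E(0)=0$, $E(x)\to\infty$, and $E(\R_+)\supseteq\R_+$ as noted in Remark~\ref{re:right-inverse}.
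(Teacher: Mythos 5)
Your proposal is correct and follows essentially the same route as the paper: both establish the equivalences as a cycle using the identity $E(z(x))=x$ from Remark \ref{re:right-inverse}, the strict monotonicity of $z$, and the boundary behaviour $E(0)=0$, $E(x)\to\infty$. The only cosmetic difference is that for (a)$\Rightarrow$(c) the paper runs a sequential contradiction argument, whereas you invoke the standard fact that a continuous strictly increasing bijection of $\R_+$ onto itself has a continuous inverse; the two are interchangeable.
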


\begin{proof}
Assume (a). Since $E(0) =0$, $E(\R_+) \subseteq \R_+$ and $E(\R_+) = \R_+$ by Remark \ref{re:right-inverse}.
Suppose that $z$ is not continuous.
Then there exists some $x \in \R_+$ and
a sequence $(x_n)$ in $\R$ such that $x_n
\to x$ as $n \to \infty$, but $(z(x_n))$
does not converge to $z(x)$ as $n \to \infty$.
Since $z$ is increasing, the sequence $(z(x_n))$
is bounded. After choosing a subsequence, still  $
x_n \to x$, but $z(x_n) \to \tilde z \ne z(x)$
as $n \to \infty$. Since $E$ is continuous
and strictly increasing, $E(z(x_n)) \to E(\tilde z)
\ne E(z(x))$. By (\ref{eq:z-right-inverse}),
$x_n \to E(\tilde z) \ne x$, a contradiction.

Thus $z$ is surjective and continuous, i.e., (a) and (c) hold.

Assume (b). Let $0 \le z_1< z_2 < \infty$.
Since $z$ is surjective, there exist
$x_1,x_2 \in \R_+$ such that $z(x_i) = z_i$
for $j =1,2$. Since $z$ is strictly increasing,
$x_1 < x_2$. By  (\ref{eq:z-right-inverse}),
$E(z_1) =x_1 < x_2 = E(z_2)$. Thus, $E$ is
strictly increasing and (a) holds.

Assume (c). Since $z(0)=0$ and $z(x) \to \infty$
as $x \to \infty$, $z$ is surjective as a
consequence of the intermediate value theorem
and (b) holds.
\end{proof}

\begin{corollary}
\label{re:jump-equiv}
Let $E:\R_+ \to \R$ be the function defined in (\ref{eq:functionE}) and $z:\R_+\to \R_+$
be the final size curve (\ref{eq:z-right-inverse}).
Then $z$ makes a jump if and only if
$E$ is not strictly increasing.

If $E$ is strictly increasing, $z$ is the inverse
function of $E$.
\end{corollary}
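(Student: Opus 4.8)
The plan is to deduce this directly from Theorem \ref{re:equivalences}, which already establishes the equivalence of ``$E$ strictly increasing'', ``$z$ surjective'', and ``$z$ continuous''. For the first assertion, observe that the statement ``$z$ makes a jump'' is precisely the negation of ``$z$ is continuous'' (recall $z$ is increasing by Theorem \ref{re:minimal-sol-app}, and an increasing function fails to be continuous exactly when it has a jump discontinuity). So ``$z$ makes a jump'' $\iff$ ``$z$ is not continuous'' $\iff$ (by the contrapositive of (c)$\Leftrightarrow$(a) in Theorem \ref{re:equivalences}) ``$E$ is not strictly increasing''. That closes the first half.

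For the second assertion, suppose $E$ is strictly increasing. By Remark \ref{re:right-inverse}, $E(\R_+)\supseteq\R_+$, and since $E(0)=0$ with $E$ increasing we have $E(\R_+)\subseteq\R_+$, hence $E:\R_+\to\R_+$ is a bijection; in particular it has a well-defined inverse $E^{-1}:\R_+\to\R_+$. Equation (\ref{eq:z-right-inverse}), $x = E(z(x))$ for all $x\in\R_+$, says exactly that $z$ is a right inverse of $E$; composing on the left with $E^{-1}$ gives $z(x)=E^{-1}(x)$ for all $x$. (Equivalently, one can invoke (\ref{eq:z-right-inverse-min}): when $E$ is injective, $E^{-1}(x)$ is a singleton, so its minimum is $E^{-1}(x)$ itself.) Thus $z$ is the inverse function of $E$, as claimed.

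I do not anticipate a genuine obstacle here, since the corollary is essentially a restatement of Theorem \ref{re:equivalences} together with the elementary fact relating monotonicity, continuity, and jumps. The only point requiring a word of care is the bookkeeping that ``not continuous'' is the same as ``makes a jump'' for a monotone function — this uses that a bounded increasing function on $\R_+$ has one-sided limits everywhere, so any discontinuity is a genuine jump. Everything else is formal manipulation of (\ref{eq:z-right-inverse}).
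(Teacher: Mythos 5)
Your proposal is correct and matches the paper's intent: the corollary is stated without a separate proof precisely because it follows, as you argue, from the equivalence (a)$\Leftrightarrow$(c) of Theorem \ref{re:equivalences} (together with the fact that an increasing, finite-valued function fails to be continuous exactly at jump discontinuities) and from Remark \ref{re:right-inverse} for the inverse-function statement. The only cosmetic quibble is that $z$ is not globally bounded ($z(x)\to\infty$); local finiteness of the increasing function $z$ is what guarantees one-sided limits and hence that any discontinuity is a jump.
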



\subsection{Fixed resistance revisited}
\label{exp:fixed}
Let us return to fixed resistance $m$ where
$F$ is given by (\ref{eq:F-fixed}),
\[
E(y) = y -\cR_0 \big (1 -e^{-[y-m]_+}\big), \qquad y \in \R_+.
\]
Then $E$ is differentiable except at $m$,
\[
\begin{array}{ll}
 E'(y) = 1, & 0\le y < m,
\\
 E'(y) = 1 - \cR_0 e^{m-y}, & y > m.
\end{array}
\]
Here $\cR_0$ is the basic reproduction number
without resistance as given by (\ref{eq:bas-rep-no}).

 $E$ is  strictly increasing on $\R_+$ if and only $\cR_0\le 1$. By Corollary \ref{re:jump-equiv}, $z$ makes a jump if and only if
$\cR_0 > 1$. By Theorem \ref{re:threshold-two-elab},
we also know that the jump occurs at $m$, and we have a lower estimate of its size.
For $m=0$, the case of the original Kermack/McKendrick model, the jump of the
final size curve is at zero
if $\cR_0 >1$ (as it is well-known).

Let $m > 0$ and $\cR_0 \le 1$. Then $E$ is strictly
increasing and convex on $\R_+$ with the convexity
being strict on $[m, \infty)$, $z(x) = E^{-1}(x)$
for $x \ge 0$. The inverse function $z$
is strictly increasing and concave.

Let $m > 0$ and $\cR_0 > 1$. Then $E$ is increasing
from 0 to $m$ as a straight line with $E(0) =0$
and $E(m)=m$. $E$ is strictly convex on
$[m, \infty)$ because $E'$ is strictly increasing
on that interval.
$E$ is strictly decreasing on $[m, m - \ln (1/\cR_0)]$ and strictly increasing on $\big [m - \ln (1/\cR_0), \infty \big )$ with $E(y) \to \infty$ as
$y \to \infty$. There is a unique $y_1 > m$
such that $E(y_1) = m$. $E$ is injective on
$[0,m]\cup (y_1, \infty)$. By (\ref{eq:z-right-inverse-min}),
\begin{equation}
\label{eq:final-size-curve-fixed}
z(x) = \left \{
\begin{array}{cc}
x , &  0 \le x \le m,
\\
E^{-1} (x) > y_1, & x >m .
\end{array}
\right .
\end{equation}
The final size curve $z$ jumps from $m$ to $y_1>m$ at $m$ and is concave on $(m,\infty)$ because
$E$ is convex on $(y_1,\infty)$.

The final cumulative force of infection, denoted as $z(x)$, is illustrated for  fixed resistance  $m=3$ and four values of $\mathcal{R}_0$ in Figure \ref{fig:Fixed-distrib}. We did not place this figure here but, for easier comparison, close to Figure \ref{fig:Gamma-distrib} which displays the
final cumulative force of infection for Gamma
distributed resistance with mean resistance 3.
Figure \ref{fig:Fixed-distrib} shows how the sizes of the  jumps,
 which occur at the resistance value $m=3$,
depend on the basic reproduction number $\cR_0$.


\subsection{More on jumps}
\label{subsec:abs-jumps}

As in Section \ref{exp:fixed}, we would not only
like to know whether a jump occurs but also
where is occurs and how large it is.

In the following, we explore various scenarios
which may overlap and not cover all possibilities.

Remember the function $E: \R_+ \to \R$  defined by
$E(x) = x -F(x)$. See Remark \ref{re:right-inverse} and
\begin{equation}
\label{eq:discon}
z(x) = \min E^{-1}(x), \quad x \in \R_+,
\end{equation}
with $E^{-1}(x) =\{y \in \R_+; E(y)=x \}$.

\begin{theorem}
\label{re:sol-nojumps-app}
Assume that $F$ is  differentiable on $\R_+$
and $F'(x) <1$ for all $x \in \R$.
Then the final size curve $z: \R_+ \to \R_+$
satisfying (\ref{eq:recursion-limit-equation-app}),
$z(0) =0$, is the inverse function of $E$, is  differentiable
and satisfies the differential equation,
\begin{equation}
\label{eq:final-size-curve-der}
z'(x) = (E'(z(x))^{-1}= (1- F'(z(x)))^{-1}, \qquad x \in \R_+.
\end{equation}
\end{theorem}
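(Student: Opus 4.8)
The plan is to extract everything from the sign hypothesis $F'<1$ together with the guiding results of Section~\ref{subsec:guide}. First I would observe that $E'(y)=1-F'(y)>0$ for every $y\in\R_+$, so $E$ is strictly increasing on $\R_+$. By Theorem~\ref{re:equivalences} this already forces $z$ to be continuous (and surjective), and by Corollary~\ref{re:jump-equiv} it identifies $z$ as the inverse function of $E$. If one prefers a self-contained argument, note that $E(0)=0$, that $E$ is continuous and strictly increasing, and that $E(y)\ge y-F(\infty)\to\infty$ as $y\to\infty$; hence $E:\R_+\to\R_+$ is a continuous bijection, and (\ref{eq:z-right-inverse}) gives $z=E^{-1}$ directly (compare Remark~\ref{re:right-inverse}).

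Next I would establish differentiability and the formula. Since $E$ is a continuous strictly increasing bijection of $\R_+$ and is differentiable at every point with $E'(y)=1-F'(y)\neq 0$, the classical rule for the derivative of an inverse function applies pointwise: for $x\in\R_+$ put $y=z(x)$, so $E(y)=x$; then $z=E^{-1}$ is differentiable at $x$ with
\[
z'(x)=\frac{1}{E'(z(x))}=\frac{1}{1-F'(z(x))},
\]
which is exactly (\ref{eq:final-size-curve-der}). The continuity of $z$ established in the first step is what makes the difference-quotient argument for the inverse go through ($x_n\to x$ forces $z(x_n)\to z(x)$), and note that continuity of $F'$ is \emph{not} needed here---only differentiability of $F$ and the strict inequality $F'<1$. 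As a consistency check, differentiating the fixed-point equation $z(x)=F(z(x))+x$ once differentiability is known gives $z'(x)=F'(z(x))z'(x)+1$, i.e.\ $z'(x)\bigl(1-F'(z(x))\bigr)=1$, recovering the same formula.

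The only delicate point---and the main (rather minor) obstacle---is verifying that the hypotheses of the cited results hold, in particular that $E$ maps $\R_+$ \emph{onto} $\R_+$, so that $z$ is globally, not merely locally, the inverse of $E$. This is handled by $E(0)=0$, monotonicity, continuity, and the bound $E(y)\ge y-F(\infty)$, exactly as in Remark~\ref{re:right-inverse}. Everything else is a direct application of Theorem~\ref{re:equivalences}, Corollary~\ref{re:jump-equiv}, and the elementary inverse-function derivative formula.
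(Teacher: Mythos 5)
Your proposal is correct and follows essentially the same route as the paper: observe that $E'=1-F'>0$ makes $E$ strictly increasing, invoke Corollary~\ref{re:jump-equiv} to identify $z=E^{-1}$, and apply the one-dimensional inverse function theorem. You merely spell out the details (surjectivity of $E$ onto $\R_+$, and that continuity of $F'$ is not needed) that the paper's very terse proof leaves implicit.
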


We mention that the assumption for $F$ is
satisfied if $F(y) = \cR_\di \tilde F(y)$ for all $y
\in \R$ (Remark \ref{eq:F-dist-R0})
and $\tilde F$ has a bounded derivative and
$\cR_\di < ( \sup \tilde F'(\R_+))^{-1}$.

\begin{proof} It follows that $E$ is differentiable and $E'(x) >0$ for all $x \in \R$.
So $E$ is strictly increasing and $z$ is the inverse function of $E$. Apply the one-dimensional
inverse function theorem.
\end{proof}

\begin{theorem}
\label{re:sol-jumps01}
Assume that $F$ is bounded on $\R_+$.
Let there are $0 \le z_1 < z_2 < \infty$
such that $E(z_1) > E(z_2)$.
Let
\begin{equation}
\label{eq:jump-point}
x_1 = \max E([0,z_2]).
\end{equation}
 Then $x_1 \ge 0$,
$z$ is discontinuous at $x_1$ and $z(x_1) < z_2$
and $z(x) \ge z_3$ for $x \in (x_1,\infty)$
where $z_3$ is the smallest $y> z_2$ such that
$E(y) =x_1$.
\end{theorem}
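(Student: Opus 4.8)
The plan is to exploit the characterization $z(x)=\min E^{-1}(x)$ from Remark~\ref{re:right-inverse} together with the fact that $z$ is strictly increasing with $z(x)\to\infty$, so that $z$ can fail to be surjective only by jumping over a gap in $E(\R_+)$. The hypothesis $E(z_1)>E(z_2)$ with $z_1<z_2$ says $E$ is not monotone, so by Corollary~\ref{re:jump-equiv} $z$ does jump somewhere; the work is to locate the jump at the explicitly given point $x_1=\max E([0,z_2])$ and to pin down the bounds on $z(x_1)$ and on $z(x)$ just past $x_1$.

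First I would establish the basic facts about $x_1$. Since $E$ is continuous and $[0,z_2]$ is compact, the maximum defining $x_1$ is attained, say $E(\zeta)=x_1$ for some $\zeta\in[0,z_2]$; and $x_1\ge E(0)=0$ because $F(0)=0$. Note $x_1\ge E(z_1)>E(z_2)$, so the maximum is \emph{not} attained at $z_2$, hence $\zeta<z_2$ and in particular $E(y)<x_1$ is false only for $y$'s that we must examine carefully: by definition of the max, $E(y)\le x_1$ for all $y\in[0,z_2]$ with equality possible. Next define $z_3$ as the smallest $y>z_2$ with $E(y)=x_1$; such a $y$ exists and is $>z_2$ because $E(z_2)<x_1$ while $E(y)=y-F(y)\to\infty$ as $y\to\infty$ (here $F$ bounded is used), so by the intermediate value theorem $E$ hits the value $x_1$ somewhere on $(z_2,\infty)$, and the infimum of that closed nonempty set is attained by continuity; strictness $z_3>z_2$ follows since $E(z_2)\ne x_1$.

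Then I would compute $z(x_1)$ and the one-sided behaviour. For $z(x_1)=\min E^{-1}(x_1)$: any $y\in[0,z_2]$ with $E(y)=x_1$ witnesses $z(x_1)\le\zeta<z_2$, so $z(x_1)<z_2$. To see $z$ is discontinuous at $x_1$ I would argue that no value of $E$ on $(z(x_1),z_3)$ exceeds $x_1$ strictly on the relevant piece but $E$ drops below $x_1$ right after $z(x_1)$ and only returns to $x_1$ at $z_3$; more precisely, pick $x>x_1$. By Remark~\ref{re:right-inverse}, $E(z(x))=x>x_1$, and since $z$ is strictly increasing with $z(x_1)<z_2$, I claim $z(x)\ge z_3$. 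Indeed $z(x)>z(x_1)$; if $z(x_1)<z(x)<z_3$ we would need $E(z(x))=x>x_1$ with $z(x)$ lying either in $(z(x_1),z_2]$—where $E\le x_1$ by maximality of $x_1$ on $[0,z_2]$, contradiction—or in $(z_2,z_3)$, where $E<x_1$ by minimality of $z_3$ (combined with continuity and $E(z_2)<x_1$, $E$ stays below $x_1$ on the whole open interval), again a contradiction. Hence $z(x)\ge z_3>z_2>z(x_1)$ for every $x>x_1$, so $\lim_{x\to x_1+}z(x)\ge z_3>z(x_1)$, proving the discontinuity and simultaneously the lower bound $z(x)\ge z_3$ on $(x_1,\infty)$.

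The main obstacle I anticipate is the case analysis needed to show $E<x_1$ throughout the open interval $(z_2,z_3)$ rather than merely $E(z_2)<x_1$ and $E(z_3)=x_1$: one must rule out $E$ exceeding $x_1$ and coming back down before $z_3$. This is handled by the definition of $z_3$ as the \emph{smallest} such $y$ together with continuity of $E$—if $E$ took a value $>x_1$ somewhere in $(z_2,z_3)$ then, since $E(z_2)<x_1$, the intermediate value theorem would produce a crossing of $x_1$ strictly before $z_3$, contradicting minimality. A secondary subtlety is that $x_1$ could a priori equal $0$ (if $E\le 0$ on $[0,z_2]$), but the statement already only asserts $x_1\ge0$, and all the arguments above go through verbatim in that degenerate case, so nothing extra is needed.
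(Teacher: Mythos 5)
Your proof is correct and follows essentially the same route as the paper's: locate the attained maximum $x_1$ on the compact interval $[0,z_2]$, use $z(x_1)=\min E^{-1}(x_1)\le\zeta<z_2$, define $z_3$ as the first crossing of the level $x_1$ beyond $z_2$ (so $E<x_1$ on $[z_2,z_3)$), and conclude $z(x)\ge z_3$ for $x>x_1$ from $E(z(x))=x>x_1$. The only cosmetic difference is that you argue $E<x_1$ on $(z_2,z_3)$ via minimality plus the intermediate value theorem, whereas the paper builds this into the choice of $z_3$ directly; the content is identical.
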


\begin{remark} Let $F$ satisfy Remark \ref{eq:F-dist-R0}. Then $E(y) = y - \cR_\di \tilde F(y)$. Let there are $ 0 \le z_1 < z_2 < \infty$
such that $\tilde F(z_1) > \tilde F(z_2)$.
Then the assumption for $F$ in Theorem \ref{re:sol-jumps01}
is satisfied if $\cR_\di $ is large enough,
\[
\cR_\di > \frac{z_2- z_1}{\tilde F(z_2) - \tilde F(z_1)}.
\]
\end{remark}

\begin{proof}
Since $E$ is continuous and $E(0) \ge 0$,
$x_1$ in (\ref{eq:jump-point}) exists and $x_1 \ge 0$.
Further, by definition,  $E(y) \le x_1$ for all $y \in [0,z_2]$. Since $E$ is continuous, there
is some $y\in [0,z_2]$ such that $E(y) =x_1$.
Since $x_1 \ge E(z_1) > E(z_2)$, $y\in [0,z_2)$. By (\ref{eq:discon}), $z(x_1) \le y <z_2$.
Further, $x_1 > E(z_2)$. Since $E$ is continuous
and $E(\tilde y) \to \infty$ as $\tilde y \to \infty$, there is some $z_3 > z_2$ such that $E(z_3)=x_1$ and $E(\tilde y) < x_1$ for $\tilde y
\in [z_2,z_3)$.

Let $x > x_1$. Then $E(y) <x$ for all $y \in [0,z_3]$. Since $E(z(x)) =x$ for all $x \in \R_+$,
$z(x) \ge z_3$.
\end{proof}

\begin{theorem}
\label{re:sol-jumps}
Assume that $F$ is bounded on $\R_+$.
Assume that $F$ is differentiable at some point
$z_2 \in (0,\infty)$ and $F'(z_2) > 1$.
 Then the final curve $z$
is discontinuous at some point $x_1 \in [0,z_2)$.
\end{theorem}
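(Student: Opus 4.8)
The plan is to deduce the statement from Theorem~\ref{re:sol-jumps01} by exhibiting a point just to the left of $z_2$ at which $E$ has already dropped below $E(z_2)$.

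First I would record that, since $F$ is differentiable at $z_2$, so is $E(y) = y - F(y)$, with $E'(z_2) = 1 - F'(z_2) < 0$. From the definition of the derivative there is then a $\delta \in (0,z_2)$ such that the difference quotient $\big(E(z_2+h)-E(z_2)\big)/h$ is negative for every $h \in (-\delta,0)$; multiplying by $h<0$ gives $E(z_2+h) > E(z_2)$ for all such $h$. Fixing one such $h$ and putting $z_1 = z_2+h$ yields $0 < z_1 < z_2$ with $E(z_1) > E(z_2)$. Here it matters that $z_2 > 0$, so that $z_1$ can be kept in $\R_+$.

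Next I would apply Theorem~\ref{re:sol-jumps01} to this pair $z_1 < z_2$. Since $F$ (hence $E$) is bounded and continuous on $\R_+$, the theorem gives that the final size curve $z$ is discontinuous at the point $x_1 := \max E([0,z_2]) \ge 0$. It then remains to upgrade $x_1 \ge 0$ to $x_1 \in [0,z_2)$. Because $E$ is continuous and $[0,z_2]$ is compact, the maximum is attained, say $x_1 = E(y^\ast)$ with $y^\ast \in [0,z_2]$. Since $x_1 \ge E(z_1) > E(z_2)$, the maximizer cannot be $z_2$, so $y^\ast \in [0,z_2)$; and since $F \ge 0$ we get $x_1 = E(y^\ast) = y^\ast - F(y^\ast) \le y^\ast < z_2$. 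Hence $z$ is discontinuous at a point $x_1 \in [0,z_2)$.

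The argument is short and the only step needing a little care is the last one: the bound $x_1 \le z_2$ is immediate from $F \ge 0$, but the strict inequality $x_1 < z_2$ requires observing that the maximizer of $E$ over $[0,z_2]$ is forced to lie strictly in the interior by $E(z_1) > E(z_2)$. Everything else is a direct invocation of Theorem~\ref{re:sol-jumps01} together with the elementary consequence $E'(z_2)<0$ of the hypothesis $F'(z_2)>1$.
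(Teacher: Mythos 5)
Your proposal is correct and follows essentially the same route as the paper: deduce $E'(z_2)<0$ from $F'(z_2)>1$, use the definition of the derivative to produce $z_1<z_2$ with $E(z_1)>E(z_2)$, and invoke Theorem~\ref{re:sol-jumps01}. You are in fact slightly more explicit than the paper in verifying that the discontinuity point satisfies $x_1<z_2$ (via $x_1=E(y^\ast)\le y^\ast<z_2$), a detail the paper leaves implicit.
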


\begin{remark}
\label{re:point-discon}
Remember the function $E: \R_+ \to \R$  defined by
$E(x) = x -F(x)$. See Remark \ref{re:right-inverse}.
Then $x_1= \max E ([0,z_2])$
is a point of discontinuity for the
final size curve $z$. More precisely,
$z(x_1) < z_2$ and, for all $x > x_1$,
 $z(x) \ge z_3$ where $z_3$ is the smallest $y > z_2$ with $E(y)=x_1$. See
Theorem \ref{re:sol-jumps01}.
Further, $x_1$ is a local maximum of $E$
and  $z(x_1)$ is the smallest point in $[0,z_2]$ at which $E$ takes this local maximum.
\end{remark}

\begin{proof}
Recall the function $E: \R_+ \to \R$  defined by
$E(x) = x -F(x)$. See Remark \ref{re:right-inverse}.
Since $F'(z_2) >1$, $E'(z_2)<0$.
Then there exists some $\delta \in (0,z_2)$ such
that
\begin{equation}
\label{eq:no-extreme}
\begin{split}
& E(z) > E(z_2), \qquad z \in (z_2 - \delta, z_2),
\\
& E(z) < E(z_2), \qquad z \in (z_2, z_2 +\delta).
\end{split}
\end{equation}
See \cite[Cor.5-5]{Kir}, e.g.
In particular, $E$ is not increasing and $z$
makes a jump somewhere by Corollary \ref{re:jump-equiv}.
 The claim of  Theorem \ref{re:sol-jumps} now follows from
Theorem \ref{re:sol-jumps01} and the claim of Remark \ref{re:point-discon} follows from Theorem \ref{re:sol-jumps01}.
\end{proof}

\begin{remark}
\label{re:threshold-jumps-exist}
If $F$ is differentiable on $\R_+$,
\begin{equation}
\label{eq:threshold-jumps-exist}
\cT = \sup F'(\R_+)
\end{equation}
is a threshold between existence and nonexistence
of jumps of the final size curve: no jumps if $\cT<1$ (Theorem \ref{re:sol-nojumps-app})
and at least one jump if $\cT > 1$
(Theorem \ref{re:sol-jumps}).
\end{remark}

\begin{theorem}
\label{re:sol-jumps02}
Assume that $F$ is bounded on $\R_+$ and
$E$ is  increasing but not strictly increasing.
 Then there exist  $0 \le z_1 < z_2 < \infty$
such that $E$ is constant on $[z_1,z_2]$.
Let $x_1 = \max E([0,z_2])$. Then $x_1 \ge 0$
and
$z$ is discontinuous at $x_1$ and $z(x_1) \le z_1$
and $z(x) > z_2$ for $x \in (x_1,\infty)$.
\end{theorem}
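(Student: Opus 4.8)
\textbf{Proof proposal for Theorem \ref{re:sol-jumps02}.}

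The plan is to reduce the statement to Theorem \ref{re:sol-jumps01} exactly as was done in the proof of Theorem \ref{re:sol-jumps}, after first producing from the hypothesis the two points where $E$ fails to be injective. First I would use the assumption that $E$ is increasing but not strictly increasing: by definition there exist points $u < v$ in $\R_+$ with $E(u) = E(v)$, and since $E$ is increasing it must in fact be constant on the whole interval $[u,v]$. Set $z_1 = u$ and $z_2 = v$; then $E$ is constant on $[z_1,z_2]$ and in particular $E(z_1) = E(z_2)$ with $z_1 < z_2$.

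Next I would define $x_1 = \max E([0,z_2])$, which exists and is nonnegative because $E$ is continuous, $[0,z_2]$ is compact, and $E(0) = 0 \ge 0$. Because $E$ is increasing and constant on $[z_1,z_2]$, the maximum of $E$ over $[0,z_2]$ is attained at every point of $[z_1,z_2]$, so $x_1 = E(z_1) = E(z_2)$. From $E(y) \le x_1$ for all $y \in [0,z_2]$ together with $E(z(x_1)) = x_1$ (Remark \ref{re:right-inverse}, equation (\ref{eq:z-right-inverse})) and $z(x_1) = \min E^{-1}(x_1)$, the smallest preimage of $x_1$ lies in $[0,z_1]$ — indeed if it were in $(z_1,z_2]$ then $z_1$ itself would be a strictly smaller preimage, a contradiction — so $z(x_1) \le z_1$. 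For the behaviour to the right of $x_1$: since $E$ is continuous with $E(y) \to \infty$ as $y \to \infty$, and $E \le x_1$ on $[0,z_2]$ with $E(z_2) = x_1$, there is a smallest $z_3 > z_2$ with $E(z_3) = x_1$; then for any $x > x_1$ we have $E(y) < x$ for all $y \in [0,z_3]$ (on $[0,z_2]$ because $E \le x_1 < x$ there, and on $[z_2,z_3)$ by minimality of $z_3$ together with continuity), whence $z(x) = E(z(x)) \ge z_3 > z_2$. Finally, the strict inequality $z(x_1) < z(x)$ for $x > x_1$ shows $z$ is discontinuous at $x_1$; alternatively one invokes Corollary \ref{re:jump-equiv}, since $E$ is not strictly increasing the curve $z$ jumps, and the argument just given pins the jump at $x_1$.

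I do not anticipate a genuine obstacle here: every ingredient — compactness giving the maximum, continuity plus the growth of $E$ at infinity giving $z_3$, and the minimal-solution characterisation $z(x) = \min E^{-1}(x)$ — is already available, and the only subtlety is the bookkeeping that distinguishes $\le z_1$ from $\le z_2$ and $> z_2$ from $\ge z_2$, which is handled by using minimality of the preimage on the left and minimality of $z_3$ on the right. In effect this theorem is the ``plateau'' companion of Theorem \ref{re:sol-jumps01}, and the proof is a transcription of the relevant part of the proof of that theorem with the two witness points $z_1, z_2$ now coming from constancy of $E$ rather than from a strict decrease.
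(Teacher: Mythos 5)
Your proposal follows the paper's own (very terse) argument: $x_1=E(z_1)=E(z_2)$ because $E$ is increasing and constant on $[z_1,z_2]$, $z(x_1)\le z_1$ from $z(x_1)=\min E^{-1}(x_1)$ and $z_1\in E^{-1}(x_1)$, and $z(x)>z_2$ for $x>x_1$ from $E(z(x))=x>x_1\ge E(y)$ on $[0,z_2]$; the discontinuity at $x_1$ then follows from $z(x_1)\le z_1<z_2<z(x)$. Your preliminary step extracting the constancy interval from ``increasing but not strictly increasing'' is fine and is implicitly assumed in the paper.

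The one piece that does not survive the transcription from Theorem \ref{re:sol-jumps01} is the $z_3$ detour. There you had $E(z_2)<x_1$ strictly, which is what guarantees (via the intermediate value theorem) a smallest $z_3>z_2$ with $E(z_3)=x_1$ and gives $E<x_1$ on $[z_2,z_3)$. Here $E(z_2)=x_1$ and $E$ is increasing, so $E(y)\ge x_1$ for all $y\ge z_2$: the set $\{y>z_2:\,E(y)=x_1\}$ is either empty (if $E$ increases strictly past $z_2$) or a half-open interval with no minimum (if the plateau extends beyond $z_2$), so ``the smallest $z_3>z_2$ with $E(z_3)=x_1$'' need not exist, and the claim $E(y)<x$ on $[z_2,z_3)$ fails for $x$ just above $x_1$. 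Consequently the asserted lower bound $z(x)\ge z_3$ is unjustified (and is not part of the theorem's conclusion anyway). Simply delete that detour: the statement you actually need, $z(x)>z_2$, already follows from the correct half of your sentence, namely $E(y)\le x_1<x$ for all $y\in[0,z_2]$ combined with $E(z(x))=x$.
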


\begin{remark}
\label{re:sol-jumps02-rare}
 If $F$ is of the form in
Remark \ref{re:F-R0}, the assumption for $F$
in Theorem \ref{re:sol-jumps02} is satisfied
if and only if $\tilde F(y) - \tilde F(z_1) = \cR_\di^{-1} (y-z_1)$ for all $y \in [z_1, z_2]$.
So an $\cR_\di$ for which this theorem applies
 only exists for exceptional $\tilde F$ and is
 uniquely determined.
 \end{remark}

\begin{proof}
Since $E$ is increasing, $x_1= E(z_1) = E(z_2)$.
By (\ref{eq:discon}), $z(x_1) \le z_1$.
Further, $E(y) \le x_1$ for all $y \in [0,z_2]$.
Let $x > x_1$. Since $E(z(x)) =x $ for all $x \in
\R_+$, $z(x) > z_2$.
\end{proof}

Finally, we explore how the final size curve behaves at 0.

\begin{theorem}
\label{re:behav-zero}
Assume that $F$ is twice continuously differentiable in a neighborhood
of 0 and
and $F'(0) <1$ and that $F'$ is differentiable at 0.
Then the  final size curve $z$ is continuously differentiable in a neighborhood 0 and $z'$ is differentiable at 0. Further
\[
\frac{z(x) - x}{x^2} \to (1/2) (1- F(0))^{-3} F^{\prime \prime}(0), \qquad x \to 0.
\]
\end{theorem}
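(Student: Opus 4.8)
The plan is to work locally near $0$, identify the final size curve $z$ with the inverse of $E=\mathrm{id}-F$ on a small interval $[0,\delta]$, and then perform a second-order expansion of that inverse.

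\emph{Localisation and identification of $z$ with $E^{-1}$.} Since $F$ is continuously differentiable near $0$ with $F'(0)<1$, I would pick $\delta>0$ and $q\in(0,1)$ with $F'(y)\le q$ on $[0,\delta]$; then $E'=1-F'\ge 1-q>0$ there, so $E$ is strictly increasing on $[0,\delta]$ and $E(0)=0$. This is the situation of Theorem~\ref{re:sol-nojumps-app}, except that the bound on $F'$ is only local, so one must check that $z(x)$ actually stays in $[0,\delta]$ for small $x$. For this I would invoke the minimality of $z$ (Theorem~\ref{re:minimal-sol-app}): for $0\le x\le(1-q)\delta$ the map $g_x(y)=F(y)+x$ sends $[0,\delta]$ into itself (as $F(\delta)\le q\delta<\delta$) and is a $q$-contraction by the mean value theorem, hence has a unique fixed point $\tilde z(x)\in[0,\delta]$, and $\tilde z(x)=F(\tilde z(x))+x\le q\,\tilde z(x)+x$ forces $\tilde z(x)\le x/(1-q)\to 0$. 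Since $z(x)$ is the minimal solution of $\tilde z=F(\tilde z)+x$, we get $z(x)\le\tilde z(x)$, so $z(x)\in[0,\delta]$ and then $z(x)=\tilde z(x)$ by uniqueness there. Hence $z$ equals the inverse of $E|_{[0,\delta]}$ on a neighbourhood of $0$, and the one-dimensional inverse function theorem gives that $z$ is $C^1$ near $0$, with
\[
z'(x)=\bigl(1-F'(z(x))\bigr)^{-1}, \qquad z(0)=0,\quad z'(0)=\bigl(1-F'(0)\bigr)^{-1}.
\]

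\emph{The second derivative at $0$.} By hypothesis $F'$ is differentiable at $0=z(0)$, and $z$ is differentiable at $0$, so $x\mapsto F'(z(x))$ is differentiable at $0$ with derivative $F''(0)\,z'(0)$; composing with $h(u)=(1-u)^{-1}$, which is differentiable at $u_0=F'(0)<1$ with $h'(u_0)=(1-u_0)^{-2}$, shows $z'=h\circ(F'\circ z)$ is differentiable at $0$ with
\[
z''(0)=h'(F'(0))\,F''(0)\,z'(0)=\frac{F''(0)}{(1-F'(0))^{3}}.
\]

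\emph{Conclusion and the main point.} Since $z$ is differentiable in a neighbourhood of $0$ and $z'$ is differentiable at $0$, Taylor's theorem with Peano remainder yields $z(x)=z'(0)\,x+\tfrac12 z''(0)\,x^2+o(x^2)$ as $x\to 0$. Now $F(0)=0$ gives $z(0)=0$; moreover $F'(0)=0$ — immediate when $F$ has the form (\ref{eq:F-fixed}) with $m>0$, and, for the form (\ref{eq:F-dist}), a consequence of $f(0)=0$ after differentiating under the integral — so that $z'(0)=1$ and $(1-F'(0))^{-3}=1=(1-F(0))^{-3}$. Subtracting $x$ and dividing by $x^2$ then gives
\[
\frac{z(x)-x}{x^2}\;\longrightarrow\;\tfrac12\,(1-F(0))^{-3}\,F''(0), \qquad x\to 0,
\]
as claimed. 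The step I expect to carry the weight of the argument is the local identification $z=E^{-1}$: a priori $z$ is only the minimal solution of (\ref{eq:recursion-limit-equation-app}) and could jump, so the contraction-mapping construction of the small solution $\tilde z(x)$, forcing $z(x)\to 0$, is what rules out a jump at $0$ and legitimises the inverse function and Taylor arguments that follow.
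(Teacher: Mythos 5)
Your proof is correct and follows the same route as the paper's: identify $z$ locally with the inverse of $E=\mathrm{id}-F$, compute $z''(0)$ by the chain rule, and expand to second order (the paper applies l'H\^opital twice where you use Taylor with Peano remainder; these are equivalent here). You are, however, more careful in two places where the paper is terse or imprecise. First, the paper only says that local differentiability ``follows as in the proof of Theorem~\ref{re:sol-nojumps-app}'', which assumed $F'<1$ globally; your contraction argument producing the small fixed point $\tilde z(x)\le x/(1-q)$ and invoking minimality to conclude $z=\tilde z$ near $0$ is exactly the missing localization that rules out a jump at $0$. Second, you correctly observe that the displayed limit requires $z'(0)=1$, i.e.\ $F'(0)=0$: under the stated hypothesis $F'(0)<1$ alone the quotient contains the term $(z'(0)-1)/x$ and does not converge. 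The paper's proof has the same tacit assumption (its first l'H\^opital step needs $z'(x)-1\to 0$), and its constant $(1-F(0))^{-3}$ should be read as $(1-F'(0))^{-3}$; both points are harmless in the intended applications, (\ref{eq:F-fixed}) with $m>0$ and (\ref{eq:F-dist}), where $F(0)=F'(0)=0$, exactly as you note.
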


\begin{proof}
The continuous differentiability of $z$ in a neighborhood of 0
follows as in the proof of Theorem \ref{re:sol-nojumps-app}
with (\ref{eq:final-size-curve-der}) holding in a neighborhood of 0.
By the chain rule, $z'$ is differentiable in a neighborhood of 0 and
\[
z^{\prime \prime}(x) = (1- F(x))^{-3} F^{\prime \prime}(z(x)).
\]
Apply l'H\^opital's rule twice,
\[
\frac{z(x) - x}{x^2} \to (1/2) z^{\prime \prime}(0), \quad x \to 0. \qedhere
\]
\end{proof}


\subsection{Intervals of convexity/concavity
for the final size curve}
\label{subsec:abs-convexity}

We are interested in the shape of the final size curve $z$, the minimal solution of (\ref{eq:final-size-function}).
For special cases, the previous results can be used
to obtain it from the graph of $E$
and (\ref{eq:discon}).
Alternatively,
we can use a convexity/concavity analysis.

\begin{theorem}
\label{re:convex-inherit}
Let $0 \le x_1 < x_2 \le \infty$. If $F$ is convex (concave)
 on the interval $[x_1, z(x_2))$, then the minimal solution $z$ of
(\ref{eq:recursion-limit-equation-app}) is convex (concave)
on the interval $[x_1,x_2)$ and continuous on $(x_1,x_2)$.
\end{theorem}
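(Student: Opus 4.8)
The plan is to exploit the relation $z(x) = E^{-1}(x)$, or more precisely $x = E(z(x))$ from Remark \ref{re:right-inverse}, together with the fact (Theorem \ref{re:minimal-sol-app}) that $z$ is strictly increasing and left-continuous. The strategy is that convexity/concavity of $z$ on an interval is inherited from the corresponding property of $F$ restricted to the image interval $[x_1, z(x_2))$, because on that range the equation $z = F(z) + x$ determines $z$ as the minimal inverse branch of $E$, and inverting an affine comparison preserves the convexity relation once we know $E$ is (strictly) increasing there.

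First I would observe that on the interval $[x_1, z(x_2))$ the hypothesis gives $F$ convex (or concave) and hence continuous there; in particular $F$ restricted to this interval has one-sided derivatives, and where $F$ is convex it lies above its chords and below in the concave case. Next I would show that $E(y) = y - F(y)$ is strictly increasing on $[x_1, z(x_2))$: for the convex case this is less immediate, so I would instead argue directly at the level of $z$. The cleanest route is the chord characterization: take $x_1 \le a < b < x_2$ and $t \in (0,1)$, set $c = ta + (1-t)b$, and write $\alpha = z(a)$, $\beta = z(b)$, $\gamma = z(c)$; since $z$ is strictly increasing, $\alpha < \gamma < \beta$ and all three lie in $[x_1, z(x_2))$. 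From $x = E(z(x))$ we have $a = \alpha - F(\alpha)$, $b = \beta - F(\beta)$, $c = \gamma - F(\gamma)$. To prove $z$ convex, i.e. $\gamma \le t\alpha + (1-t)\beta$, I would suppose not, so $\gamma > t\alpha + (1-t)\beta$, and derive a contradiction with the minimality of $z$ by producing a smaller solution of $\tilde z = F(\tilde z) + c$; convexity of $F$ on the interval lets one bound $F(\gamma)$ from above by an affine interpolation between $F(\alpha)$ and $F(\beta)$, which combined with $c = \gamma - F(\gamma)$ and $a,b$'s identities pushes $\gamma$ down below $t\alpha+(1-t)\beta$. For the concave case the inequality reverses but minimality is not the obstruction — instead continuity of $z$ on $(x_1,x_2)$ (which I address next) and the intermediate value property let one realize the interpolated value as a genuine value $z(c')$, forcing $\gamma \ge t\alpha + (1-t)\beta$.

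Continuity of $z$ on the open interval $(x_1, x_2)$ I would extract from Corollary \ref{re:jump-equiv} applied locally: since $F$ is continuous on $[x_1, z(x_2))$, the function $E$ is continuous there, and a jump of $z$ at a point $x^\ast \in (x_1,x_2)$ would, by Theorem \ref{re:minimal-sol-app} (left-continuity) plus the argument in Theorem \ref{re:equivalences}, require $E$ to fail to be strictly increasing on a subinterval of $[x_1, z(x_2))$; but convexity (or concavity) of $F$ on that interval, together with the global growth $E(y)\to\infty$ and the fact that $z$ is already known to be strictly increasing, rules this out on the relevant range — more precisely, $E$ restricted to $[z(x_1), z(x_2))$ is injective because $z$ is a well-defined strictly increasing right inverse there, and a continuous injective function on an interval is strictly monotone, hence strictly increasing since $E(z(x))=x$ is increasing; then $z = E^{-1}$ on $(x_1,x_2)$ is continuous by continuity of the inverse of a strictly monotone continuous function.

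The main obstacle I anticipate is the convex case of the chord inequality: unlike concavity (where one interpolates a value and uses the intermediate value theorem to exhibit it as an actual point on the curve), convexity of $z$ must be argued \emph{against} the minimality of the solution, and one has to be careful that the affine upper bound for $F(\gamma)$ coming from convexity is valid on exactly the sub-interval $[\alpha,\beta] \subseteq [x_1, z(x_2))$ and that the resulting candidate $\tilde z < \gamma$ indeed satisfies $\tilde z = F(\tilde z) + c$ (this needs the monotone-iteration/minimality machinery of Theorem \ref{re:minimal-sol-app}, i.e. that the minimal solution is the pointwise limit of the recursion, so any self-map inequality $\tilde z \ge F(\tilde z) + c$ produces $\tilde z \ge z(c)$ while $\tilde z \le F(\tilde z) + c$ together with $\tilde z \le z(x_2)$-type bounds is what we must instead engineer). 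Once this inequality is pinned down, the endpoints $x_1$ (where we only claim one-sided behavior) and the possibly infinite $x_2$ require only routine limiting arguments.
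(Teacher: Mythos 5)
Your overall strategy (prove the chord inequality for $z$ directly from $x=E(z(x))$) is genuinely different from the paper's proof, which never touches $E$: the paper shows by induction that every iterate $z_n$ in the monotone scheme $z_n(x)=F(z_{n-1}(x))+x$ is convex (resp.\ concave) on $[x_1,x_2)$ --- using only that $F$ is increasing and convex on $[x_1,z(x_2))$ and that $z_{n-1}(x)\in[x_1,z(x_2))$ there --- and then invokes preservation of convexity under pointwise limits; continuity on $(x_1,x_2)$ is then automatic. Your route can be made to work, but as written it has two genuine gaps. First, the concave case is circular: you appeal to continuity of $z$ on $(x_1,x_2)$ to ``realize the interpolated value as a genuine value $z(c')$,'' but continuity is part of the conclusion, and your separate continuity argument rests on the claim that $E$ is injective on all of $[z(x_1),z(x_2))$ because $z$ is a strictly increasing right inverse. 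That only gives injectivity of $E$ on the \emph{image} $z([x_1,x_2))$; until continuity of $z$ is known, this image may omit subintervals of $[z(x_1),z(x_2))$, and indeed a concave $E$ (convex $F$) can fail to be monotone on such gaps, so the ``continuous injective on an interval, hence monotone'' step is not available. Second, in the convex case the mechanism you describe --- bounding $F(\gamma)$ by the affine interpolant between $F(\alpha)$ and $F(\beta)$ --- does not by itself produce the smaller solution of $\tilde z=F(\tilde z)+c$ needed to contradict minimality: writing $\gamma=\mu\alpha+(1-\mu)\beta$, the interpolation bound yields an inequality in terms of $\mu$, not $t$, and it does not force $\gamma\le t\alpha+(1-t)\beta$.

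Both gaps are repairable without continuity. For the convex case, use concavity of $E$ on $[\alpha,\beta]$: with $s=t\alpha+(1-t)\beta$ one gets $E(s)\ge tE(\alpha)+(1-t)E(\beta)=c$, while $E(\alpha)=a<c$; the intermediate value theorem then produces $y\in(\alpha,s]$ with $E(y)=c$, and minimality gives $\gamma=z(c)\le y\le s$. For the concave case no minimality or continuity is needed at all: if $\gamma<s$, write $\gamma=\lambda\alpha+(1-\lambda)\beta$ with $\lambda>t$; convexity of $E$ gives $c=E(\gamma)\le\lambda a+(1-\lambda)b<ta+(1-t)b=c$, a contradiction. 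Once the chord inequalities hold on $[x_1,x_2)$, continuity on the open interval follows from convexity/concavity, so your third step should be deleted rather than fixed. Even so, the paper's induction argument is shorter and has the advantage of also covering the measure-valued $F$ of (\ref{eq:measure}), where differentiability or pointwise manipulation of $E$ is less convenient.
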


\begin{proof}
We show by induction that $z_n$ is convex (concave) on $[x_1,x_2)$ for all $n \in \N$,
 where the $z_n$ are given recursively by
(\ref{eq:recurs-functions-app}). Since convexity and concavity of functions
are preserved under pointwise convergence, $z$ is convex (concave)
on $[x_1,x_2)$ and thus continuous on $(x_1,x_2)$ \cite[Cor.6.3.3]{Dud}.

Notice that $z_0$ is both convex and concave.
Assume that $F$ is convex on $[x_1, z(x_2))$.

We provide the induction step.

Let $ n \in \N$ and $z_{n-1}$ be convex on $[x_1, z(x_2))$.
For $x \in [x_1,x_2)$, $z_{n-1}(x) \ge x \ge x_1$
and $z_{n-1}(x) \le z(x)< z(x_2)$. Let $\zeta\in [0,1]$
and $ y \in [x_1,x_2)$. By the standard definition of
convexity,
\[
z_{n-1}(\alpha x + (1-\alpha) y)
\le \alpha z_{n-1}(x) + (1-\alpha) z_{n-1}(y).
\]
Since $F$ is increasing and convex on $[x_1, z(x_2))$,
\[
\begin{split}
& F \big (z_{n-1}(\alpha x + (1-\alpha) y) \big)
\le
F \big (\alpha z_{n-1}(x) + (1-\alpha) z_{n-1}(y)\big )
\\
\le & \;
\alpha F(z_{n-1}(x))+ (1-\alpha) F(z_{n-1}(y)).
\end{split}
\]
By (\ref{eq:recurs-functions-app}), $z_n (\alpha x +(1-\alpha) y)
\le \alpha z_n(x) + (1-\alpha) z_n(y)$.

This shows that $z_n$ is convex on $[x_1,z(x_2))$.

The concave case is shown similarly.
\end{proof}

Proposition \ref{re:minimal-sol-app} and Theorem \ref{re:convex-inherit} also hold if $F$ is given by
 (\ref{eq:measure}). This does no longer hold for the subsequent results.

\begin{theorem}
\label{re:sol-jumps-convex-app}
Assume that $F$ is  differentiable on $\R_+$.
Let $z_3>z_2  > 0$ such that $F$ is convex on $[0, z_3)$
and $F'(z_2) \ge 1> F'(0)$.

Then there exists some $z_1 \in (0,z_2]$ such that $F'(z_1) =1$
and some $x_1 >0$ such that the solution $z$ of (\ref{eq:recursion-limit-equation-app})
 is differentiable on $[0,x_1)$, $z'(0)=(1- F'(0))^{-1}$, and convex and continuous
on $[0,x_1]$ with $z'(x) \to \infty$ as $x \to x_1-$
and $z(x) < z_1$ for $x < x_1$, $z(x_1)=z_1$ and $z(x) \ge z_3$ for $x > x_1$.

In particular, $z$ makes a jump at $x_1$ of the size $z_3 -z_1$
or larger.

Finally, on $[0,x_1)$, $z$ satisfies the initial value problem $z(0) =0$,
\[
z'(x) = (1- F'(z(x)))^{-1}, \qquad x \in [0,x_1).
\]
\end{theorem}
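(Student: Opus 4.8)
The plan is to combine the monotone–iteration description of $z$ with the elementary behavior of the convex function $E(x)=x-F(x)$ on $[0,z_3)$. First I would record the shape of $E$ on $[0,z_3)$: since $F$ is convex and differentiable there, $F'$ is nondecreasing, hence $E'=1-F'$ is nonincreasing on $[0,z_3)$, so $E$ is concave there. Because $F'(0)<1$ we have $E'(0)>0$, and because $F'(z_2)\ge 1$ we have $E'(z_2)\le 0$; by the intermediate value theorem there is a (smallest) $z_1\in(0,z_2]$ with $E'(z_1)=0$, i.e.\ $F'(z_1)=1$. Thus $E$ is strictly increasing on $[0,z_1)$, attains a (local, in fact the) maximum over $[0,z_3)$ at $z_1$, and is nonincreasing — indeed strictly decreasing past $z_1$ as soon as $F'>1$ — on $(z_1,z_3)$. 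Set $x_1=E(z_1)=\max E([0,z_2])>0$ (it is positive because $E(0)=0$ and $E$ strictly increases initially).

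Next I would identify $z$ on $[0,x_1)$ with the inverse of $E\restriction[0,z_1)$. On the open interval $[0,z_1)$ we have $E'>0$, so $E$ is a strict increasing bijection onto $[0,x_1)$; I would invoke Theorem~\ref{re:sol-nojumps-app}'s argument locally (the one–dimensional inverse function theorem) to conclude that the branch of $z$ defined by this inverse is differentiable on $[0,x_1)$, solves $z'(x)=(1-F'(z(x)))^{-1}$, and satisfies $z(0)=0$, $z'(0)=(1-F'(0))^{-1}$. That this local inverse really coincides with the minimal solution $z$ from (\ref{eq:recursion-limit-equation-app}) on $[0,x_1)$ follows from Remark~\ref{re:right-inverse}: $E(z(x))=x$ always, and for $x<x_1$ the only preimage of $x$ under $E$ that lies in $[0,z_1)$ is the one we constructed, while any other preimage must lie in $[z_3,\infty)$ or between, so minimality forces $z(x)$ to be our branch value, which in particular is $<z_1$. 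Convexity of $z$ on $[0,x_1)$ comes from Theorem~\ref{re:convex-inherit} applied with the interval $[0,x_1)$, since $F$ is convex on $[0,z(x_1))=[0,z_1)$; continuity at $x_1$ from the left is the left-continuity in Theorem~\ref{re:minimal-sol-app}, giving $z(x_1)=z_1$, and $z'(x)=(1-F'(z(x)))^{-1}\to\infty$ as $x\to x_1-$ because $z(x)\to z_1$ and $F'(z_1)=1$.

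Finally, for the jump I would apply Theorem~\ref{re:sol-jumps01} with this $z_1$ (the point where $E'=0$) and $z_2$ as given: the hypothesis $E(z_1)>E(z_2)$ holds because $E$ is strictly decreasing on some interval to the right of $z_1$ inside $[0,z_3)$ (as $F'$ exceeds $1$ there), hence $E(z_2)<E(z_1)=x_1$. Theorem~\ref{re:sol-jumps01} then yields exactly: $z$ is discontinuous at $x_1$, $z(x_1)<z_2$ (refined to $z(x_1)=z_1$ by the above), and $z(x)\ge z_3$ for $x>x_1$, where $z_3$ is the smallest $y>z_2$ with $E(y)=x_1$; since $F$ is convex on $[0,z_3)$ one checks $E(z_3)=x_1$ is consistent with the statement's labelling of $z_3$. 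The jump size is therefore at least $z_3-z_1$.

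The main obstacle I anticipate is the bookkeeping that the locally-defined inverse branch on $[0,x_1)$ is genuinely the \emph{minimal} solution there and not some larger solution of $\tilde z=F(\tilde z)+x$; this requires checking that on $[0,z_1)$ the function $E$ has no other preimages of points $x<x_1$ besides the monotone branch — equivalently that $E$ does not dip below a given level on $[0,z_1)$ and come back — which is immediate from strict monotonicity of $E$ on $[0,z_1)$, but one must also rule out preimages in $(z_1,z_3)$ being smaller, and that is handled because $E(y)<x_1$ there only for $y$ close to $z_1$ while $E\restriction[0,z_1)$ already covers all of $[0,x_1)$ with smaller arguments. Once this is laid out carefully, everything else is a direct citation of Theorems~\ref{re:convex-inherit}, \ref{re:sol-nojumps-app}, \ref{re:sol-jumps01} and Remark~\ref{re:right-inverse}.
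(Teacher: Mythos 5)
Your construction of $z_1$, the identification of $z$ on $[0,x_1)$ with the inverse of $E$ restricted to $[0,z_1)$, the convexity via Theorem~\ref{re:convex-inherit}, and the blow-up of $z'$ at $x_1-$ all match the paper's argument. Where you diverge is the final claim $z(x)\ge z_3$ for $x>x_1$: the paper proves it by contradiction, observing that if $z(\tilde x)<z_3$ for some $\tilde x>x_1$ then $z$ would be convex on $[0,\tilde x)$ with increasing Dini derivatives there, contradicting $z'(x)\to\infty$ as $x\to x_1-$; you instead route the claim through Theorem~\ref{re:sol-jumps01}. Your route is workable but has a gap as written: the hypothesis of Theorem~\ref{re:sol-jumps01} requires $E(z_1)>E(z_2)$, and you justify this by asserting that $F'$ \emph{exceeds} $1$ somewhere to the right of $z_1$. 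The theorem only assumes $F'(z_2)\ge 1$, so it is consistent with $F'\equiv 1$ (hence $E$ constant) on all of $[z_1,z_3)$, in which case $E(z_1)=E(z_2)$ and Theorem~\ref{re:sol-jumps01} does not apply. Moreover, even when it does apply, it yields $z(x)\ge z_3'$ for the \emph{smallest} $y>z_2$ with $E(y)=x_1$, and you still owe the reader the observation that $z_3'\ge z_3$ for the $z_3$ given in the statement. Both issues are repaired by one remark you already have the ingredients for: since $F$ is convex on $[0,z_3)$ and $F'(z_1)=1$, the function $E$ is concave on $[0,z_3)$ with $E'(z_1)=0$, hence $E(y)\le E(z_1)=x_1$ for \emph{all} $y\in[0,z_3)$; so for $x>x_1$ the identity $E(z(x))=x>x_1$ forces $z(x)\ge z_3$ directly, with no appeal to Theorem~\ref{re:sol-jumps01} at all. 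With that substitution your proof is complete and arguably cleaner than the paper's Dini-derivative argument.
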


\begin{proof}
Since $F$ is convex and differentiable on $[0,z_2]$,
$F'$ is increasing on $[0,z_2]$ \cite[Cor.6.3.3]{Dud}. By Darboux's intermediate value theorem for derivatives \cite[6.2.9]{Lay}, there is some $z_1 \in (0,z_2)$ such that
$F'(z_1) =1$. Since $F'$ is increasing on $[0,z_2]$, $z_1$
can be chosen
such that
\begin{equation}
\label{eq:blow-up-prep}
F'(z) < 1, \quad z \in [0,z_1), \qquad F'(z_1)=1.
\end{equation}
Then $E$ is strictly increasing on $[0,z_1]$
and the restriction of $E$ to $[0,z_1]$ has
an inverse function.

Choose some $x_1 >0$ such that $z(x) < z_1$ for $x \in [0,x_1)$
and $z(x) >z_1$ for $x> x_1$. This is possible because $z$
is strictly increasing by Theorem \ref{re:minimal-sol-app}.
Since $z(x_1) \le z_1 < z_2$
 because $z$ is left continuous and since $F$ is convex
on $[0,z_2]$, $z$ is convex on $[0,x_1]$ by Theorem
\ref{re:convex-inherit}. Actually, $x_1 = E(z_1)$.
Since $E(z(x))=x$ for all $x \in \R$, the restriction of $z$ to $[0,x_1]$ is the inverse
function of the restriction of $E$ to $[0,z_1]$.
Since $E'(z) < 0$ for all $z \in [0,z_1)$,
$z$ is continuous on $[0,x_1]$ and differentiable on $[0,x_1)$ and
$z'(x) = (E'(z(x)))^{-1}$ for all $x \in [0,x_1)$
by the one-dimensional inverse function theorem.
Since $E'(z_1)=0$, $z'(x) \to \infty$ as $x \to x_1-$.

Suppose that $z(\tilde x) < z_3$ for some $\tilde x > x_1$.
Then $z(x) < z_3 $ for $x \in [0, \tilde x]$.
By Theorem \ref{re:convex-inherit}, $z$ is convex on $[0, \tilde x).$ If one draws the graph of $z$, this is impossible.
For a rigorous argument,  the Dini derivatives of $z$ exist
and are increasing
on $(0,\tilde x)$ and equal $z'$ on $(0,x_1)$
\cite[(17.2), (17.37)]{HeSt} \cite[Cor.6.3.3]{Dud}.
But this contradicts that $z'(x) \to \infty$ as $x \to x_1-$.
\end{proof}

\begin{remark}
If $F'(z_2) > 1$, by  Remark \ref{re:point-discon}, $ \sup E ([0,z_2])$ is also a point of discontinuity for $z$. In many cases, $x_1$ and $\sup E ([0,z_2])$ will
coincide, but it may be possible to construct
examples where they are not. If they coincide,
we may get two different estimates for the
size of the jump.
We guess that  Remark \ref{re:point-discon}
provides a larger size than Theorem \ref{re:sol-jumps-convex-app} which, as a trade-off, provides
extra information about the shape of the
final size curve on $[0,x_1)$.
\end{remark}

Recall that the derivative of a convex
differentiable function is continuous
because it is increasing \cite[Cor.6.3.3]{Dud} and has the intermediate
value property \cite[6.2.9]{Lay}.

\begin{theorem}
\label{re:sol-jumps-convex-app1}
Assume that $F$ is  differentiable on $\R_+$.
Let $z_2  > 0$ such that $F$ is convex on $[0, z_2]$
and $F'(z_2) > 1> F'(0)$.

Then there exists some $z_1 \in (0,z_2)$ such that $F'(z_1) =1$
and some $x_1 >0$, $x_1 =E(z_1)$, such that the solution $z$ of (\ref{eq:recursion-limit-equation-app})
 is differentiable on $[0,x_1)$, $z'(0)=(1- F'(0))^{-1}$, and convex and continuous
on $[0,x_1]$ with $z'(x) \to \infty$ as $x \to x_1-$
and $z(x) < z_1$ for $x < x_1$, $z(x_1)=z_1$ and $z(x) > z_2$ for $x > x_1$.

In particular, $z$ makes a jump at $x_1$ of the size $z_2 -z_1$
or larger.

Finally, on $[0,x_1)$, $z$ satisfies the initial value problem $z(0) =0$,
\[
z'(x) = (1- F'(z(x)))^{-1}, \qquad x \in [0,x_1).
\]
\end{theorem}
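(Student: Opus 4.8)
The plan is to obtain this as an almost verbatim repetition of the proof of Theorem \ref{re:sol-jumps-convex-app}, the only substantive change being that convexity of $F$ is now available only on $[0,z_2]$ instead of on a larger interval $[0,z_3)$; accordingly, the blow-up/Dini-derivative contradiction used there (which reaches past $z_2$) is replaced by a direct argument through the function $E(y)=y-F(y)$ of Remark \ref{re:right-inverse}. First I would pin down $z_1$. Since $F$ is differentiable and convex on $[0,z_2]$, the derivative $F'$ is increasing there (\cite[Cor.6.3.3]{Dud}) and, being increasing with the Darboux property, continuous on $[0,z_2]$ (\cite[6.2.9]{Lay}). Because $F'(0)<1<F'(z_2)$, the set $A=\{z\in[0,z_2]:F'(z)<1\}$ is of the form $[0,z_1)$ with $z_1\in(0,z_2)$, $F'(z_1)=1$ and $F'(z)\ge 1$ for $z\in[z_1,z_2]$. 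Hence $E'=1-F'$ is strictly positive on $[0,z_1)$, vanishes at $z_1$, and is $\le 0$ on $[z_1,z_2]$, so $E$ is strictly increasing on $[0,z_1]$ and non-increasing on $[z_1,z_2]$; in particular $x_1:=E(z_1)=\max E([0,z_2])$ and $x_1>E(0)=0$.

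Next I would read off the behaviour of $z$ on $[0,x_1)$ exactly as in Theorem \ref{re:sol-jumps-convex-app}. The restriction of $E$ to $[0,z_1]$ is a homeomorphism onto $[0,x_1]$, and since $E<x_1$ on $[0,z_1)$ while $E\le x_1$ on all of $[0,z_2]$, the characterization (\ref{eq:z-right-inverse-min}) forces the restriction of $z$ to $[0,x_1]$ to be precisely that inverse; thus $z(x)<z_1$ for $x<x_1$ and $z(x_1)=z_1$. The one-dimensional inverse function theorem then gives that $z$ is differentiable on $[0,x_1)$ with $z'(x)=(E'(z(x)))^{-1}=(1-F'(z(x)))^{-1}$, so $z'(0)=(1-F'(0))^{-1}$, and $z'(x)\to\infty$ as $x\to x_1-$ because $z(x)\to z_1$ and $E'(z_1)=0$; this is also the claimed initial value problem on $[0,x_1)$. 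Convexity and continuity of $z$ on $[0,x_1]$ follow by applying Theorem \ref{re:convex-inherit} on the interval $[0,x_1)$ — legitimate since $z(x_1)=z_1<z_2$ and $F$ is convex on $[0,z_2]\supseteq[0,z(x_1))$ — together with the left continuity of $z$ from Theorem \ref{re:minimal-sol-app}, which upgrades convexity/continuity on $[0,x_1)$ to the closed interval.

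Finally, for the jump I would argue directly: for any $x>x_1$, Remark \ref{re:right-inverse} gives $E(z(x))=x>x_1=\max E([0,z_2])$, so $z(x)\notin[0,z_2]$, i.e., $z(x)>z_2$. Since $z$ is increasing and left continuous with $z(x_1)=z_1<z_2$ and $z(x)>z_2$ for every $x>x_1$, the function $z$ is discontinuous at $x_1$ with jump $\lim_{x\to x_1+}z(x)-z(x_1)\ge z_2-z_1$. The step that needs the most care is the middle interval $[z_1,z_2]$: because $F'$ may be only non-strictly increasing there, $E$ need not be strictly decreasing on $[z_1,z_2]$, so one cannot simply invert $E$ across that interval; however, the argument only uses $E\le x_1$ throughout $[0,z_2]$, which does hold. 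The one genuinely delicate point is ruling out that $z$ re-enters $(z_1,z_2)$ for $x$ slightly above $x_1$, and this is exactly what the identity $x_1=\max E([0,z_2])$ (hence $z(x)>z_2$ for all $x>x_1$) is designed to exclude; I do not foresee any obstacle beyond bookkeeping once that identity is in place.
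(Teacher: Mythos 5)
Your proof is correct. The paper gives no separate proof for this theorem: it is stated as a variant of Theorem \ref{re:sol-jumps-convex-app}, whose proof identifies $z$ on $[0,x_1]$ with the inverse of $E$ on $[0,z_1]$ exactly as you do, but then establishes the lower bound for $z(x)$, $x>x_1$, by a convexity/Dini-derivative contradiction that genuinely needs convexity of $F$ beyond $z_2$. Your substitution of that step by the identity $x_1=E(z_1)=\max E([0,z_2])$ together with $E(z(x))=x$ (the mechanism of Theorem \ref{re:sol-jumps01}) is the right adaptation here: it uses only $E\le x_1$ on $[0,z_2]$, which is all the weaker hypothesis provides, and it directly yields $z(x)>z_2$ for $x>x_1$. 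The remaining points --- minimality of the preimage under the strictly increasing branch of $E$, the inverse function theorem on $[0,x_1)$, and the upgrade of convexity from $[0,x_1)$ to $[0,x_1]$ via left continuity (or, equivalently, via concavity of $E$ on $[0,z_1]$) --- are all handled correctly.
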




\section{More on the final size  for distributed
resistance}
\label{sec:dist-final-size}



Guided by Section \ref{sec:abst-fin-size} and \ref{sec:distributed}, we apply the results
of the previous section to the final size
equation (\ref{eq:final-size-function}) with the map
\begin{equation}
\label{eq:map-dis}
F(y)  = \cR_\di \int_0^{y}  \sigma( m) f (y - m) dm, \qquad y \in\R_+.
\end{equation}
Here $\cR_\di$ is the basic reproduction number
without resistance as given by (\ref{eq:Xi}).
Assume that $\sigma:\R_+\to \R_+$ is continuous on $\R_+$,
\begin{equation}
\label{eq:sigma-integral}
\int_0^\infty \sigma(m) dm =1.
\end{equation}
By Leibniz rule, since $f(0)=0$,
\begin{equation}
\label{eq:Leibniz}
F'(y) = \cR_\di \int_0^y \sigma(m) f'(y-m) dm
=
\cR_\di \int_0^y \sigma(m) e^{m-y} dm \ge 0.
\end{equation}
So $F$ is  increasing, $F(0)=0= F'(0)$,
\begin{equation}
\label{eq:map-asympt}
\begin{split}F(y) \le & \cR_\di f(y) \le \cR_\di, \qquad y \in \R_+,
\\
F(0) = & 0, \qquad F(y) \to \cR_\di, \qquad y \to \infty,
\end{split}
\end{equation}
again by Lebesgue's dominated convergence theorem.

For intermediate $y>0$, $F'(y) >0$,
\begin{equation}
\label{eq:map-deriv-asympt}
 F'(0)=0, \qquad F'(y) \to 0, \quad y\to \infty,
\end{equation}
 by (\ref{eq:Leibniz}) and
 Lebesque's
theorem of dominated almost everywhere convergence.
Recall (\ref{eq:sigma-integral}).
So $F'$ is neither increasing nor decreasing, and $F$ is neither convex nor concave on all of $\R_+$.
Further, by (\ref{eq:Leibniz}) and
(\ref{eq:sigma-integral}),
$F'$ is bounded and
\begin{equation}
\label{eq:F'bounded}
F'(y) < \cR_\di, \qquad y \in \R_+.
\end{equation}

Recall that a differentiable function on an
open interval is
convex if and only if its derivative is increasing
\cite[p.43]{Rud}.

\begin{lemma}
\label{re:convex-cave-F}
(a) If $m_2 > 0$ and $\sigma( m) =0$ for $m \ge m_2$, then
$F'$ is decreasing on $[m_2, \infty)$ and $F $ is concave
on $[m_2, \infty)$.

(b) If $m_1 >0$ and $\sigma$ increases on $[0, m_1]$,
then $F'$ increases and $F$ is convex on $[0,m_1]$.
\end{lemma}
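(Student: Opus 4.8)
The plan is to work directly from the Leibniz-rule representation (\ref{eq:Leibniz}), which after pulling $e^{-y}$ out of the integral reads
\[
F'(y) = \cR_\di\, e^{-y} \int_0^y \sigma(m)\, e^{m}\, dm =: \cR_\di\, e^{-y} g(y),
\qquad g(y) := \int_0^y \sigma(m)\, e^{m}\, dm .
\]
For part (a), I would note that the hypothesis $\sigma(m)=0$ for $m \ge m_2$ makes $g$ constant on $[m_2,\infty)$, equal to $g(m_2)=\int_0^{m_2}\sigma(m)e^m\,dm \ge 0$. Hence on $[m_2,\infty)$ we have $F'(y)=\cR_\di\, g(m_2)\, e^{-y}$, which is nonincreasing (strictly decreasing when $g(m_2)>0$). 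A differentiable function whose derivative is decreasing on an interval is concave there (the characterization recalled just before the lemma), so $F$ is concave on $[m_2,\infty)$. This part is essentially immediate.

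For part (b), the substance is to show $F'$ is increasing on $[0,m_1]$. Since $\sigma$ is continuous, $g$ is $C^1$ with $g'(y)=\sigma(y)e^{y}$, and differentiating once more gives, for $y \in (0,m_1)$,
\[
F''(y) = \cR_\di\, e^{-y}\bigl(g'(y)-g(y)\bigr)
= \cR_\di\, e^{-y}\Bigl(\sigma(y)e^{y} - \int_0^y \sigma(m)\,e^{m}\,dm\Bigr).
\]
Here is where the monotonicity hypothesis enters: for $0 \le y \le m_1$ and $m\in[0,y]$ we have $\sigma(m)\le \sigma(y)$, so $\int_0^y \sigma(m)e^m\,dm \le \sigma(y)\int_0^y e^m\,dm = \sigma(y)(e^y-1) \le \sigma(y)e^y$. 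Consequently $F''(y) \ge \cR_\di\,\sigma(y)\,e^{-y} \ge 0$ on $(0,m_1)$, so $F'$ is increasing on $[0,m_1]$ and $F$ is convex on $[0,m_1]$ by the same convexity characterization.

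I do not anticipate a genuine obstacle here; the computations are routine. The only point needing a line of care is the second differentiation under the integral sign, which is justified because $\sigma$ is continuous (so $g\in C^1$) and the integrand $\sigma(m)e^{m-y}$ together with its $y$-derivative are continuous and bounded on compact sets in the $(m,y)$-domain. The other thing to keep straight is that the inequality $\sigma(m)\le\sigma(y)$ is invoked only for arguments in $[0,m_1]$, which is exactly where the hypothesis of part (b) supplies it; nothing is claimed about $F'$ or $F$ outside that interval.
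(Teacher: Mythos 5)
Your proof is correct, and for part (b) it takes a slightly different route from the paper. Part (a) is the expected argument from (\ref{eq:Leibniz}): once $\sigma$ vanishes beyond $m_2$, the integral $\int_0^y\sigma(m)e^{m}\,dm$ freezes at its value at $m_2$, so $F'(y)$ is a nonnegative constant times $e^{-y}$ on $[m_2,\infty)$, hence decreasing; the paper leaves this to the reader. For part (b) the paper substitutes $m\mapsto y-m$ to get $F'(y)=\cR_\di\int_0^y\sigma(y-m)e^{-m}\,dm$ (its display (\ref{eq:Leibniz-subst})) and reads off directly that $F'$ inherits the monotonicity of $\sigma$ on $[0,m_1]$, since increasing $y$ both enlarges the domain of integration and increases the integrand pointwise; this is a first-order comparison that never touches $F''$. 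You instead differentiate once more, arriving at exactly the paper's formula (\ref{eq:Leibniz2}), and bound $F''(y)\ge\cR_\di\,\sigma(y)e^{-y}\ge 0$ on $(0,m_1)$ via $\sigma(m)\le\sigma(y)$ for $m\in[0,y]\subseteq[0,m_1]$. The extra differentiation is harmless because $\sigma$ is assumed continuous, so $g(y)=\int_0^y\sigma(m)e^m\,dm$ is $C^1$ and the product rule applies; as a by-product you obtain the quantitative lower bound on $F''$, which is a little more than the lemma asserts. Both arguments are valid under the standing hypotheses.
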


As for (b), after a substitution,
\begin{equation}
\label{eq:Leibniz-subst}
F'(y) = \cR_\di \int_0^y \sigma(y-m) f'(m) dm
=
\cR_\di \int_0^y \sigma(y-m) e^{-m} dm
\end{equation}
If $\sigma$ is increasing on $[0, m_1]$,
so if $F'$ and $F$ is convex on $[0,m_1]$.

Using Leibniz rule again and (\ref{eq:Leibniz}), $F'$ is differentiable and
\begin{equation}
\label{eq:Leibniz2}
F^{\prime \prime}(y) = \cR_\di \Big (\sigma(y) - \int_0^y \sigma(m) e^{m-y} dm \Big ).
\end{equation}

By Theorem \ref{re:behav-zero}, the final size solution $z$
of $z(x) = F(z(x)) +x$, $x \in \R_+$,
behaves as follows close to 0.
Recall that $x$ is the final size of the cumulative force of infection that is due to the initial infectives while $z(x)$ is the final size of the
cumulative force of  infection exerted by all infectives.

\begin{theorem}
\label{re:behav-zero-dis}
\[
\frac{z(x) - x}{x^2} \to \cR_\di \sigma(0)/2, \qquad x \to 0.
\]
\end{theorem}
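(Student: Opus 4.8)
The plan is to obtain this limit as a direct corollary of Theorem~\ref{re:behav-zero}, which already establishes that
\[
\frac{z(x)-x}{x^2} \to \frac{1}{2}\,(1-F(0))^{-3}\,F^{\prime\prime}(0), \qquad x \to 0,
\]
provided $F$ is twice continuously differentiable near $0$, $F'(0)<1$, and $F'$ is differentiable at $0$. So the only work is to verify these hypotheses for the map $F$ in (\ref{eq:map-dis}) and then evaluate the three quantities $F(0)$, $F'(0)$, $F^{\prime\prime}(0)$ appearing on the right-hand side.

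First I would record from the formulas already derived that $F(0)=0$ (from (\ref{eq:map-asympt})) and $F'(0)=0$ (from (\ref{eq:map-deriv-asympt}) or directly from (\ref{eq:Leibniz}), since the integral over $[0,0]$ vanishes). Next, the expression (\ref{eq:Leibniz2}) for $F^{\prime\prime}$, namely
\[
F^{\prime\prime}(y) = \cR_\di\Big(\sigma(y) - \int_0^y \sigma(m)e^{m-y}\,dm\Big),
\]
shows $F^{\prime\prime}$ is continuous in a neighborhood of $0$ because $\sigma$ is continuous; hence $F$ is twice continuously differentiable near $0$ and in particular $F'$ is differentiable at $0$. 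Evaluating (\ref{eq:Leibniz2}) at $y=0$ gives $F^{\prime\prime}(0) = \cR_\di\,\sigma(0)$, since the integral term again vanishes.

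Then I would simply substitute: $(1-F(0))^{-3} = 1$ and $F^{\prime\prime}(0)=\cR_\di\sigma(0)$, so Theorem~\ref{re:behav-zero} yields
\[
\frac{z(x)-x}{x^2} \to \tfrac12\,\cR_\di\,\sigma(0), \qquad x\to 0,
\]
as claimed. There is essentially no obstacle here; the argument is a verification that the general hypotheses hold, using the continuity of $\sigma$ to get the needed smoothness of $F$ near the origin, followed by a one-line evaluation. If anything requires a sentence of care, it is observing that $F$ being twice continuously differentiable \emph{in a neighborhood of} $0$ (not merely at $0$) follows from continuity of $\sigma$ together with the integral formula (\ref{eq:Leibniz2}), so that Theorem~\ref{re:behav-zero} applies verbatim.
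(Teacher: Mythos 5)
Your proposal is correct and is exactly the paper's route: the theorem is stated as an immediate application of Theorem~\ref{re:behav-zero}, using $F(0)=F'(0)=0$ and the formula (\ref{eq:Leibniz2}) to read off $F^{\prime\prime}(0)=\cR_\di\sigma(0)$, with continuity of $\sigma$ supplying the required smoothness of $F$ near $0$. (Incidentally, the factor $(1-F(0))^{-3}$ in Theorem~\ref{re:behav-zero} should read $(1-F'(0))^{-3}$, but since both vanish here this does not affect the conclusion.)
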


We obtain the following results for the convexity/concavity
behavior of the final size curve from Theorem
\ref{re:convex-inherit} and Lemma \ref{re:convex-cave-F}.

\begin{theorem}
\label{re:fin-size-convex-cave}
Let $z$ be the final size curve $z(x) =
F(z(x)) +x$, $x \in \R_+$, and $m_1 \in (0, \infty)$.

\begin{itemize}
\item[(a)] $z$ is convex on $[0,m_1]$
if $\sigma $ increases on $[0, z(m_1)]$
and in particular if $\sigma$ increases on $[0, m_1 +\cR_\di]$.

\item[(b)] If $m_2 > 0$ and $\sigma(m) =0$ for $m \ge m_2$,
then $z$ is concave on $[m_2, \infty)$.
\end{itemize}
\end{theorem}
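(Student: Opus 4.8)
The plan is to derive everything from two results already in hand: Lemma~\ref{re:convex-cave-F}, which reads off convexity or concavity of $F$ on an interval directly from the monotonicity/support of $\sigma$ there, and Theorem~\ref{re:convex-inherit}, which transports such convexity/concavity of $F$ to the final size curve $z$. The whole proof is then a matter of choosing the right intervals.

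\emph{Part (a).} First I would apply Lemma~\ref{re:convex-cave-F}(b) with its upper endpoint taken to be $z(m_1)$ rather than $m_1$ (the lemma is stated for an arbitrary positive endpoint, so this is legitimate): if $\sigma$ increases on $[0,z(m_1)]$, then $F$ is convex on $[0,z(m_1)]$, a fortiori on $[0,z(m_1))$. Then I would apply Theorem~\ref{re:convex-inherit} with $x_1=0$ and $x_2=m_1$, whose hypothesis is precisely that $F$ be convex on $[x_1,z(x_2))=[0,z(m_1))$; it yields that $z$ is convex on $[0,m_1)$ and continuous on $(0,m_1)$. To upgrade the half-open interval to the closed interval $[0,m_1]$ demanded by the statement, I would invoke the left-continuity of $z$ (Theorem~\ref{re:minimal-sol-app}): for $x,y\in[0,m_1)$ the inequality $z(\alpha x+(1-\alpha)y)\le\alpha z(x)+(1-\alpha)z(y)$ holds, and letting $y\to m_1-$ along points that keep the convex combination interior to $(0,m_1)$, using continuity of $z$ at that combination together with left-continuity of $z$ at $m_1$, extends it to $y=m_1$. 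For the ``in particular'' clause I would use the a priori bound $z(m_1)-m_1=F(z(m_1))\le F(\infty)=\cR_\di$, so $[0,z(m_1)]\subseteq[0,m_1+\cR_\di]$ and $\sigma$ increasing on the larger interval forces it increasing on the smaller one, reducing to the case just treated.

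\emph{Part (b).} Here I would invoke Lemma~\ref{re:convex-cave-F}(a): since $\sigma(m)=0$ for $m\ge m_2$, $F$ is concave on $[m_2,\infty)$. Then I would apply Theorem~\ref{re:convex-inherit} with $x_1=m_2$ and $x_2=\infty$; since $z(x)\to\infty$, the hypothesis ``$F$ concave on $[m_2,z(x_2))$'' is ``$F$ concave on $[m_2,\infty)$'', which holds, and the conclusion ``$z$ concave on $[m_2,\infty)$'' is exactly the claim, with no endpoint issue this time.

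The only delicate point — the ``main obstacle'', modest as it is — lies in part (a): Theorem~\ref{re:convex-inherit} only returns convexity on the half-open interval $[0,m_1)$, whereas the statement asks for $[0,m_1]$, and one must remember to feed Lemma~\ref{re:convex-cave-F}(b) the endpoint $z(m_1)$ rather than $m_1$, so that the quantity $z(x_2)$ appearing in the hypothesis of Theorem~\ref{re:convex-inherit} is matched exactly. Both are routine: left-continuity of $z$ (Theorem~\ref{re:minimal-sol-app}) settles the first, and the monotone a priori estimate $z(x)\le x+\cR_\di$ settles the second.
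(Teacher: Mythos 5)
Your proposal is correct and follows exactly the route the paper intends: Lemma \ref{re:convex-cave-F} supplies convexity (resp.\ concavity) of $F$ on $[0,z(m_1)]$ (resp.\ $[m_2,\infty)$), Theorem \ref{re:convex-inherit} transports it to $z$, and the bound $z(x)\le x+\cR_\di$ from (\ref{eq:map-asympt}) handles the ``in particular'' clause. You are in fact slightly more careful than the paper, which does not address the upgrade from the half-open interval $[0,m_1)$ to the closed interval $[0,m_1]$; your limiting argument via left-continuity of $z$ closes that small gap correctly.
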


Notice that $z(x) \le x + \cR_\di$, $x \in \R_+$,
by (\ref{eq:map-asympt}).

The following result is obtained
from Theorem \ref{re:sol-nojumps-app}.

\begin{theorem}
\label{re:derivative}
Assume that
\[
\cR_\di \int_0^y \sigma (m) e^{m-y} dm <1, \qquad y \in \R_+.
\]
Then the solutions $J(\infty)=z$ of (\ref{eq:model-Volt-int})
are uniquely determined
by  $J_0(\infty)=x \in \R_+$ and are a twice continuously differentiable
function of $J_0(\infty)$,
\begin{equation}
\label{eq:derivative-final}
z'(x) = \Big ( 1 - \cR_\di \int_0^{z(x)} \sigma(m) e^{m-z(x)} dm \Big )^{-1}.
\end{equation}
In particular, $z'(0) =1$ and $z'(x) \to 1$ as $x \to \infty$,
$z^{\prime \prime}(0)= \cR_\di \sigma(0)$.
\end{theorem}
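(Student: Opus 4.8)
The plan is to invoke Theorem~\ref{re:sol-nojumps-app} directly with $F$ given by (\ref{eq:map-dis}). First I would verify the hypothesis of that theorem: by (\ref{eq:Leibniz}) we have $F'(y) = \cR_\di \int_0^y \sigma(m) e^{m-y} dm$, so the standing assumption of the present theorem is exactly $F'(y) < 1$ for all $y \in \R_+$. Moreover $F$ is differentiable on $\R_+$ (indeed twice differentiable by (\ref{eq:Leibniz2})). Thus Theorem~\ref{re:sol-nojumps-app} applies and yields that $E = \mathrm{id} - F$ is strictly increasing, $z$ is the inverse function of $E$, $z$ is differentiable, and $z'(x) = (1 - F'(z(x)))^{-1}$, which upon substituting the formula for $F'$ gives exactly (\ref{eq:derivative-final}). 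This also settles uniqueness of the solution $J(\infty) = z(J_0(\infty))$, since $E$ is injective.

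Next I would upgrade the regularity from $C^1$ to $C^2$. Since $F$ is twice continuously differentiable on $\R_+$ (the formula (\ref{eq:Leibniz2}) for $F''$ is continuous in $y$, being built from the continuous function $\sigma$ and an integral that depends continuously on its upper limit), and $z$ is $C^1$, the composition $F'(z(x))$ is $C^1$; hence $x \mapsto (1 - F'(z(x)))^{-1}$ is $C^1$, so $z' \in C^1$, i.e. $z \in C^2$. One can also write $z''$ explicitly by the chain rule: $z''(x) = (1 - F'(z(x)))^{-3} F''(z(x))$, paralleling the computation in the proof of Theorem~\ref{re:behav-zero}.

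For the boundary values, I would use the asymptotics of $F'$ already recorded in (\ref{eq:map-deriv-asympt}): $F'(0) = 0$ and $F'(y) \to 0$ as $y \to \infty$. Since $z(0) = 0$, formula (\ref{eq:derivative-final}) gives $z'(0) = (1 - F'(0))^{-1} = 1$. Since $z$ is strictly increasing and unbounded (Theorem~\ref{re:minimal-sol-app} gives $z(x) \to \infty$ as $x \to \infty$), we have $z(x) \to \infty$, whence $F'(z(x)) \to 0$ and therefore $z'(x) \to 1$ as $x \to \infty$. Finally, for $z''(0)$, from $z''(x) = (1 - F'(z(x)))^{-3} F''(z(x))$ and $z(0) = 0$, $F'(0) = 0$, $F''(0) = \cR_\di \sigma(0)$ (set $y = 0$ in (\ref{eq:Leibniz2}), where the integral vanishes), we get $z''(0) = \cR_\di \sigma(0)$; alternatively this drops out of Theorem~\ref{re:behav-zero} combined with $F'(0) < 1$, $F(0) = 0$.

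There is no serious obstacle here: the theorem is essentially a translation of the general Theorem~\ref{re:sol-nojumps-app} into the concrete kernel (\ref{eq:map-dis}), together with bookkeeping of the already-derived formulas (\ref{eq:Leibniz}), (\ref{eq:Leibniz2}) for $F'$ and $F''$. The only point requiring a line of care is the $C^2$ claim, which needs continuity of $F''$ — and that follows because $\sigma$ is assumed continuous and the integral term in (\ref{eq:Leibniz2}) depends continuously on $y$ by dominated convergence (or simply by continuity of the integrand and of the limit of integration).
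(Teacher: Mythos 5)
Your proposal is correct and follows the same route the paper takes: the paper derives this theorem directly from Theorem~\ref{re:sol-nojumps-app}, using (\ref{eq:Leibniz}) to identify the hypothesis with $F'(y)<1$ and (\ref{eq:Leibniz2}) together with the chain-rule formula for $z''$ to obtain the regularity and the boundary values. Your additional care with the $C^2$ upgrade and the explicit evaluation $F''(0)=\cR_\di\sigma(0)$ fills in exactly the bookkeeping the paper leaves implicit.
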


By Theorem \ref{re:sol-jumps}, we have the following result.

\begin{theorem}
\label{re:distr-jump-exist}
Assume that there is some $z_2 \in \R_+$ such that
\begin{equation}
\cR_\di \int_0^{z_2} \sigma(m) e^{m-z_2} dm >1.
\end{equation}
Then the final size curve $z$ has a discontinuity.
\end{theorem}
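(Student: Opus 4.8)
The plan is to recognize that Theorem~\ref{re:distr-jump-exist} is essentially a direct translation of Theorem~\ref{re:sol-jumps} into the distributed-resistance setting, so the proof amounts to verifying that the hypotheses of Theorem~\ref{re:sol-jumps} are met by the map $F$ in (\ref{eq:map-dis}). Recall that Theorem~\ref{re:sol-jumps} requires two things: that $F$ be bounded on $\R_+$, and that $F$ be differentiable at some point $z_2 \in (0,\infty)$ with $F'(z_2) > 1$. Boundedness is already established in (\ref{eq:map-asympt}), namely $F(y) \le \cR_\di$ for all $y \in \R_+$. Differentiability of $F$ on all of $\R_+$, together with the explicit formula for the derivative, is provided by the Leibniz rule computation (\ref{eq:Leibniz}), which gives $F'(y) = \cR_\di \int_0^y \sigma(m) e^{m-y}\,dm$.

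So the only remaining step is to observe that the standing hypothesis of the present theorem, $\cR_\di \int_0^{z_2} \sigma(m) e^{m - z_2}\,dm > 1$, is exactly the condition $F'(z_2) > 1$ in view of (\ref{eq:Leibniz}). One should note that $z_2 > 0$ here (if $z_2 = 0$ the integral vanishes and the inequality fails, so the hypothesis forces $z_2 \in (0,\infty)$), which matches the requirement $z_2 \in (0,\infty)$ in Theorem~\ref{re:sol-jumps}. Having checked all hypotheses, Theorem~\ref{re:sol-jumps} yields that the final size curve $z$ is discontinuous at some point $x_1 \in [0, z_2)$, which is the assertion. If one wishes, one can additionally invoke Remark~\ref{re:point-discon} to pin down the jump point as $x_1 = \max E([0,z_2])$ with $E(y) = y - F(y)$, and to give the lower bound $z_3 - z(x_1)$ on the jump size, but the bare statement only claims existence of a discontinuity.

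There is essentially no obstacle here; the ``hard part'' is purely bookkeeping — confirming that the integral appearing in the theorem's hypothesis is literally $F'(z_2)$ as computed in (\ref{eq:Leibniz}), and that the boundedness needed by Theorem~\ref{re:sol-jumps} is the one recorded in (\ref{eq:map-asympt}). The proof is therefore a one-line reduction:
\begin{proof}
By (\ref{eq:map-asympt}), $F$ is bounded on $\R_+$, and by (\ref{eq:Leibniz}), $F$ is differentiable on $\R_+$ with $F'(z_2) = \cR_\di \int_0^{z_2} \sigma(m) e^{m - z_2}\,dm > 1$ by assumption; in particular $z_2 > 0$. Theorem~\ref{re:sol-jumps} now applies and shows that the final size curve $z$ is discontinuous at some point $x_1 \in [0, z_2)$.
\end{proof}
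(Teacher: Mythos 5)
Your proposal is correct and matches the paper exactly: the paper introduces this theorem with the single line ``By Theorem \ref{re:sol-jumps}, we have the following result,'' and your write-up simply makes explicit the bookkeeping (boundedness from (\ref{eq:map-asympt}), the identification of the hypothesis with $F'(z_2)>1$ via (\ref{eq:Leibniz}), and the observation that $z_2>0$) that the paper leaves implicit.
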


\subsection{A convexity interval for the
final size curve}

\begin{theorem}
\label{re:density-creates-jump}
Assume that there is some $z_2 \in \R_+$ such that
$\sigma$ increases on $[0,z_2]$ and
\begin{equation}
\cR_\di \int_0^{z_2} \sigma(m) e^{m-z_2} dm >1.
\end{equation}
Then  there is some unique $z_1 \in (0,z_2)$ such that
\begin{equation}
\begin{split}
& \cR_\di \int_0^{z_1} \sigma(m) e^{m-z_1} dm =1.
\end{split}
\end{equation}

There also exists some unique $x_1 \in [0,z_1]$
with the following property:

$z$ is twice differentiable on $[0,x_1)$, convex and continuous
on $[0,x_1]$, $z(x_1) = z_1$.
Further,
$z'(0) =1$ and
$z^{\prime \prime}(0)= \cR_\di \sigma(0)$.
Most importantly,
$z'(x) \to \infty$ as $x \to x_1-$
and  $z(x) > z_2$ for $x > x_1$.

In particular, $z$ makes a jump at $x_1$ of the size $z_2 -z_1$
or larger.
\end{theorem}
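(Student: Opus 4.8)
The plan is to obtain this theorem as a direct application of Theorem~\ref{re:sol-jumps-convex-app1}, supplying only the two ingredients that theorem does not itself contain: the uniqueness of $z_1$ and of $x_1$, and the second-order information at $0$. First I would verify the hypotheses of Theorem~\ref{re:sol-jumps-convex-app1} for the distributed-resistance map $F$ of (\ref{eq:map-dis}). By the Leibniz formula (\ref{eq:Leibniz}), $F$ is differentiable on $\R_+$ with $F'(y)=\cR_\di\int_0^y\sigma(m)e^{m-y}dm$, so the standing assumption $\cR_\di\int_0^{z_2}\sigma(m)e^{m-z_2}dm>1$ says exactly $F'(z_2)>1$, while $F'(0)=0<1$. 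By (\ref{eq:Leibniz2}), $F'$ is again differentiable with continuous $F''$, so $F\in C^2(\R_+)$. Finally, since $\sigma$ increases on $[0,z_2]$, Lemma~\ref{re:convex-cave-F}(b) (with $m_1=z_2$) gives that $F$ is convex on $[0,z_2]$. These are precisely the hypotheses of Theorem~\ref{re:sol-jumps-convex-app1}.

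Invoking that theorem then produces some $z_1\in(0,z_2)$ with $F'(z_1)=1$ and some $x_1>0$, $x_1=E(z_1)$, such that $z$ is differentiable on $[0,x_1)$, convex and continuous on $[0,x_1]$, with $z'(0)=(1-F'(0))^{-1}=1$, $z(x_1)=z_1$, $z'(x)\to\infty$ as $x\to x_1-$, $z(x)>z_2$ for $x>x_1$, and jump at $x_1$ of size at least $z_2-z_1$; moreover $z$ solves $z(0)=0$, $z'(x)=(1-F'(z(x)))^{-1}$ on $[0,x_1)$. Rewriting $F'(z_1)=1$ through (\ref{eq:Leibniz}) gives the equation stated for $z_1$. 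To pass from existence to uniqueness of $z_1$: from $F'(z_1)=1>0$ and (\ref{eq:Leibniz}) one gets $\int_0^{z_1}\sigma\,dm>0$, so, $\sigma$ being non-negative, continuous and non-decreasing on $[0,z_2]$, $\sigma(z_1)>0$; if $z_1<z_1'$ were two roots of $F'=1$ in $(0,z_2)$, then $F'\equiv 1$ on $[z_1,z_1']$, yet for $y$ in that interval $\sigma(y)\ge\sigma(z_1)>0$ and $\int_0^y\sigma(m)e^{m-y}dm\le\sigma(y)(1-e^{-y})<\sigma(y)$, so $F''(y)>0$ by (\ref{eq:Leibniz2}), contradicting $F'\equiv 1$ there. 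Hence $z_1$ is the unique root in $(0,z_2)$. Uniqueness of $x_1$ is then immediate: $z$ is strictly increasing (Theorem~\ref{re:minimal-sol-app}), so $z(x_1)=z_1$ determines $x_1$; and $x_1=E(z_1)=z_1-F(z_1)$, which lies in $(0,z_1)\subseteq[0,z_1]$ since $F(z_1)>0$ while $E$ is strictly increasing from $E(0)=0$ on $[0,z_1]$.

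For the remaining second-order statements, I would differentiate the identity $z'(x)=(1-F'(z(x)))^{-1}$ on $[0,x_1)$; since $F\in C^2$ and $z$ is $C^1$ there, $z$ is twice differentiable with $z''(x)=(1-F'(z(x)))^{-2}F''(z(x))z'(x)$. Evaluating at $x=0$, where $z(0)=0$, $F'(0)=0$, $z'(0)=1$ and $F''(0)=\cR_\di\sigma(0)$ by (\ref{eq:Leibniz2}), gives $z''(0)=\cR_\di\sigma(0)$, consistent with Theorem~\ref{re:behav-zero-dis}.

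I expect the only non-routine point to be the uniqueness of $z_1$, whose crux is the elementary observation that $F''(y)>0$ at every point where $\sigma(y)>0$, in turn a consequence of the strict inequality $\int_0^y\sigma(m)e^{m-y}dm<\sigma(y)$ for such $y$. Everything else is either a direct citation of Theorem~\ref{re:sol-jumps-convex-app1} (together with Lemma~\ref{re:convex-cave-F} and the Leibniz formulas) or a routine chain-rule computation.
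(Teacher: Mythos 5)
Your proposal is correct and follows essentially the same route as the paper: verify the hypotheses of Theorem~\ref{re:sol-jumps-convex-app1} via Lemma~\ref{re:convex-cave-F} and the Leibniz formulas, obtain $z_1$ and $x_1$ from that theorem, and settle uniqueness of $z_1$ by observing that $\sigma(z_1)>0$ forces $F'$ to be strictly increasing near $z_1$. Your explicit chain-rule computation of $z''(0)=\cR_\di\sigma(0)$ is a welcome filling-in of a detail the paper leaves to Theorem~\ref{re:behav-zero-dis}, but it does not change the argument's structure.
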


\begin{proof}
We apply Theorem \ref{re:sol-jumps-convex-app1}.

By Lemma \ref{re:convex-cave-F}, $F$ is convex on
$[0,z_2]$. By (\ref{eq:Leibniz-subst}),
\begin{equation}
\label{eq:Leibniz-subst-rep}
F'(y) =
\cR_\di \int_0^y \sigma(y-m) e^{-m} dm >0, \qquad y \in \R_+,
\end{equation}
and so $0= F'(0)<1 < F'(z_2)$. Existence of $z_1 \in (0,z_2)$
with $F'(z_1) =1$ follows from the intermediate value theorem
and the continuity of $F'$ (or from the intermediate value
theorem for derivatives). Since $\sigma$ is increasing on $[0,z_1]$, $\sigma(z_1) >0$ and the continuous $\sigma$ is strictly positive
in a neighborhood of $z_1$. By (\ref{eq:Leibniz-subst-rep}),
$F'$ is strictly increasing in a neighborhood of $z_1$
and $z_1$ is uniquely determined by $F'(z_1) =1$.
Since $z$ is strictly increasing and continuous from the left,
there exists a unique $x_1 \in (0,z_1)$ such that $F(x_1) =z_1$,
$F(x) < z_1$ for $x \in (0,x_1)$ and $F(x) > z_1$ for $x > x_1$.

The remaining statements follow from Theorem
\ref{re:sol-jumps-convex-app1}.
\end{proof}

\subsection{A new threshold parameter}
\label{subsec:new-thres}

By Theorem \ref{re:derivative} and \ref{re:distr-jump-exist},
\begin{equation}
\label{eq:threshold-jumps-exist-dist}
\cT = \cR_\di \max_{y \ge 0} \int_0^{y} \sigma(m) e^{m-y} dm
< \cR_\di
\end{equation}
is a threshold parameter between existence and
nonexistence of jumps of the final size curve:
No jumps if $\cT <1$ and at least one jump if
$\cT > 1$. Cf. Remark \ref{re:threshold-jumps-exist}.
$\cT$ can be considerably smaller
than $\cR_\di$. In the next section, we
present a Gamma-distributed resistance for which
$\cT =1$ and  $\cR_\di =4.46$.
Recall that $\cR_\di$ is the basic reproduction
number without resistance. It seems difficult
to interpret $\cT$ as a basic reproduction
number with resistance.



\section{A very special example with Gamma
distributed resistance}
\label{sec:example-Gamma}


To illustrate the results of the previous section,
we assume that the resistance distribution
of the initial susceptibles is a  special Gamma distribution,
\begin{equation}
\label{eq:Gamma}
\sigma(m) = \beta m^{\alpha} e^{-m}, \qquad m \ge 0,
\end{equation}
with some $\alpha > -1$ and
$\beta$ be chosen such that $\int_0^\infty \sigma(m) =1$,
\begin{equation}
\label{eq:Gamma-scale}
\beta=\frac{1}{\Gamma(\alpha+1)}
\end{equation}
 with $\Gamma$
being the Gamma function.
Notice that the mean resistance is $\alpha +1$.
We differentiate,
\[
\sigma'(m) = \beta (\alpha m^{\alpha -1} - m^\alpha)e^{-m}, \qquad m \ge 0.
\]
So, $\sigma$ is strictly increasing on $[0, \alpha]$ and
strictly decreasing on $[\alpha, \infty)$.
By (\ref{eq:Leibniz}),
\begin{equation}
\label{eq:F'-Gamma}
\begin{split}
F^\prime(y) = &\beta \cR_\di \int_0^y \sigma(m) f'(y-m) dm=
\beta \cR_\di \int_0^y m^\alpha  e^{-y} dm
\\
= &
\beta \cR_\di (\alpha +1)^{-1} y^{\alpha+1}  e^{-y},
\qquad y \in \R_+.
\end{split}
\end{equation}
Here $\cR_\di$ is the basic reproduction number without resistance given by (\ref{eq:Xi}) and $\beta$ is given by
(\ref{eq:Gamma-scale}).
Notice that $F'$ is proportional to a Gamma distribution.
We differentiate once more,
\[
F^{\prime \prime}(y) = \beta \cR_\di \big (y^\alpha - (\alpha +1)^{-1} y^{\alpha+1} \big )e^{-y},\qquad y \ge 0.
\]
So, $F'$ takes its maximum at $\alpha +1$
and $F$ is strictly convex on $[0, \alpha +1]$ and strictly concave on
$[\alpha +1, \infty)$.
 The threshold parameter
for existence of jumps, (\ref{eq:threshold-jumps-exist}), is
\begin{equation}
\label{eq:threshold-par-Gamma}
\cT = \max F'(\R_+)
    = F'(\alpha+1) = \cR_\di \beta  (\alpha+1)^{\alpha}  e^{-(\alpha+1)}.
\end{equation}
For this example, we  have obtained sharper results
concerning the domains of convexity and concavity
of $F$ than they are provided by Section \ref{sec:dist-final-size}.

Consequently, we will get sharper results
for the final size curve from Section
\ref{sec:abst-fin-size} than from
Section \ref{sec:dist-final-size}.

Let $E: \R_+ \to \R$ be defined as in
(\ref{eq:functionE})  by
\begin{equation}
\label{eq:E-illus}
E(y) =y - F(y), \qquad y \in \R_+.
\end{equation}
Notice that
$E(y) < y$ for $y >0$. Since $F(0)=0$ and $F$
is bounded, $E(\R_+) \supseteq \R_+$.
 $E$ is infinitely often differentiable,
and its graph has an inflection point at $\alpha +1$.
More precisely, $E$ is strictly concave on $[0, \alpha+1)$ and strictly convex on $(\alpha+1, \infty)$.

\subsection{Subthreshold case}
\label{subsec:subth}
Assume that
\[
\cT = \max F' = F'(\alpha+1) = \cR_\di \beta  (\alpha+1)^{\alpha}  e^{-(\alpha+1)}
<1.
\]
 By Theorem \ref{re:sol-nojumps-app}, the final size curve $z$, $z(x) = x + F(z(x))$, $x \in \R_+$, is twice continuously differentiable. By (\ref{eq:map-asympt}),
 $z(x) \le x + \cR_\di$, $x \in \R_+$.

Recall that $x$ is the final size of the cumulative force of infection that is due to the initial infectives while $z(x)$ is the final size of the
cumulative force of  infection exerted by all infectives.

Most importantly, the function $E$ is strictly
increasing and $z$ is the inverse function of $E$.
So, $z$ is strictly convex on $\big[0, E(\alpha+1)\big )$
and strictly concave on $\big(E(\alpha+1), \infty \big)$.

Finally, $z$ is a solution of the initial value
problem $z(0) =0$,
\[
z'(x) = (1 - F'(z(x))^{-1}, \quad x \in \R_+.
\]


\subsection{Superthreshold case}
\label{subsec:dist-superth}

Assume that
\begin{equation}
\label{eq:Gamma-superthr}
\cT = \cR_\di \beta  (\alpha+1)^{\alpha}  e^{-(\alpha+1)}
>1.
\end{equation}
Let $z: \R_+ \to \R_+$ be the final size curve,
$z (x) = F(z(x)) +x$, $x \in \R_+$.

By Theorem \ref{re:sol-jumps-convex-app}, there exists a unique $z_1 \in (0, \alpha+1)$
such that
\[
\beta \cR_\di (\alpha +1)^{-1} z_1^{\alpha+1}  e^{-z_1} =1
\]
and a unique $x_1 \in (0,z_1)$ such that
$z$ is differentiable on $[0,x_1)$, convex and continuous
on $[0,x_1]$, $z(x_1) = z_1$.
Further,
$z'(x) \to \infty$ as $x \to x_1-$
and  $z(x) > \alpha+1$ for $x > x_1$.

By Theorem \ref{re:convex-inherit}, $z$ is concave on $[\alpha +1, \infty)$.

Finally, on $[0,x_1)$, $z$ is a solution of the initial value
problem $z(0) =0$,
\[
z'(x) = (1 - F'(z(x))^{-1}, \quad x \in [0,x_1).
\]
In the superthreshold case, $E'(\alpha+1) < 0$.
Thus, $E$ takes a unique local maximum $x_1= E(z_1)$
at some point $z_1 \in (0,\alpha +1)$, $x_1 < z_1$, and a
unique local minimum at some point $z_2 > \alpha+1$.
In summary,
\begin{equation}
0 < x_1 < z_1 < \alpha +1 < z_2.
\end{equation}

 $E$ is strictly increasing on
 $(0,z_1)$, strictly decreasing on $(z_1,z_2)$,
 and strictly increasing on $(z_2, \infty)$.
There exists a unique $z_3 \in (z_2,\infty)$   such $x_1 = E(z_3)$. The restriction of $E$ to $[0, z_1]\cup (z_3,\infty)$ is an injective mapping
from this union onto $\R_+$.
By Remark \ref{re:right-inverse},
\begin{equation}
\label{eq:graph-finalsize-Gamma}
z(x) = \left \{
\begin{array}{rl}
 E^{-1}(x) \in [0,z_1], &\quad x \in [0,x_1],
 \\
 E^{-1}(x)> z_3, & \quad x > x_1.
\end{array}
\right.
\end{equation}
In particular, $z$ jumps at $x_1$ from $z_1$ to $z_3$ and is strictly convex on $[0,x_1]$
and strictly concave on $[x_1,\infty)$.
The last holds because $E$ is strictly concave on $(z_3, \infty)$.

Notice that the jump occurs at $x_1$ which is
strictly smaller than the mean resistance $\alpha +1$.


\subsection{Numerical illustrations}
\label{subsec:num-illus}


For the purpose of  numerical illustration,
by integrating $F'$  in (\ref{eq:F'-Gamma}) and using $F(0)=0$, with $\alpha =2$,  we obtain
\begin{equation}
\label{eq:F-example-Gamma}
F(y) = \beta \cR_\di \Big ( 2 - e^{-y}\big  ((1/3) y^3 + y^2 + 2y + 2 \big )\Big), \qquad y \in \R_+.
\end{equation}
Here, $\cR_\di$ is the basic reproduction number without resistance as given by
(\ref{eq:Xi}) and $\beta$ is the scaling factor (\ref{eq:Gamma-scale}).
For $\alpha =2$, the superthreshold case (\ref{eq:Gamma-superthr})
holds if and only if  $\cR_\di > 4.46$.

To plot the graph of the final size curve $z$,
we use the graph of $E$ with $F$ given by
(\ref{eq:F-example-Gamma}) and the inversion formula (\ref{eq:graph-finalsize-Gamma}).

\begin{figure}[ht]
\centering
\includegraphics[width=0.45\textwidth]{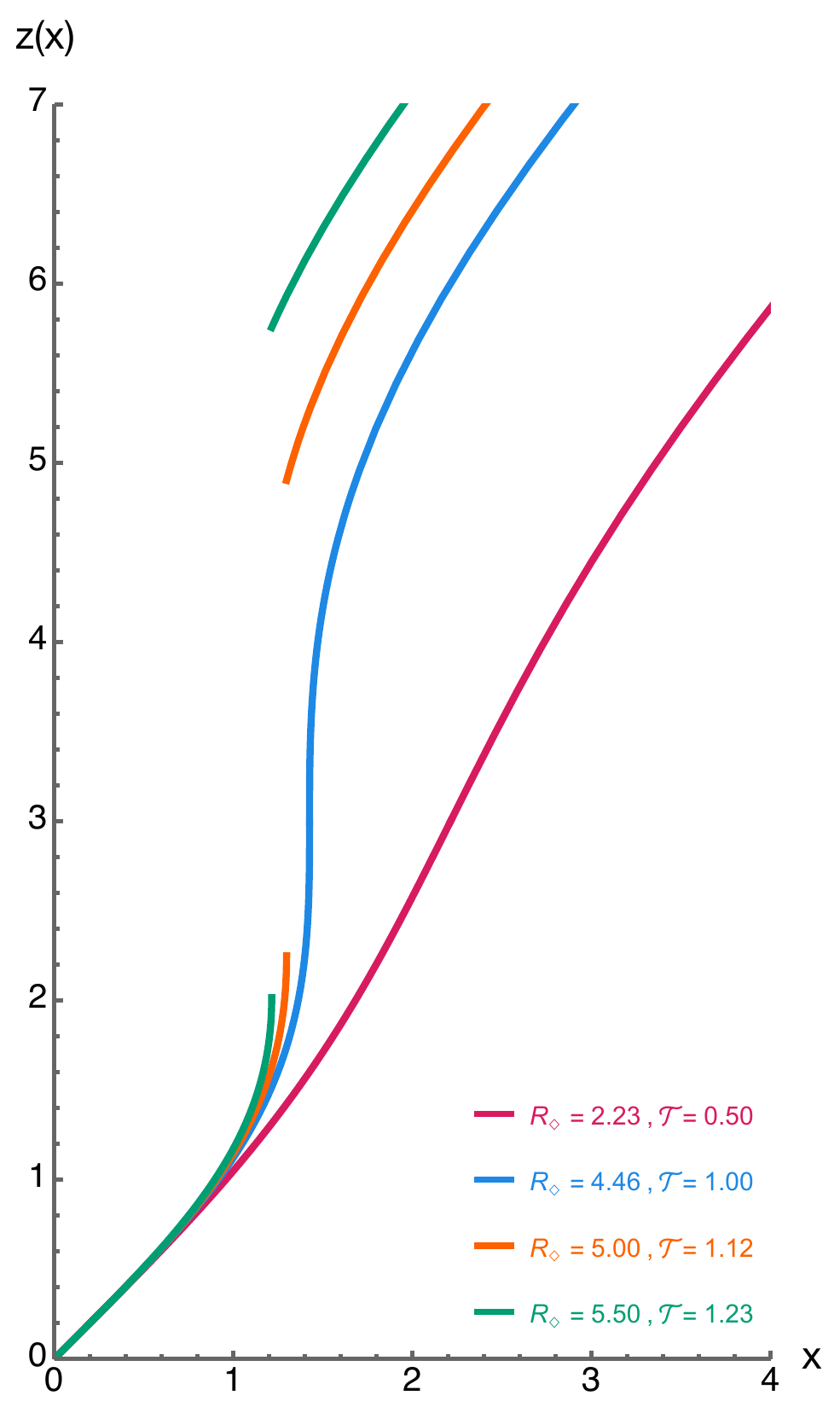}
    \caption{Plot of  the final size curve for Gamma-distributed
resistance with $\alpha =2$ and different
values of $\cR_\di$.}
    \label{fig:Gamma-distrib}
\end{figure}
 In Figure \ref{fig:Gamma-distrib}, we present  the final size
curve $z$ for four different values of $\cR_\di$.
Recall that the final size curve gives  the cumulative force of infection exerted by all infectives, $z(x)$, as a function of the cumulative force of infection exerted by the initial infectives, $x$.

Further, recall that $z(x) = \ln (S_\infty/ S_0)$
where $S_0$ is the initial size and $S_\infty$
the final size of the susceptible part of the
host population, (\ref{eq:sus-cumforce-fin-dist}).

\begin{figure}[ht]
    \centering
    \includegraphics[width=0.7\textwidth]
    {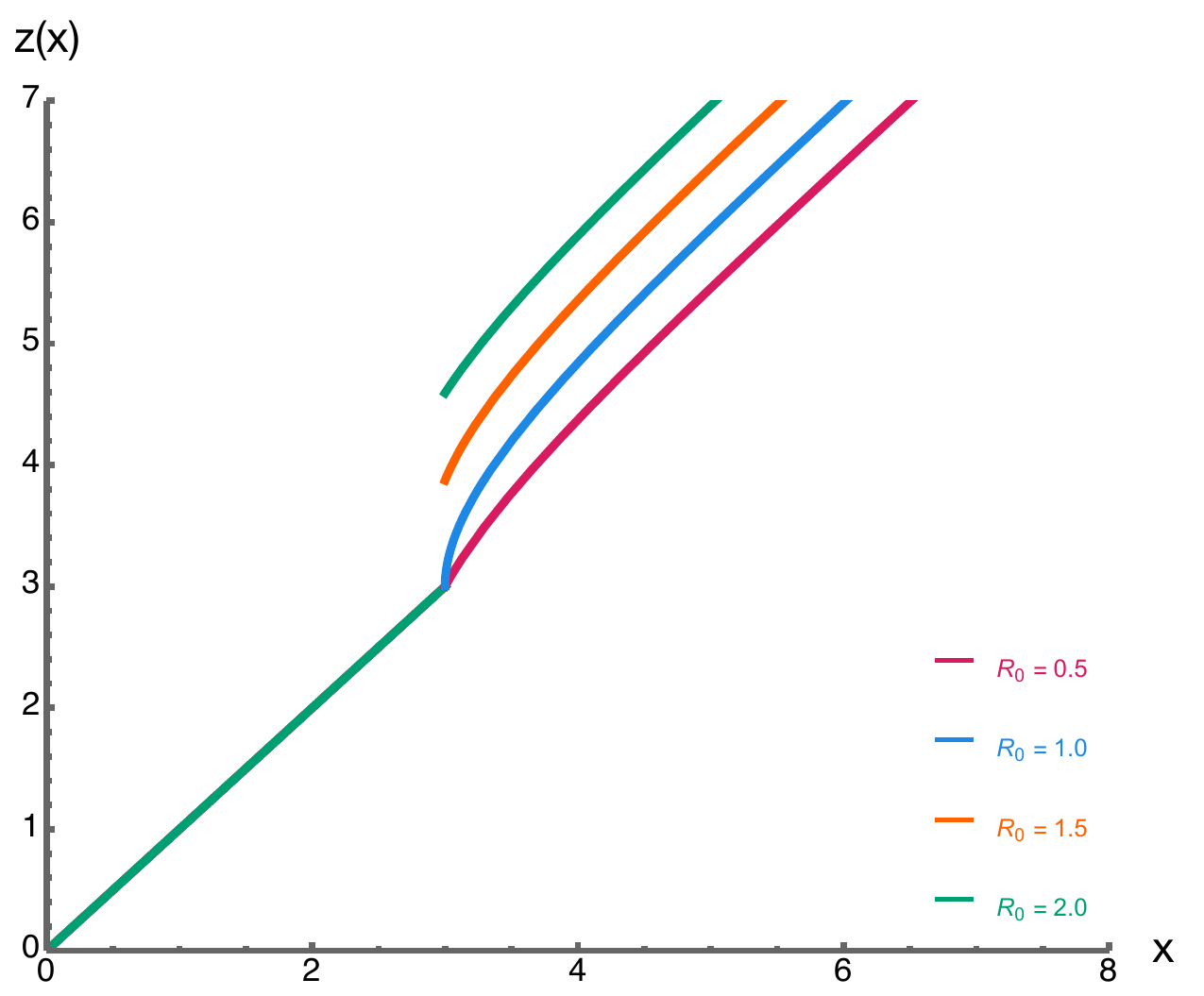}
    \caption{Plot of  the final size curve for fixed resistance $m=3$ and different values of $\mathcal{R}_0$.}
    \label{fig:Fixed-distrib}
\end{figure}

Figure \ref{fig:Gamma-distrib} should be
compared to Figure \ref{fig:Fixed-distrib}
where the resistance is fixed and has the value
of the mean resistance for Figure \ref{fig:Gamma-distrib} which is 3.
The graph of $z$ is obtained  in a similar way
as for Figure \ref{fig:Gamma-distrib}, but with
$F$ given by (\ref{eq:F-fixed}).
See Section \ref{exp:fixed}.

Notice that $\cR_0$
in Figure \ref{fig:Fixed-distrib}
and $\cR_\di$ on Figure \ref{fig:Gamma-distrib} are given
by the same formula,  (\ref{eq:bas-rep-no}) and  (\ref{eq:Xi}),
and have the interpretation as basic reproduction
number (without resistance). Different symbols
have been chosen because $\cR_0$ is the
relevant threshold parameter for fixed resistance
while $\cT$ rather than $\cR_\di$ is  the relevant
threshold parameter for distributed resistance.
 $\cT$  is given  by (\ref{eq:threshold-jumps-exist-dist}) in general
and by (\ref{eq:threshold-par-Gamma}) in this special case.

These figures nicely illustrate the importance and the effects of host  heterogeneity.
In our examples, we find that the threshold for the existence of  jumps of the
final size curve is (often substantially) higher for distributed resistance than for fixed resistance.

Further, for fixed resistance, the jump of the final
size curve always occurs when the
 final size of the initial  cumulative force of infection passes the (fixed) resistance level;
 in our example for  Gamma-distributed
 resistance, the jumps occur at smaller final sizes
 of the initial cumulative force of infection
 which are the smaller the larger the
 values of $\cT$.



\section{Discussion}
\label{sec:discuss}

In \cite{HoWa2, Wal},  a more general
 condition than the one involving
(\ref{eq:thres-cum-force})
is considered, where
a host that becomes infectious at time $t$
was  infected at a time $r$ with
\[
\int_r^t (\delta + I(s)) ds = m.
\]
If $\delta > 0$, every  infected individual
will become infectious  after a delay
$m /\delta$ at the latest, and our explanation
why there are different disease outcomes in
similar cities does
no longer work. A possible proportionality constant in front of $I$ \cite{Coo, HoWa1, HoWa2, Wal}
has been absorbed into the
resistance  $m$.

These threshold delay conditions are early examples of
state-dependent delays
and differential equations containing them
have become topics of
a rich theory (\cite{BGR, HaKWW, KrWa, MaNu, SmKu}
and the references therein). The specialty of
our threshold delay condition ($\delta =0$) allows us to
bypass the intricacies of this theory and
transform the problem into a single  integral
equation of Volterra type from which much desirable
information can be obtained.
One obtains a deterministic explanation of why there are different
epidemic outcomes in  cities of similar size, namely
that the final sizes of the primary infective force
or the disease resistance distributions of the initial
susceptibles were different.

In the version of the model with distributed
resistance, we have made the simplifying assumption
that the disease-mortality $P$ in (\ref{eq:sec-inf-age-dens})
and the
infectivity-age dependent infectious influence
of an infectious host  $\kappa$
in (\ref{eq:FoI-dist})
do not depend on its
resistance $m$. Of course, one could consider
a dependence instead. Such a generalization
would  bring greater
mathematical difficulty requiring functional
analysis instead of what mostly is advanced calculus.  See \cite{BoDI1} and \cite{BoDI2}
for examples of models that
incorporate  other distributed traits
 than distributed resistance.
Since we expect the essence of the epidemiological
message to be similar, we  use Occam's razor (principle of parsimony) for this  presentation.
In this spirit, we also stick to the equation
\begin{equation*}
S'(t) = - S(t)I(t)
\end{equation*}
rather than the more general equation
\begin{equation}
S'(t) = -\eta(S(t)) I(t)
\end{equation}
with $\eta:\R_+ \to \R_+$ continuous and
$\int_0^1 \frac{1}{\eta(s)}ds = \infty$
considered in \cite{Thi77}.

The epidemic model with host resistance to the disease
provides a handy theoretical laboratory for
studying importance and the effects of host  heterogeneity.
We consider both models with fixed resistance
and with distributed resistance.
We find that the threshold for the existence of  jumps of the
final size curve is (often substantially) higher for distributed resistance than for fixed resistance.

Further, for fixed resistance, the jump of the final
size curve always occurs when the
 final size of the initial  cumulative force of infection passes the (fixed) resistance level;
 in our example for  Gamma-distributed
 resistance, the jumps occur at smaller final sizes
 of the initial cumulative force of infection.

It is difficult to relate our threshold
parameters $\cR_0$ and $\cT$
to estimates of reproduction numbers found in the literature like in \cite{SwissScienceTaskForce}
because these estimates have been performed under
the assumption that there is no resistance
to the disease. In particular for distributed
resistance, it is not clear whether they should
be related to $\cR_\di$ or $\cT$.

 As expected, incorporating resistance towards
 the disease into  epidemic models creates an
 Allee effect for the final size of the force of
 infection. Less expected, the Allee effect is quite pronounced
 and even occurs in the form of jumps, whether the resistance is the
 same for all hosts or  is distributed over the
 host population.



\bibliography{}


\end{document}